\colorlet{DarkRed}{red!50!black}
\colorlet{DarkGreen}{green!50!black}
\colorlet{DarkBlue}{blue!50!black}
\newcommand{\HG}[1]{\textcolor{red}{[HG:#1]}}
\newcommand{\PName}{\textbf{DRSO}\xspace}
\let\epsilon\varepsilon
\let\eps\varepsilon
\definecolor{orange}{RGB}{235,90,0}
\definecolor{darkorange}{RGB}{175,30,0}
\definecolor{turkis}{RGB}{131,182,182}
\definecolor{darkturkis}{RGB}{31,82,82}
\definecolor{green}{RGB}{102,180,0}
\definecolor{darkgreen}{RGB}{51,90,0}
\definecolor{myblue}{RGB}{0,0,213}
\definecolor{mydarkblue}{RGB}{0,0,100}
\definecolor{mybrightblue}{HTML}{74B0E4}
\definecolor{mybrighterblue}{HTML}{B3EAFA}
\definecolor{lila}{RGB}{102,0,102}
\definecolor{darkred}{RGB}{139,0,0}
\definecolor{darkyellow}{RGB}{188,135,2}
\definecolor{brightgray}{RGB}{200,200,200}
\definecolor{darkgray}{RGB}{50,50,50}
\definecolor{amaranth}{rgb}{0.9, 0.17, 0.31}
\definecolor{alizarin}{rgb}{0.82, 0.1, 0.26}
\definecolor{amber}{rgb}{1.0, 0.75, 0.0}
\definecolor{green(ryb)}{rgb}{0.4, 0.69, 0.2}
\definecolor{hanblue}{rgb}{0.27, 0.42, 0.81}
\definecolor{grannysmithapple}{rgb}{0.66, 0.89, 0.63}
\newtheorem{theorem}{Theorem}[section]
\newtheorem{lemma}{Lemma}[section]
\newtheorem{corollary}{Corollary}[section]
\newtheorem{observation}{Observation}[section]
\newtheorem{claim}{Claim}[section]
\newtheorem{assumption}{Assumption}[section]
\newtheorem{lemma-rstbl}{Lemma}[section]
\newtheorem{obs-rstbl}{Observation}[section]
\newtheorem{theorem-rstbl}{Theorem}[section]
\title{Budgeted Out-tree Maximization with Submodular Prizes}
\author[1]{Gianlorenzo D'Angelo}
\author[1]{Esmaeil Delfaraz}
\author[2]{Hugo Gilbert}
\affil[1]{\normalsize Gran Sasso Science Institute, L'Aquila, Italy}
\affil[2]{Université Paris-Dauphine, Université PSL, CNRS, LAMSADE, 75016 Paris, France}
\date{}
\begin{document}

\maketitle

\begin{abstract}
%
We consider a variant of the prize collecting Steiner tree problem in which we are given a \emph{directed graph} $D=(V,A)$, a monotone submodular prize function $p:2^V \rightarrow \mathbb{R}^+ \cup \{0\}$, a cost function $c:V \rightarrow \mathbb{Z}^{+}$, a root vertex $r \in V$, and a budget $B$. The aim is to find an out-subtree $T$ of $D$ rooted at $r$ that costs at most $B$ and maximizes the prize function. We call this problem \emph{Directed Rooted Submodular Tree} (\PName).

For the case of undirected graphs and additive prize functions, Moss and Rabani [SIAM\ J.\ Comput.\ 2007] gave an algorithm that guarantees an $O(\log |V|)$-approximation factor if a violation by a factor 2 of the budget constraint is allowed. Bateni et al.\ [SIAM J.\ Comput.\ 2018] improved the budget violation factor to $1+\epsilon$ at the cost of an additional approximation factor of $O(1/\epsilon^2)$, for any $\epsilon\in (0,1]$.
For directed graphs, Ghuge and Nagarajan [SODA\ 2020] gave an optimal quasi-polynomial-time $O\left(\frac{\log n'}{\log \log n'}\right)$-approximation algorithm, where $n'$ is the number of vertices in an optimal solution, for the case in which the costs are associated to the edges.

In this paper, we give a polynomial-time algorithm for \PName that guarantees an approximation factor of $O(\sqrt{B}/\epsilon^3)$ at the cost of a budget violation of a factor $1+\epsilon$, for any $\epsilon \in (0,1]$. The same result holds for the edge-cost case, to the best of our knowledge this is the first polynomial-time approximation algorithm for this case.
We further show that the unrooted version of \PName can be approximated to a factor of $O(\sqrt{B})$ without budget violation, which is an improvement over the factor $O(\Delta \sqrt{B})$ given in~[Kuo et al.\ IEEE/ACM\ Trans.\ Netw.\ 2015] for the undirected and unrooted case, where $\Delta$ is the maximum degree of the graph.
%
Finally, we provide some new/improved approximation bounds for several related problems, including the additive-prize version of \PName, the maximum budgeted connected set cover problem, and the budgeted sensor cover problem.
\end{abstract}

\section{Introduction}
\textit{Prize collecting Steiner tree problems} (\textbf{PCSTP}) have been extensively studied due to their applications in designing computer and telecommunication networks, VLSI design, computational geometry, wireless mesh networks, and cancer genome studies~\cite{cheng2004steiner, gao2018algorithm,hochbaum2020approximation, kuo2014maximizing,vandin2011algorithms}.
%
Very interesting polynomial-time constant/poly-logarithmic approximation algorithms have been proposed for many variants of \textbf{PCSTP} when the graph is undirected~\cite{archer2011improved, bateni2018improved, garg2005saving, goemans1995general, johnson2000prize, konemann2013lmp, paul2020budgeted}. However, these problems are usually much harder on directed graphs. For instance, there is a simple polynomial-time $2$-approximation algorithm for the \textit{undirected Steiner tree} problem, but no quasi-polynomial-time algorithm for the \textit{directed Steiner tree} problem achieving an approximation ratio of $o\left(\frac{\log^2 k}{\log \log k}\right)$ exists, unless $NP \subseteq \bigcap_{0 <\epsilon< 1} \text{ZPTIME}(2^{n^\epsilon})$ or the Projection Game Conjecture is false~\cite{grandoni2019log2}, where $k$ is the number of terminal nodes.

Some of the most relevant variants of \textbf{PCSTP} are represented by prize collecting problems with budget constraints. In such problems, we are usually given a graph with prizes and costs on the nodes and the goal is to find a tree that maximizes the sum of the prize of its nodes, while keeping the total cost bounded by a given budget. 
Guha et al.~\cite{guha1999efficient} introduced the case in which the graph is undirected and the goal is to find a tree that  contains a distinguished vertex, called root, respects the budget constraint, and maximizes the prize, we call this problem \emph{Undirected Rooted Additive Tree} (\textbf{URAT}). They gave an algorithm that achieves an $O(\log^2 n)$-approximation factor, where $n$ is the number of nodes in the graph, but the computed solution requires a factor-2 violation of the budget constraint. Moss and Rabani~\cite{moss2007approximation} and Bateni et al.~\cite{bateni2018improved} further investigated \textbf{URAT} and improved the results from the approximability point of view. The former paper improved the approximation factor to $O(\log n)$, with the same budget violation, and the latter one improved the budget violation factor to $1+\epsilon$ to obtain an approximation factor of $O\left(\frac{1}{\epsilon^2}\log n\right)$, for any $\epsilon\in (0,1]$.
Kortsarz and Nutov~\cite{kortsarz2009approximating} showed that the unrooted version of \textbf{URAT}, so does \textbf{URAT}, admits no $o(\log \log n)$-approximation algorithm, unless $NP \subseteq DTIME(n ^{\text{polylog}(n)})$, even if the algorithm is allowed to violate the budget constraint by a factor equal to a universal constant.  

In this paper, we consider a generalization  of \textbf{URAT} on directed graphs. We are given a directed graph, where each node is associated with a cost, and the prize is defined by a monotone submodular function on the subsets of nodes, and the goal is to find an \emph{out-tree} (a.k.a. \emph{out-arborescence}) rooted at a specific vertex $r$ with the maximum prize such that the total cost of all vertices in the out-tree is no more than a given budget. We term this problem \emph{Directed Rooted Submodular Tree} (\PName). A closely related problem, called \emph{Submodular Tree Orienteering} (\textbf{STO}), has been recently introduced by Ghuge and Nagarajan~\cite{ghuge2020quasi}. \textbf{STO} is the same problem as \PName except that edges and not nodes have costs. They provided a tight quasi-polynomial-time $O(\frac{\log n'}{\log \log n'})$-approximation algorithm that requires $(n \log B)^{O(\log^{1+\epsilon} n')}$ time, where $n'$ is the number of vertices in an optimal solution and $B$ is the budget constraint.

\subparagraph*{Contribution.} By extending some ideas of Kuo et al.~\cite{kuo2014maximizing} and Bateni et al.~\cite{bateni2018improved}, we design a polynomial-time $O(\sqrt{B}/\epsilon^{3})$-approximation algorithm for \PName, violating the budget constraint $B$ by a factor of at most $1+\epsilon$, for any $\epsilon \in (0, 1]$ (Section~\ref{sec:apxalgo}). Our technique can be used to obtain the same result for \textbf{STO} (Section~\ref{sec:STO}). To our knowledge, this is the first polynomial-time approximation algorithm for \textbf{STO}. We also show that, for any $1+\epsilon$ budget violation, with $\epsilon \in (0, 1]$, our approach provides an $O(\sqrt{B}/\epsilon^{2})$-approximation algorithm for the special cases of \PName and \textbf{STO} where the prize function is additive (Section~\ref{sec:variants}). We also consider the unrooted version of \PName and give an $O(\sqrt{B})$-approximation algorithm without budget violation (Section~\ref{sec:unrooted}), which is an improvement over the factor $O(\Delta \sqrt{B})$~\cite{kuo2014maximizing} for the undirected and unrooted version of \PName, where $\Delta$ is the maximum node-degree.


Finally, we study some variants of \PName on undirected graphs. We show that, for any $1+\epsilon$ budget violation, \textbf{URAT} admits an $O(\Delta/\epsilon^{2})$-approximation algorithm, while its quota version admits a $2\Delta$-approximation algorithm. Next, we present some approximation results for some variants of the connected maximum coverage problem, which improve over the bounds given by Ran et al.~\cite{ran2016approximation}. Finally, we provide two approximation algorithms for the Budgeted Sensor Cover problem, which result in an improvement
to the literature~\cite{kuo2014maximizing, ran2016approximation, xu2021throughput, yu2019connectivity}.
We discuss these results in Section~\ref{sec:variants}.
%
\subparagraph*{Related Work.}
Many variants of Prize collecting Steiner Tree problems have been investigated. Here we list those that are more closely related to our study. Further related work is reported in the Appendix.



Kuo et al.~\cite{kuo2014maximizing} studied the unrooted version of \PName on undirected graphs called \emph{Maximum Connected Submodular function with Budget constraint} (\textbf{MCSB}). They provided an $O(\Delta \sqrt{B})$-approximation algorithm for \textbf{MCSB}, where $\Delta$ is the maximum degree of the graph. Vandin et al.~\cite{vandin2011algorithms} provided a $(\frac{2e-1}{e-1}r)$-approximation algorithm for a special case of the same problem, where $r$ is the radius of an optimal solution. This problem coincides with the connected maximum coverage problem in which each set has cost one. Ran et al.~\cite{ran2016approximation} presented an $O(\Delta\log{n})$-approximation algorithm for a special case of the connected maximum coverage problem. Hochbaum and Rao~\cite{hochbaum2020approximation} investigated \textbf{MCSB} in which each vertex costs $1$ and provided an approximation algorithm with factor $\min\{1/((1-1/e)(1/R-1/B)), B\}$, where $R$ is the radius of the graph. Chen et al.~\cite{chen2020optimal} investigated the edge-cost version of \textbf{MCSB}. One of the applications of \textbf{MCSB} is a problem in wireless sensor networks called \emph{Budgeted Sensor Cover Problem} (\textbf{BSCP}), where the goal is to find a set of $B$ connected sensors to maximize the number of covered users, for a given $B$. Kuo et al.~\cite{kuo2014maximizing} provided a $5(\sqrt{B}+1)/(1-1/e)$-approximation algorithm for \textbf{BSCP}, which was improved by Xu et al.~\cite{xu2021throughput} to $\lfloor\sqrt{B}\rfloor/(1-1/e)$. Huang et al.~\cite{huang2020approximation} proposed a $8(\lceil 2\sqrt{2} C\rceil+1)^2/(1-1/e)$-approximation algorithm for \textbf{BSCP}, where $C=O(1)$. 
%

Johnson et al.~\cite{johnson2000prize} introduced an edge-cost variant of \PName on undirected graphs, where the prize function is additive, called \textbf{E-URAT}. They showed that there exists a $(5+\epsilon)$-approximation algorithm for the unrooted version of \textbf{E-URAT} using Garg's $3$-approximation algorithm~\cite{garg3} for the $k$-MST problem, and observed that a $2$-approximation for $k$-MST would lead to a 3-approximation for \textbf{E-URAT}. 
This observation along with the Garg's $2$-approximation algorithm~\cite{garg2005saving} for $k$-MST yield a $3$-approximation algorithm for the unrooted version of \textbf{E-URAT}.  
Recently, Paul et al.~\cite{paul2020budgeted} provided a polynomial-time $2$-approximation algorithm for \textbf{E-URAT}.

\section{Notation and problem statement}

For an integer $k$, let $[k]:=\{1,\ldots,k\}$. A directed \textit{path} is a directed graph made of a sequence of distinct vertices $(v_1, \dots, v_k)$ and a sequence of directed edges $(v_i, v_{i+1})$, $i\in[k-1]$.
An \emph{out-tree} (a.k.a. out-arborescence) is a directed graph in which there is exactly one directed path from a specific vertex $r$, called \emph{root}, to each other vertex. If a subgraph $T$ of a directed graph $D$ is an out-tree, then we say that $T$ is an out-tree of $D$. 

Let $D=(V, A)$ be a directed graph with $n$ nodes, $c:V \rightarrow \mathbb{Z}^{+}$ be a cost function on nodes,  $p:2^V \rightarrow \mathbb{R}^+\cup \{0\}$ be a monotone submodular prize function on the subsets of nodes, $r\in V$ be a root vertex, and $B$ be an integer budget. For any subgraph $D'$ of $D$, we denote by $V(D')$ and $A(D')$ the set of nodes and edges in $D'$, respectively. Given $S\subseteq V$, we denote the cost of $S$ by $c(S) = \sum_{v\in S}c(v)$ and we use shortcuts $c(D')=c(V(D'))$ and $p(D')=p(V(D'))$ for a subgraph $D'$ of $D$.
In the Directed Rooted Submodular Tree problem (\PName), the goal is to find an out-tree $T$ of $D$ rooted at $r$ such that $c(T)\le B$ and $p(T)$ is maximum. 
Throughout the paper, we denote an optimal solution to \PName by $T^*$.
 
Given two nodes $u$ and $v$ in $V$, a path in $D$ from $u$ to $v$ with the minimum cost is called a \textit{shortest path} and its cost, denoted by $dist(u,v)$, is called the \textit{distance} from $u$ to $v$ in $D$.

An algorithm is a bicriteria $(\beta, \alpha)$-approximation algorithm for \PName if, for any instance $I$ of the problem, it returns a solution $Sol_{I}$ such that $p(Sol_{I}) \ge \frac{OPT_I}{\alpha}$ and $c(Sol_{I}) \le \beta B$, where $OPT_I$ is the optimum for $I$.
\section{Results and Techniques}

Our main result is given in the next theorem.
\begin{restatable}{theorem}{MainTheorem}\label{thMain}
\PName admits a polynomial-time bicriteria $\left(1+\epsilon, O\left(\frac{\sqrt{B}}{\epsilon^3}\right)\right)$-approximation algorithm, for any $\epsilon \in (0, 1]$.
\end{restatable}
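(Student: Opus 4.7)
The plan is to combine the path-decomposition idea behind Kuo et al.~\cite{kuo2014maximizing} (which gives an $O(\Delta \sqrt{B})$ bound in the undirected unrooted additive case) with a guessing/discretization step in the spirit of Bateni et al.~\cite{bateni2018improved} (which trades an $\epsilon$ budget slack for better approximation). Since costs are positive integers, any feasible out-tree has at most $B$ vertices, so a $\sqrt{B}$ factor is exactly the threshold that separates ``one heavy vertex / path'' from ``many light vertices'' in the optimum $T^{\ast}$.

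Algorithmically, for each candidate vertex $v^{\star}\in V$ I would compute the shortest $r$-to-$v^{\star}$ path $P_{v^{\star}}$ in $D$ and initialize $T_{0}=P_{v^{\star}}$. Then iteratively build $T_{i+1}$ from $T_{i}$ by attaching the shortest path from $V(T_{i})$ to a vertex $u\notin V(T_{i})$ that maximizes the density
\[
\frac{p(V(T_{i})\cup V(Q_{u}))-p(V(T_{i}))}{c\bigl(V(Q_{u})\setminus V(T_{i})\bigr)},
\]
where $Q_{u}$ is the shortest path from $V(T_{i})$ to $u$ in $D$. I stop as soon as $c(T_{i+1})>(1+\epsilon)B$, and I return the best tree over all choices of $v^{\star}$ (polynomially many). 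This is a natural ``rooted, directed, submodular'' analogue of Kuo et al.'s extension procedure; the shortest-path subroutine is exactly what handles the directedness and node costs simultaneously.

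For the analysis I would split into two regimes. \emph{Case A:} some vertex $v^{\star}\in V(T^{\ast})$ satisfies $p(\{v^{\star}\})\ge \Omega(\mathrm{OPT}/\sqrt{B})$. The shortest path $P_{v^{\star}}$ is no costlier than the $r$-to-$v^{\star}$ subpath of $T^{\ast}$, so it already fits inside budget $B$, and the corresponding iteration $T_{0}$ alone certifies an $O(\sqrt{B})$ approximation. \emph{Case B:} every vertex of $T^{\ast}$ has singleton prize below $\mathrm{OPT}/\sqrt{B}$. Here I decompose $T^{\ast}$ into at most $\sqrt{B}$ root-to-leaf (or branching-to-leaf) subpaths; each such path has cost at most $B$, and by submodularity one of them, when appended to the current $T_{i}$, provides residual marginal prize at least $\Omega\bigl(p(V(T^{\ast})\mid V(T_{i}))/\sqrt{B}\bigr)$. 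A standard Wolsey-style budgeted greedy analysis then implies that $p(T_{i})$ grows to $\Omega(\mathrm{OPT}/\sqrt{B})$ once the total cost of the appended paths reaches $\Theta(B)$. The $(1+\epsilon)$-budget slack is used to absorb the last extension path (which may overshoot $B$): by enumerating a target prize level discretized by powers of $1+\epsilon$ and using the extra $\epsilon B$ to complete the last extension, one recovers the final bound without losing the heavy last element. Two $1/\epsilon$ factors come from this discretization and from charging the last path, and a third $1/\epsilon$ arises when one rescales the guessed heavy element to compensate for the possible fractional loss at the boundary, giving the overall factor $O(\sqrt{B}/\epsilon^{3})$.

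The main obstacle is Case B: unlike the undirected unrooted additive setting, where one can just take the best among $\sqrt{B}$ disjoint paths, in the directed rooted submodular setting all extensions from $V(T_{i})$ share the already-built infrastructure, and successive extensions may reuse vertices of $T^{\ast}$ in nontrivial ways, so the ``one in $\sqrt{B}$'' charging must be done residually against $p(\cdot\mid V(T_{i}))$. Making this charging work requires that the shortest-path density used at each greedy step lower-bounds the best density achievable by any of the $T^{\ast}$-subpaths, which is the step where the combination of monotonicity, submodularity and the shortest-path subroutine must be handled carefully. Once this is in place, summing the density gains along the greedy trajectory yields the claimed $O(\sqrt{B}/\epsilon^{3})$ bound under the $(1+\epsilon)B$ budget.
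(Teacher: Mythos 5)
Your approach is genuinely different from the paper's, and it has a gap in the central charging step of Case~B that I do not see how to close.

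The paper's algorithm does not grow a tree greedily by path extensions. Instead, for every candidate vertex $u$ it finds (via submodular greedy) a set $S_u$ of at most $\lfloor\sqrt{B}\rfloor+1$ vertices within distance $c(u)+\lfloor\sqrt{B}\rfloor$ of $u$ with approximately maximum prize, spans $S_u$ by a minimal out-tree $T_u$, picks the best $T_z$, attaches a shortest $r$-to-$z$ path, and finally trims to cost $(1+\epsilon)B$ via a new trimming lemma for directed graphs and monotone submodular prizes (Lemma~\ref{lmBateniTrimmingProcessGeneral}). The analysis hinges on Lemma~\ref{lmClaimKuoExtension}: any out-tree of cost at most $B$ can be covered by $O(\sqrt{B})$ \emph{out-subtrees}, each of cost at most $\lfloor\sqrt{B}\rfloor$ plus the cost of its own root. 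Crucially, these pieces are subtrees, not paths; the ball-finding step of the algorithm is designed precisely so that a single subtree $T_u$ can match the best such piece. The three $\epsilon$-factors come from Lemma~\ref{lmBateniTrimmingProcessGeneral}: $\epsilon^2$ from the guaranteed prize-to-cost ratio $\epsilon^2\gamma/(32h)$, and one more $\epsilon$ because the trimmed tree is only guaranteed to have cost at least $\epsilon B/2$.

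The gap in your Case~B is the claim that $T^*$ can be decomposed into at most $\sqrt{B}$ root-to-leaf (or branching-to-leaf) subpaths. Since node costs are at least $1$, $T^*$ has at most $B$ vertices, but it can have $\Theta(B)$ leaves (e.g.\ a star rooted at $r$), in which case any decomposition into paths of the type you describe has $\Theta(B)$ pieces. With $\Theta(B)$ pieces the averaging argument only yields a piece of marginal prize $\Omega(p(T^*\mid T_i)/B)$, and the resulting Wolsey-style bound is an $O(B)$-approximation, not $O(\sqrt{B})$. Even setting the count aside, the pieces of a decomposition of $T^*$ need not be attachable to the current $T_i$: a subpath of $T^*$ whose start vertex is outside $V(T_i)$ is not a valid extension, so the chosen shortest path's density is not guaranteed to dominate the average density over all pieces. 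You acknowledge this issue (``all extensions from $V(T_i)$ share the already-built infrastructure''), but you do not resolve it; resolving it is exactly the hard part, and it is what forces the paper to abandon path extensions in favor of covering $T^*$ by small-radius out-subtrees and matching them with a ball-and-span subroutine. Finally, the sketch of where the three $\epsilon$-factors come from (discretization, charging the last path, rescaling a guessed heavy element) is not substantiated and does not correspond to a concrete argument; in the paper all three come from the single trimming lemma.
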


Our approach combines and extends techniques given by Kuo et al.~\cite{kuo2014maximizing} and Bateni et al.~\cite{bateni2018improved}.
To illustrate our techniques, we now consider the case in which costs are unitary, i.e. $c(v)=1$, for each $v\in V$, and the prize function is additive, i.e. $p(S)=\sum_{v\in S}p(\{v\})$, for any $S\subseteq V$. In this case, the distance from a node $u$ to a node $v$ is equal to the minimum number of nodes in a path from $u$ to $v$ and the cost of a tree $T$ is equal to its size, $c(T)=|V(T)|$. W.lo.g. we also assume that the distance from $r$ to any node is at most $B$. We will give the proof for the general case in Section~\ref{sec:apxalgo}.

The algorithm works as follows. For any vertex $u$, we denote as $V_u$ the set of all nodes that are at a distance no more than $\lfloor \sqrt{B} \rfloor$ from $u$, $V_u:=\{v~|~dist(u,v)\leq \lfloor \sqrt{B} \rfloor\}$. We first select a subset $S_u$ of $V_u$ of at most $\lfloor \sqrt{B} \rfloor$ nodes  with the maximum prize, $S_u : = \arg\max\{ p(S)~:~ S\subseteq V_u, |S|\leq \lfloor \sqrt{B} \rfloor \}$.\footnote{This step can be done in polynomial time since function $p$ is additive.
If $p$ is monotone and submodular, this step consists in solving the submodular maximization problem. See  Section~\ref{sec:apxalgo} for more details.} 
We then compute a minimal inclusion-wise out-tree $T_u$ rooted at $u$ that spans all nodes in $S_u$. Note that $|V(T_u)|\leq B$ since the distance from $u$ to any node in $S_u$ is at most $\lfloor \sqrt{B} \rfloor$. Let $z$ be a node such that $p(T_{z})$ is maximum. If $z = r$, then we take $T_{z}$ as our solution, otherwise we compute a solution by adding to $T_{z}$ a shortest path $P$ from $r$ to $z$ and removing the edges in $A(T_{z})\setminus A(P)$ incoming the nodes in $V(T_{z})\cap V(P)$. Let $T$ be our solution and $T^*$ be an optimal solution.

We will prove (Lemma~\ref{lmClaimKuoExtension}) that any out-tree $\hat{T}$ can be covered by at most $N=O(|\hat{T}|/m)$ out-subtrees $\{\hat{T}_i\}_{i=1}^N$ with at most $m$ nodes each, where $m$ is any positive integer less than $|\hat{T}|$. By applying this claim to an optimal solution $T^*$ and by setting $m=\lfloor \sqrt{B} \rfloor$, we obtain 
\[
p(T^*) = p\left( \bigcup_{i=1}^N V(T^*_i)  \right) \leq N p(T'),
\]
where $p(T') = \max\{ p(T^*_i ) ~|~ i\in [N] \}$, $|T'|\leq \lfloor \sqrt{B} \rfloor$, and $N=O(|T^*|/m) = O(\sqrt{B})$. Let $w$ be the root of $T'$. 
Recall that $S_{w}$ is a set of at most $\lfloor \sqrt{B} \rfloor$ nodes that are at a distance no more than  $\lfloor \sqrt{B} \rfloor$  from $w$ and have the maximum prize and $T_w$ contains all the nodes in $S_w$. Since $|T'|\leq \lfloor \sqrt{B} \rfloor$, we have
\[
p(T') \leq p(S_w) \leq p(T_{w}) \leq p(T_{z}) \leq p(T).
\]
Since $N= O(\sqrt{B})$, we conclude that $p(T^*) = O(\sqrt{B}) p(T)$.

Note that the cost of $T$ is upper-bounded by $2B$, as both the cost of $T_{z}$ and that of a shortest path from $r$ to $z$ are at most $B$. We can use the trimming procedure introduced by Bateni et al.~\cite{bateni2018improved} to obtain an out-subtree of $T$ with cost at most $(1+\epsilon)B$ by loosing an approximation factor of $O(1/\epsilon^2)$, for any $\epsilon\in (0,1]$ (see Lemma~\ref{coTrimmingProcess}). This shows Theorem~\ref{thMain} for the unit-cost, additive-prize case. In the case in which the prize is a general monotone submodular function, the trimming procedure by Bateni et al. cannot be applied. We show how to generalize this procedure to the case of any  monotone submodular prize function by loosing an extra approximation factor of $O(1/\eps)$.


We can use the same approach to obtain a polynomial-time bicriteria $\left(1+\epsilon, O\left(\frac{\sqrt{B}}{\epsilon^2}\right)\right)$-approximation algorithm for the case of additive prize function and edge-cost. 
More importantly, we can obtain a polynomial-time bicriteria $\left(1+\eps, O\left(\frac{\sqrt{B}}{\epsilon^3}\right)\right)$-approximation algorithm for \textbf{STO}, i.e. for the edge-cost case where the prize function is monotone submodular. To the best of our knowledge, this is the first polynomial-time approximation algorithm for \textbf{STO}. 

Finally, for the unrooted version the same approach with some minor changes achieves an $O(\sqrt{B})$-approximation with no budget violation. 

\section{Approximation Algorithm for \PName}\label{sec:apxalgo}


We now introduce our polynomial-time approximation algorithm for \PName. We start by defining a procedure that takes as input an out-tree of a directed graph $D$ and returns another out-tree of $D$  which has a smaller cost but preserves the same prize-to-cost ratio (up to a bounded multiplicative factor). 

Bateni et al.~\cite{bateni2018improved} introduced a similar procedure for the case of undirected graphs and additive prize function.
In their case, we are given an undirected graph $G=(V, E)$, a distinguished vertex $r \in V$ and a budget $B$, where each vertex $v \in V$ is assigned with a prize $p'(v)$ and a cost $c'(v)$. For a tree $T$, the prize and cost of $T$ are the sum of the prizes and costs of the nodes of $T$ and are denoted by $p'(T)$ and $c'(T)$, respectively.  A graph $G$ is called \emph{$B$-proper} for the vertex $r$ if the cost of reaching any vertex from $r$ is at most $B$. Bateni et al. proposed a trimming process that leads to the following lemma.
%
\begin{lemma}[Lemma 3 in \cite{bateni2018improved}]\label{lmBateniTrimmingProcess}
Let $T$ be a tree rooted at $r$ with the prize-to-cost ratio $\gamma=\frac{p'(T)}{c'(T)}$. Suppose the underlying graph is $B$-proper for $r$ and for $\epsilon \in (0, 1]$ the cost of the tree is at least $\frac{\epsilon B}{2}$. One can find a tree $T'$ containing $r$ with the prize-to-cost ratio at least $\frac{\epsilon \gamma}{4}$ such that $\epsilon B/2 \le c'(T') \le (1+\epsilon)B$.
\end{lemma}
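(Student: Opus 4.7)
The plan is to split the argument into two cases according to the total cost of $T$. If $c'(T) \leq (1+\epsilon)B$, I would simply set $T' := T$: the cost lower bound is the hypothesis $c'(T) \geq \epsilon B/2$, and the prize-to-cost ratio is $\gamma \geq \epsilon\gamma/4$.

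In the substantive case $c'(T) > (1+\epsilon)B$, I would first perform a depth-first-search traversal of $T$ starting at $r$, producing an enumeration $r = v_1, v_2, \ldots, v_N$ of $V(T)$ in which every prefix $\{v_1,\ldots,v_i\}$ induces a connected subtree of $T$ containing $r$; call this prefix subtree $T^{(i)}$. Let $i^\star$ be the largest index with $c'(T^{(i^\star)}) \leq (1+\epsilon)B$; since $c'(T) > (1+\epsilon)B$, this $i^\star$ is well-defined and strictly less than $N$. The crucial consequence of $G$ being $B$-proper for $r$ is that $c'(v) \leq dist(r, v) \leq B$ for every vertex $v$, so the ``jump'' in cost caused by adding $v_{i^\star+1}$ to the prefix is at most $B$. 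Hence $c'(T^{(i^\star)}) > (1+\epsilon)B - B = \epsilon B \geq \epsilon B/2$, placing the cost of any such prefix candidate in the target interval $[\epsilon B/2,\,(1+\epsilon)B]$.

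The main obstacle is the prize-to-cost ratio, since a single DFS prefix can have ratio arbitrarily smaller than $\gamma$ when the prize is concentrated in the tail of the traversal. To cope, the plan is to enumerate many candidate subtrees and keep the best. Concretely, for every vertex $u \in V(T)$ I would form a candidate $T_u$ by taking a shortest path in $G$ from $r$ to $u$ (whose cost is at most $B$ by $B$-properness) and then greedily attaching, in decreasing prize-to-cost order, high-ratio subtrees of $T$ hanging off $u$ and its descendants, while the total cost stays at most $(1+\epsilon)B$. The DFS-prefix argument above ensures $c'(T_u) \geq \epsilon B/2$ as soon as one more branch would overshoot the upper bound. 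An averaging argument over the decomposition of $T$ into shortest-path trunks and maximal hanging subtrees, combined with the bound of $B$ on each trunk cost, then produces a candidate $T_u$ whose prize-to-cost ratio is at least $\epsilon\gamma/4$: the factor $\epsilon$ comes from charging the trunk cost $B$ against the cost floor $\epsilon B/2$, and the factor $4$ absorbs a two-way averaging between the trunk and the branches. I expect this averaging/pigeonhole step to be the delicate point of the argument, since one must balance the unavoidable ``wasted'' trunk cost against the required cost floor to ensure that some candidate inherits a constant fraction of $\gamma$ up to the factor $\epsilon$.
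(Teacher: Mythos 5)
Your Case 1 is fine, but your Case 2 is a proof plan with a gap at exactly the spot you yourself flag. The paper's route (following Bateni et al., spelled out in the paper's generalization in Lemma~\ref{lmBateniTrimmingProcessGeneral}) is to first locate a small out-subtree $T''$ of $T$ with $c'(T'') \in [\epsilon B/2,\,\epsilon B]$ and $p'(T'') \ge \gamma\,\epsilon B/2$, and only then attach the root via a shortest path of cost $\le B$; the $\epsilon\gamma/4$ ratio falls out immediately because the trunk adds at most $B$ cost but the small subtree already carries prize proportional to $\gamma\epsilon B$. The entire technical content is the existence of that small high-prize piece, which Bateni et al.\ obtain through iterated pruning and a ``rich subtree'' case analysis. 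Your plan reverses the order --- fix a trunk to each $u$, then greedily pack branches --- and defers the existence of a good piece to an unspecified ``averaging argument.''

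Two concrete problems. First, your cost-floor claim for $T_u$ does not follow from the DFS-prefix observation: the prefix argument bounds the jump by $c'(v_{i^\star+1}) \le B$ because you add one vertex at a time, but in the greedy-attachment step you add whole hanging subtrees, any of which can have cost far larger than $B$. So ``one more branch would overshoot'' does not imply the current cost is already at least $\epsilon B/2$; you could be stuck at cost $0$ beyond the trunk with the first branch already too big. You would need to descend into such a branch and split it, which is precisely the recursive decomposition Bateni et al.\ perform and which you have not described. Second, the averaging/pigeonhole step that is supposed to deliver the factor $\epsilon\gamma/4$ is the crux of the lemma and is left entirely at the level of intent; it is not clear which $u$ to pick, the candidates $T_u$ overlap so their prizes cannot simply be summed, and nothing in the plan guarantees that some $T_u$ inherits prize proportional to $\gamma\epsilon B$. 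As written, the proposal identifies the right target ratio and the right use of $B$-properness, but it does not contain a proof of the lemma.
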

We now generalize this trimming process to the case in which the underlying graph is directed and the prize function is monotone and submodular by borrowing ideas from~\cite{bateni2018improved}. 

We introduce some additional definitions. Let $T$ be an out-tree rooted at $r$. A \emph{full} out-subtree of $T$ rooted at some node $v$ is an out-subtree of $T$ containing all the vertices that are reachable from $r$ through $v$ in $T$. The set of \emph{strict} out-subtrees of $T$ is the set of all full out-subtrees of $T$ other than $T$ itself. The set of \emph{immediate} out-subtrees of $T$ is the set of all full out-subtrees rooted at the children of $r$ in $T$. A directed graph $D=(V,A)$ is $B$-\emph{appropriate} for a node $r$ if $dist(r,v)\leq B$ for any node $v \in V$.

\begin{lemma}\label{lmBateniTrimmingProcessGeneral}
Let $D= (V, A)$ be a $B$-appropriate graph for a node $r$. Let $T$ be an out-tree of $D$ rooted at $r$ with the prize-to-cost ratio $\gamma=\frac{p(T)}{c(T)}$, where $p$ is a monotone submodular function. Suppose that $\frac{\epsilon B}{2}\le c(T) \le hB$, where $h \in (1, n]$ and $\epsilon \in (0, 1]$. One can find an out-subtree $\hat{T}$ rooted at $r$ with the prize-to-cost ratio at least $\frac{\epsilon^2 \gamma}{32h}$ such that $\epsilon B/2 \le c(\hat{T}) \le (1+\epsilon)B$.
\end{lemma}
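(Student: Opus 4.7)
The plan is to generalize Bateni et al.'s trimming procedure in three ways: to directed out-trees (so ``full'' and ``immediate'' out-subtrees rooted at $r$ replace their undirected analogues), to monotone submodular prizes (where additivity is replaced by subadditivity $p(A\cup B)\le p(A)+p(B)$), and to the relaxed upper bound $c(T)\le hB$ (which requires an outer reduction step before the inner trimming can be applied).

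If $c(T)\le(1+\epsilon)B$ already, I would return $\hat T=T$, which trivially satisfies the claim since the ratio $\gamma\ge\epsilon^2\gamma/(32h)$ and the cost lies in the required window by hypothesis. Otherwise, let $T_1,\dots,T_k$ denote the immediate out-subtrees of $T$ rooted at the children $v_1,\dots,v_k$ of $r$, and split into two sub-cases. In the first sub-case some $T_j$ has $c(T_j)>B$; let $T_j^+$ denote $T_j$ augmented with $r$ and the arc $(r,v_j)$, and let $\tilde T$ denote the out-subtree obtained from $T$ by deleting $V(T_j)$. Subadditivity gives $p(T)\le p(T_j^+)+p(\tilde T)$, so the better of the two (in prize-to-cost ratio) achieves ratio at least $\gamma/2$ on a cost strictly less than $c(T)$. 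Replacing $T$ by this better piece and recursing, I reach (after an analysis that bounds the aggregate ratio loss by $O(h)$) an out-subtree whose cost is either already in $[\epsilon B/2,(1+\epsilon)B]$ or whose immediate out-subtrees all have cost at most $B$.

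In the second sub-case each immediate out-subtree satisfies $c(T_i)\le B$. Define $T_i^+:=\{r\}\cup T_i$ for each $i$, sort the $T_i^+$ in decreasing order of prize-to-cost ratio, and include them greedily, merging the chosen immediate subtrees into a single out-subtree rooted at $r$. A short case analysis on where the greedy first overshoots $(1+\epsilon)B$ (and a special case when the first chosen piece is smaller than $\epsilon B/2$) produces a subtree $\hat T$ with $c(\hat T)\in[\epsilon B/2,(1+\epsilon)B]$. Its prize-to-cost ratio is lower-bounded via the subadditivity inequality $p(T)\le p(\{r\})+\sum_i p(T_i)$ together with an averaging argument over the sorted $T_i^+$, yielding a ratio of $\Omega(\epsilon\gamma)$; combined with the $O(h)$-loss from the first sub-case this reaches the stated bound $\epsilon^2\gamma/(32h)$.

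The main obstacle is the submodular accounting in the greedy step. In the additive setting, the prize of the union of chosen immediate subtrees equals the sum of their individual prizes, so the averaging argument is essentially a ``Markov-style'' bound; under submodularity the union's prize can be strictly smaller than that sum, and the shared root $r$ appears in every $T_i^+$. Subadditivity and monotonicity must therefore be applied carefully both to transfer the ratio guarantee from the best individual $T_i^+$ (or a short prefix of the sorted list) to the greedy union and to absorb the extra $1/\epsilon$ factor that arises from the last piece potentially overshooting the budget window. Once this is in place, the recursion of the first sub-case composes with the greedy step of the second to give the required bound.
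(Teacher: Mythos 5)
Your sub-case~1 recursion fails to make progress when $r$ has a unique child: then $\tilde T=\{r\}$ and $T_j^+=T$, so ``replacing $T$ by the better piece'' is either a no-op or collapses to $\{r\}$, and on a directed path $r\to v_1\to\cdots\to v_m$ with $c(T)>(1+\epsilon)B$ the procedure stalls at the very first step. More fundamentally, your sketch only descends one level at a time through immediate children of the current root, and it never invokes the hypothesis that $D$ is $B$-appropriate for $r$. That hypothesis is exactly what lets the paper look for a suitable full out-subtree $T''$ arbitrarily deep inside $T$, rooted at an interior vertex $r''$, and then re-attach it to $r$ via a shortest path of cost at most $B$; without this re-attachment there is no way to reach the part of $T$ where the prize sits while keeping $r$ in the solution.

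The claimed ``$O(h)$ aggregate ratio loss'' from the recursion is also unsubstantiated: each step of your sub-case~1 loses a multiplicative factor $1/2$, and on a near-path the number of steps scales with the depth of $T$, not with $h$. The paper obtains the $h$-dependence with no recursion at all. After an initial greedy trimming pass producing $T_-$, it locates a full out-subtree $T''$ whose immediate out-subtrees are all light (cost $<\epsilon B/2$), partitions those into $M\le 4h/\epsilon$ groups of cost in $[\epsilon B/2,\epsilon B]$, uses monotonicity and submodularity to show $\sum_i p(S_i\cup\{r''\})\ge p(T'')$ so the best group has prize $\ge p(T'')/M$, and connects $r$ to $r''$ by a shortest path. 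Your sub-case~2 greedy over $T_i^+$ resembles this grouping step, but it operates only on the children of $r$ and so cannot reach such a $T''$. A correct proof along these lines has to (i) descend past $r$, (ii) find a node $r''$ whose immediate out-subtrees are light, (iii) group and pick the best batch via submodularity, (iv) reattach with a shortest $r$--$r''$ path, and (v) handle separately the degenerate case where no rich full out-subtree exists, which the paper does by isolating a strict out-subtree $T'$ with ratio below $\gamma$ and bounding $c(T_-\setminus T')<\epsilon B/2$.
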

\begin{proof}

We run the following initial trimming procedure. We iteratively remove a strict out-subtree $T'$ from $T$ that satisfies two conditions: (i) the prize-to-cost ratio of $T\setminus T'$ is at least $\gamma$, and (ii) $c(T\setminus T') \ge \frac{\epsilon}{2}B$. We repeat this process until no such strict out-subtree exists. Let $T_{-}$ be the remaining out-tree after applying this process on $T$.

Now if $c(T_{-}) \le (1+\epsilon) B$, the desired out-subtree is obtained. Suppose it is not the case. A full out-subtree $T'$ is called \emph{rich} if $c(T') \ge \frac{\epsilon}{2} B$ and the prize-to-cost ratio of $T'$ and all its strict out-subtrees are at least $\gamma$. We claim that if there exists a rich out-subtree, then we can find the desired out-subtree $\hat{T}$.

\begin{claim}\label{clBateniRichGeneral}
Given a rich out-subtree $T'$, the desired out-subtree $\hat{T}$ can be found.
\end{claim}

\begin{proof}
We first find a rich out-subtree $T''$ of $T'$ such that the strict out-subtrees of $T''$ are not rich, i.e., $c(T'') \ge \frac{\epsilon}{2}B$ while the cost of any strict out-subtree of $T''$ (if any exist) is less than $\frac{\epsilon}{2}B$. Let $C$ be the total cost of the immediate out-subtrees of $T''$. We distinguish between two cases:

\begin{enumerate}
    \item If $C < \frac{\epsilon}{2}B$, then let $\hat{T}$ be the union of $T''$ and a shortest path $P$ from $r$ to the root $r''$ of $T''$. $\hat{T}$ has cost at most $C+B \le (1+\epsilon)B$ and prize at least $\gamma(\frac{\epsilon}{2}B)$. This implies that $\hat{T}$ has ratio at least $\frac{\gamma \epsilon}{2(1+\epsilon)} \ge \frac{\gamma\epsilon}{4}\ge \frac{\gamma \epsilon^2}{32h}$. 
    
    \item If $C \ge \frac{\epsilon}{2}B$, we proceed as follows. Since each immediate out-subtree of $T''$ has a cost strictly smaller than $\frac{\epsilon}{2}B$, we can partition all the immediate out-subtrees of $T''$ into $M$ groups $S_1, \dots, S_M$ in such a way that for each $i\in [M-1]$ the total cost of immediate out-subtrees in $S_i$ is at least $\frac{\epsilon}{2}B$, and for each $i\in[M]$ it is at most $\epsilon B$. 
    We can always group in this way since the cost of each immediate out-subtree of $T''$ is less than $\frac{\epsilon}{2}B$ while $C\ge \frac{\epsilon}{2}B$. Since the total cost of all the immediate out-subtrees of $T''$ is upper bounded by $hB$, then the number of selected groups $M$ is at most
    \[
    M \le \left\lceil \frac{hB}{\frac{\epsilon}{2}B} \right\rceil = \left\lceil \frac{2h}{\epsilon} \right\rceil \le \left\lfloor \frac{2h}{\epsilon}\right\rfloor +1\le \left\lfloor \frac{4h}{\epsilon}\right\rfloor \le \frac{4h}{\epsilon}.
    \]
     
     We now add the root $r''$ of $T''$ to each group $S_i$ and denote the new group by $S'_i$, i.e., $S'_i=S_i \cup \{r''\}$, for any $i \in [M]$. By the monotonicity and submodularity of $p$, we have $\sum_{i=1}^{M}p(S'_i)\ge p(S'_1) + \sum_{i=2}^{M}p(S_i) \ge p(S'_1 \cup\bigcup_{i=2}^{M} S_i)=p(T'')$. Now among $S'_1, \dots, S'_M$, we select the group $S'_{z}$ that maximizes the prize, i.e., $z=\arg\max_{i \in [M]}p(S'_i)$. We know that
    \[
    p(S'_{z}) \ge\frac{1}{M}\sum_{i=1}^{M}p(S'_i)\ge \frac{p(T'')}{M}\ge\frac{\epsilon}{4h} p(T'') \ge \frac{\epsilon}{4h} \cdot \frac{\gamma \epsilon}{2}B=\frac{\gamma \epsilon^2}{8h}B.
    \]
    In case $z=M$ and $c(S'_M) <\frac{\epsilon}{2}B$, we select a subset of immediate out-subtrees from $\bigcup_{i=1}^{M-1}S_i$ with the total cost of at least $\frac{\epsilon}{2}B$ and at most $\epsilon B-c(S'_M)$, and add it to $S'_z$. 
    
    Finally, let $\hat{T}$ be the union of a shortest path $P$ from $r$ to $r''$, $S'_{z}$, and the edges from $r''$ to the roots of the out-subtrees in $S_{z}$ (see Figure~\ref{figRichOut-subtree}). By monotonicity, $\hat{T}$ has the total prize at least $p(\hat{T}) \ge p(S'_{z})\ge \frac{\gamma \epsilon^2}{8h}B$. Note that $c(\hat{T}) \le (1+\epsilon)B$ as $c(S'_{z} \setminus \{r''\}) =c(S_{z}) \le \epsilon B$ and the shortest path from $r$ to $r''$ costs at most $B$ (since the graph is $B$-appropriate). This implies that the prize-to-cost ratio of $\hat{T}$ is at least $\frac{\gamma \epsilon^2}{8h(1+\epsilon)} \ge \frac{\gamma \epsilon^2}{16h}\ge \frac{\gamma \epsilon^2}{32h}$.\qedhere

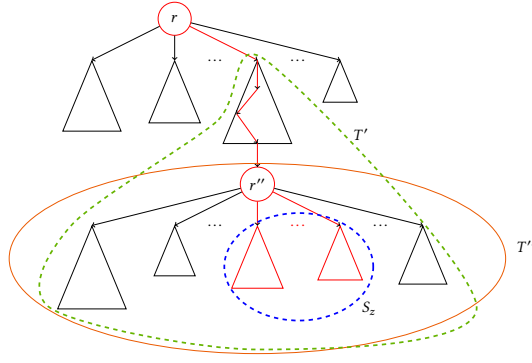
\begin{figure}[t]
\centering
\scalebox{0.55}{
\begin{tikzpicture}
\node at (2,-2) [circle,draw=red,minimum size=8mm] (r) {$r$};
\node[coordinate] at (0,-3) (A) {};
\node at (0,-3) [isosceles triangle, shape border rotate=+90,
draw,minimum size=8mm,minimum height=1.7cm,
anchor=north] (Ctriangle) {};
\node[coordinate] at (2,-3) (B) {};
\node at (2,-3) [isosceles triangle, shape border rotate=+90,
draw,minimum size=8mm,minimum height=1.5cm,
anchor=north] (Ctriangle) {};
\node[coordinate] at (4,-3) (C) {};
\node at (4,-3) [isosceles triangle, shape border rotate=+90,
draw,minimum size=8mm,minimum height=2cm,
anchor=north] (Ctriangle) {};
\node[coordinate] at (6,-3) (D) {};
\node at (6,-3) [isosceles triangle, shape border rotate=+90,
draw,minimum size=8mm,minimum height=1cm,
anchor=north] (Ctriangle) {};

\node[coordinate] at (4,-5) (Dl) {};

\node at (4,-6) [circle,draw=red,minimum size=4mm] (r') {$r''$};
\node[coordinate] at (0,-7) (Ar') {};
\node at (0,-7) [isosceles triangle, shape border rotate=+90,
draw,minimum size=8mm,minimum height=2cm,
anchor=north] (Ctriangle) {};
\node[coordinate] at (2,-7) (Br') {};
\node at (2,-7) [isosceles triangle, shape border rotate=+90,draw,minimum size=8mm,minimum height=1.2cm,anchor=north] (Ctriangle) {};
\node[coordinate] at (4,-7) (Cr') {};
\node at (4,-7) [isosceles triangle, shape border rotate=+90,draw=red,minimum size=8mm,minimum height=1.5cm,anchor=north] (Ctriangle) (S1) {};
\node[coordinate] at (6,-7) (Dr') {};
\node at (6,-7) [isosceles triangle, shape border rotate=+90,draw=red,minimum size=8mm,minimum height=1.3cm,anchor=north] (Ctriangle) (S2) {};
\node[coordinate] at (8,-7) (Er') {};
\node at (8,-7) [isosceles triangle, shape border rotate=+90,draw,minimum size=8mm,minimum height=1.4cm,anchor=north] (Ctriangle) {};

\draw [->] (r) to (A);
\draw [->] (r) to (B);
\draw [->, draw=red] (r) to (C);
\path (B) to node {\dots} (C);
\draw [->] (r) to (D);
\path (C) to node {\dots} (D);
\draw [->, draw=red] (Dl) to (r');
\draw [->] (r') to (Ar');
\draw [->] (r') to (Br');
\path (Br') to node {\dots} (Cr');
\draw [->, draw=red] (r') to (Cr');
\draw [->, draw=red] (r') to (Dr');
\draw [->, draw] (r') to (Er');

\path (Cr') to node {\textcolor{red}{\dots}} (Dr');
\path (Dr') to node {\dots} (Er');

\begin{scope}[every path/.style={->}]
\path [draw = red, inner sep=100pt] (4,-3) -- (4,-3.7);
\path [draw = red, inner sep=100pt] (4,-3.7) -- (3.5,-4.3);
\path [draw = red, inner sep=100pt] (3.5,-4.3) -- (4,-5);
\end{scope}

\draw[dashed, very thick, draw=blue] (5,-8) ellipse (1.8cm and 1.3cm);
\node at (6.7,-9) (S_z) {$S_z$};

\draw[ draw=orange] (4,-7.8) ellipse (6cm and 2.3cm);
\node at (10.5,-7.5) (T'') {$T''$};

\draw [dashed, very thick, draw=green] plot [smooth cycle] coordinates {(4.2,-3) (3,-4.5) (-.5,-7) (-1,-9) (4,-10) (9.3,-9)};
\node at (6.5,-4.8) (T') {$T'$};

\end{tikzpicture}}
\caption{$T_{-}$ is rooted at $r$, which is the whole out-tree. The green dashed closed curve represents the rich out-subtree $T'$. The orange circle represents $T''$ rooted at $r''$, where its strict out-subtrees are not rich, i.e., the cost of any strict out-subtree of $T''$ is less than $\frac{\epsilon}{2}B$. The blue dashed circle represents the partition $S_z$, which maximizes the prize and costs at most $\epsilon B$. The red out-subtree represents $\hat{T}$, which is the union of a shortest path from $r$ to $r''$, $S_z$ and the edges from $r''$ to the immediate out-subtrees of $T''$ in $S_z$. Note that for the sake of simplicity, in this figure we suppose that the shortest path from $r$ to $r''$ is included in $T_{-}$.}\label{figRichOut-subtree}

\end{figure}

\end{enumerate}
\end{proof}

It only remains to consider the case when there is no rich out-subtree. Since $T_{-}$ is not rich and $c(T_{-}) \ge \frac{\epsilon}{2}B$, the ratio of at least one strict out-subtree of $T_{-}$ is less than $\gamma$. Now we find a strict out-subtree $T'$ with ratio less than $\gamma$ such that the ratio of all of its strict out-subtrees (if any exist) is at least $\gamma$. 
We first need to show that $c(T_{-}\setminus T') < \frac{\epsilon}{2}B$. 

\begin{claim}\label{clLowCost}
 $c(T_{-}\setminus T') < \frac{\epsilon}{2}B$.
\end{claim}

\begin{proof}

By the submodularity of $p$, we know that $p(T_{-}\setminus T')+p(T') \ge p(T_{-})$. This implies that 
\begin{align}\label{eqSubmodularityNotRich}
    \frac{p(T_{-} \setminus T')}{c(T_{-} \setminus T')} \ge \frac{p(T_{-})-p(T')}{c(T_{-})-c(T')}.
\end{align}

Let $\gamma'=\frac{p(T')}{c(T')}$ be the prize-to-cost ratio of $T'$. We know that 
\begin{align}\label{eqRatioNotRich}
   p(T_{-})-p(T')= c(T_{-})\gamma -c(T') \gamma' > c(T_{-})\gamma -c(T') \gamma,
\end{align}
where the inequality holds because $\gamma' < \gamma$. By Equations~\eqref{eqSubmodularityNotRich} and~\eqref{eqRatioNotRich}, we have $\frac{p(T_{-} \setminus T')}{c(T_{-} \setminus T')} > \gamma$. As the prize-to-cost ratio of $T_{-} \setminus T'$ is more than $\gamma$ but $T'$ has not been removed from $T$ during the initial phase, then $c(T_{-}\setminus T') < \frac{\epsilon}{2}B$. This concludes the proof of the claim. 
\end{proof}

We know that $c(T_{-})> (1+\epsilon)B$ and the cost from $r$ to the root of $T'$ is at most $B$. Then by Claim~\ref{clLowCost}, the total cost of immediate out-subtrees of $T'$ is at least $\frac{\epsilon}{2}B$. Also, the cost of an immediate out-subtree of $T'$ is less than $\frac{\epsilon}{2}B$, otherwise, we have a rich out-subtree. As the ratio and cost of $T_{-}$ are at least $\gamma$ and $\frac{\epsilon}{2}B$, respectively, then $p(T_{-}) \ge \frac{\gamma \epsilon}{2}B$. Now we distinguish between two cases:

\begin{enumerate}
    \item  If $p(T') \ge \frac{\gamma \epsilon}{4}B$, by similar reasoning as above, we group the immediate out-subtrees of $T'$ into $M$ groups $S_1, \dots, S_M$ in such a way that for each $i\in [M-1]$ the total cost of immediate out-subtrees in $S_i$ is at least $\frac{\epsilon}{2}B$, and for each $i\in[M]$ it is at most $\epsilon B$. Now define a new group $S'_i=S_i \cup \{r'\}$, for any $i \in [M]$. Let $z=\arg\max_{i \in [M]}p(S'_i)$. Then the group $S'_{z}$, which maximizes the prize is selected. We know that $M \le \frac{4h}{\epsilon}$. Hence, $p(S'_{z}) \ge \frac{\epsilon}{4h}p(T') \ge  \frac{\epsilon}{4h}\cdot \frac{\gamma \epsilon}{4}B=\frac{\gamma \epsilon^2}{16h}B.$
    
    Note that in case $z=M$ and $c(S'_M) <\frac{\epsilon}{2}B$, we select a subset of immediate out-subtrees from $\bigcup_{i=1}^{M-1}S_i$ with the total cost of at least $\frac{\epsilon}{2}B$ and at most $\epsilon B-c(S'_M)$, and add it to $S'_z$. 
    
    Let $\hat{T}$ be the union of a shortest path $P$ from $r$ to $r'$, $S'_{z}$, and the edges from $r'$ to the roots of the out-subtrees in $S_{z}$. The cost of $\hat{T}$ is at most $(1+\epsilon)B$ and the prize-to-cost ratio is at least $\frac{\gamma \epsilon^2}{16h(1+\epsilon)} \ge \frac{\gamma \epsilon^2}{32h}$.

    \item If $p(T') < \frac{\gamma \epsilon}{4}B$, we proceed as follows. Consider the out-subtree $T''=T_{-}\setminus T'$, which is rooted at $r$. Recall that by Claim~\ref{clLowCost}, we have $c(T'') <\frac{\epsilon}{2}B$. We connect a subset of immediate out-subtrees $T'_1, \dots, T'_q$ of $T'$ with cost $\frac{\epsilon}{2}B-c(T'')\le c(\bigcup_{i=1}^{q}T'_i) \le \epsilon B-c(T'')$ to the root of $T''$ through the root of $T'$.
    Since the cost of each immediate out-subtree of $T'$ is less than $\frac{\epsilon}{2}B$ (otherwise, we have a rich out-subtree) and $c(T') >(1+\frac{\epsilon}{2})B$, a subset of immediate out-subtrees $T'_1, \dots, T'_q$ of $T'$ with such a cost can be found.
    We call the  resulting out-subtree $\hat{T}$ and observe that $c(\hat{T})\ge \frac{\epsilon}{2}B$ (see Figure~\ref{figLightOut-subtree}). We now bound the prize-to-cost ratio of $\hat{T}$. First note that by the submodularity of $p$, $p(T'')+p(T') \ge p(T_{-})$. Thus by the subcase assumption and the monotonicity of $p$, we have $p(\hat{T}) \ge p(T'') \ge \frac{\gamma \epsilon}{4}B$. Since $\frac{\epsilon}{2}B-c(T'') \le c(\bigcup_{i=1}^{q}T'_i) \le \epsilon B-c(T'')$ and the graph is $B$-appropriate, $c(\hat{T}) \le (1+\epsilon)B$. Therefore, the prize-to-cost ratio of the resulting out-subtree $\hat{T}$ is $\frac{\gamma \epsilon}{4(1+\epsilon)} \ge \frac{\gamma\epsilon}{8} \ge \frac{\gamma \epsilon^2}{32h}$.

\begin{figure}[t]
\centering
\scalebox{0.55}{
\begin{tikzpicture}
\node at (2,-2) [circle,draw=red,minimum size=8mm] (r) {$r$};
\node[coordinate] at (0,-3) (A) {};
\node at (0,-3) [isosceles triangle, shape border rotate=+90,
draw=red,minimum size=8mm,minimum height=1.7cm,
anchor=north] (Ctriangle) {};
\node[coordinate] at (2,-3) (B) {};
\node at (2,-3) [isosceles triangle, shape border rotate=+90,
draw=red,minimum size=8mm,minimum height=1.5cm,
anchor=north] (Ctriangle) {};
\node[coordinate] at (4,-3) (C) {};
\node at (4,-3) [isosceles triangle, shape border rotate=+90,
draw=red,minimum size=8mm,minimum height=2cm,
anchor=north] (Ctriangle) {};
\node[coordinate] at (6,-3) (D) {};
\node at (6,-3) [isosceles triangle, shape border rotate=+90,
draw=red,minimum size=8mm,minimum height=1cm,
anchor=north] (Ctriangle) {};

\node[coordinate] at (4,-5) (Dl) {};

\node at (4,-6) [circle,draw=red,minimum size=4mm] (r') {$r'$};
\node[coordinate] at (0,-7) (Ar') {};
\node at (0,-7) [isosceles triangle, shape border rotate=+90,
draw=red,minimum size=8mm,minimum height=2cm,
anchor=north] (Ctriangle) {};
\node[coordinate] at (2,-7) (Br') {};
\node at (2,-7) [isosceles triangle, shape border rotate=+90,draw=red,minimum size=8mm,minimum height=1.2cm,anchor=north] (Ctriangle) {};
\node[coordinate] at (4,-7) (Cr') {};
\node at (4,-7) [isosceles triangle, shape border rotate=+90,draw,minimum size=8mm,minimum height=1.5cm,anchor=north] (Ctriangle) (S1) {};
\node[coordinate] at (6,-7) (Dr') {};
\node at (6,-7) [isosceles triangle, shape border rotate=+90,draw,minimum size=8mm,minimum height=1.3cm,anchor=north] (Ctriangle) (S2) {};

\draw [->, draw=red] (r) to (A);
\draw [->, draw=red] (r) to (B);
\draw [->, draw=red] (r) to (C);
\path (B) to node {\textcolor{red}{\dots}} (C);
\draw [->, draw=red] (r) to (D);
\path (C) to node {\textcolor{red}{\dots}} (D);
\draw [->, draw=red] (Dl) to (r');
\draw [->, draw=red] (r') to (Ar');
\draw [->, draw=red] (r') to (Br');
\path (Br') to node {\dots} (Cr');
\draw [->] (r') to (Cr');
\draw [->] (r') to (Dr');

\path (Ar') to node {\textcolor{red}{\dots}} (Br');
\path (Cr') to node {\dots} (Dr');

\draw[dashed, very thick, draw=blue] (.6,-8) ellipse (2.2cm and 1.5cm);
\node at (0,-6.3) (T'_1) {$\bigcup_{i=1}^{q}T'_i$};

\draw[draw=orange] (2.5,-7.5) ellipse (5cm and 2.3cm);
\node at (8,-7) (T') {$T'$};

\draw[dashed, draw=green] (2.5,-3.1) ellipse (5cm and 2.3cm);
\node at (7.5,-2) (T'') {$T''$};
\end{tikzpicture}}
\caption{$T_{-}$ is rooted at $r$, which is the whole out-tree. The green dashed circle represents $T''$ rooted at $r$. The orange circle represents $T'$ rooted at $r'$, where its cost is more than $(1+\frac{\epsilon}{2})B$. The blue dashed circle represents a subset of immediate out-subtrees $T'_1, \dots, T'_q$ of $T'$ with cost $\frac{\epsilon}{2}B-c(T'')\le c(\bigcup_{i=1}^{q}T'_i) \le \epsilon B-c(T'')$. The red out-subtree represents $\hat{T}$, which is the union of $T''$, the edge from $T''$ to $r'$, $\bigcup_{i=1}^{q}T'_i$ and the edges from $r'$ to $T'_1, \dots, T'_q$.}\label{figLightOut-subtree}
\end{figure}
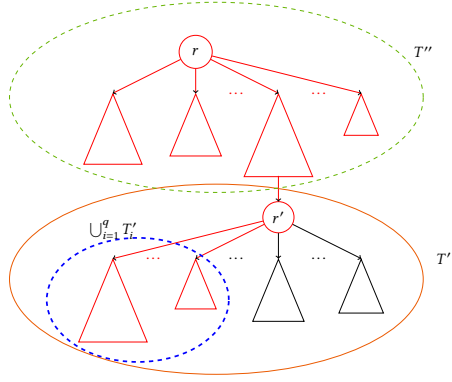

\end{enumerate}
The proof is complete.
\end{proof}
To propose our algorithm, we need a last element. Let $U=\{x_1, \dots, x_n\}$ be a ground set, $c:U \rightarrow Z^{+}$ be a cost function, $f:2^{U} \rightarrow \mathbb{R}^+ \cup \{0\}$ be a monotone submodular function, and $K$ be an integer budget. In the Submodular Maximization problem (\textbf{SM}), we are looking for a subset $S \subseteq U$ such that $|S|\le K$ and $f(S)$ is maximum. Nemhauser et al.~\cite{nemhauser1978analysis} provided a greedy algorithm that starts from $S:=\emptyset$ and runs $K$ iterations in which, at each iteration, it adds to $S$ the element $x$ which maximizes $f(S\cup\{x\}) - f(S)$. This algorithm guarantees a $(1-e^{-1})$-approximation for \textbf{SM}. We denote by \textbf{RSM} the rooted variant of \textbf{SM} in which, additionally, a specific element $v \in U$ is required to be included in the solution, that is we are looking for a subset $S \subseteq U$ such that $|S|\le K$, $v \in S$ and $f(S)$ is maximum.
We can run Nemhauser et al.~\cite{nemhauser1978analysis}'s approach for \textbf{RSM} with a minor change: we initialize $S:=\{v\}$ and run $K-1$ greedy iterations. We call this approach \textbf{Greedy}. It can be shown that \textbf{Greedy} guarantees a $(1-e^{-1})$-approximation algorithm for \textbf{RSM}  (see e.g.~\cite{kuo2014maximizing}).

\begin{algorithm}[t]
	\caption{}
	\label{algNW-DRBP}
	\textbf{Input:} Directed graph $ D= (V, A)$; monotone submodular prize function $p: 2^V \rightarrow \mathbb{R}^+ \cup\{0\}$; cost function $c: V \rightarrow \mathbb{Z}^+$; root $r \in V$; budget $B$; and $\epsilon' \in (0, 1]$.\\
	\textbf{Output:} Out-tree $T$ of $D$ rooted at $r$ such that $c(T)\le (1+\epsilon')B$. 
	\vspace{-4mm}
	\begin{algorithmic}[1]
	    \STATE Remove from $D$ all the nodes at a distance more than $B$ from $r$;
	    \FOR{$u \in V$}
	      \STATE $V_u:=\{v~|~dist(u,v)\leq c(u)+ \lfloor\sqrt{B}\rfloor\}$;
	      \STATE Define an instance $I^{RSM}_u$ of \textbf{RSM} with elements $V_u$, specific element $u$, budget $\lfloor \sqrt{B} \rfloor+1$, profits $p(S)$, for each $S\subseteq V_u$;
	      \STATE Let $S_u$ be a $(1-e^{-1})$-approximate solution to $I^{RSM}_u$, computed by using  \textbf{Greedy};
	      \STATE Let $T_u$ be a minimal inclusion-wise out-tree rooted at $u$ spanning all nodes in $S_u$;
	    \ENDFOR
        \STATE $z:=\arg\max_{u\in V} p(T_u)$;
        \STATE Let $P$ be a shortest path from $r$ to $z$;
        \STATE $T:= P \cup T_z$;
        \STATE $A(T):= A(T) \setminus \{ (v,w)\in A(T_z)\setminus A(P)~:~ w\in V(T_{z})\cap V(P)  \}$;
        \STATE Apply the trimming process in Lemma~\ref{lmBateniTrimmingProcessGeneral} with $\epsilon=\epsilon'$ to $T$;\label{lnTrimmingProcess}
        \RETURN $T$.
	\end{algorithmic}
\end{algorithm}
Now we can propose our approximation algorithm for \PName, which is reported in Algorithm~\ref{algNW-DRBP}.
In words, Algorithm~\ref{algNW-DRBP} first computes the maximal inclusion-wise $B$-appropriate subgraph for $r$ of a given graph $D$ by removing all the nodes at a distance larger than $B$ from $r$. Let $D=(V,A)$ be the resulting directed graph. For each node $u$, it computes the set $V_u$ of all nodes that are at a distance no more than $c(u)+\lfloor \sqrt{B}\rfloor$ from $u$. Let $S^*_u$ be a subset of $V_u$ such that $|S^*_u|\leq \lfloor \sqrt{B}\rfloor+1$, $u \in S^*_u$, and $p(S^*_u)$ is maximum. Finding $S^*_u$ requires to solve an instance $I^{RSM}_u$ of \textbf{RSM} where the elements are $V_u$, the budget is $\lfloor \sqrt{B}\rfloor+1$, the specific element is $u$,  and profits are defined by function  $p(\cdot)$. 
Using \textbf{Greedy}, Algorithm~\ref{algNW-DRBP} computes in polynomial time an approximate solution $S_u$ to $I^{RSM}_u$ with $u\in S_u$, $|S_u|\leq \lfloor\sqrt{B}\rfloor+1$ and $p(S_u)\geq (1-e^{-1}) p(S^*_u)$. Finally, for each $u \in V$, Algorithm~\ref{algNW-DRBP} computes a spanning out-tree $T_u$ rooted in $u$ that spans all the nodes in $S_u$. Let $z$ be a node such that $p(T_z)$ is maximum, i.e., $z=\arg\max_{u \in V} p(T_u)$. Then, we have  $c(T_z\setminus \{z\}) \le B$ as $|S_z\setminus \{z\}| \le \lfloor \sqrt{B}\rfloor$ and $dist(z, v) \le c(z)+\lfloor \sqrt{B}\rfloor$ for any $v \in S_z$. If $z=r$, then Algorithm~\ref{algNW-DRBP} defines $T=T_z$. Otherwise, it computes a shortest path $P$ from $r$ to $z$ and defines $T$ as the union of $T_z$ and $P$. Since the obtained graph might not be an out-tree, Algorithm~\ref{algNW-DRBP} removes the possible edges incoming the nodes in $V(T_{z})\cap V(P)$ that belong only to $T_z$. 
The obtained out-tree $T$ has a cost of at most $2B$ as $dist(r, z) \le B$ and $c(T_z\setminus \{z\}) \le B$. Therefore, Algorithm~\ref{algNW-DRBP} applies the trimming process in Lemma~\ref{lmBateniTrimmingProcessGeneral} to $T$ to reduce the cost to $(1+\eps)B$, where $\epsilon \in (0, 1]$ and outputs the resulting out-tree.

In the next theorem, we show that Algorithm~\ref{algNW-DRBP} guarantees a bicriteria approximation. 

\MainTheorem*

For our analysis, we need to decompose an optimal out-tree into a bounded number of out-subtrees of bounded cost as in the following lemma, which is similar to Claim~3 in Kuo et al.~\cite{kuo2014maximizing} on the unrooted problem and undirected graphs.





\begin{lemma} \label{lmClaimKuoExtension}
For any out-tree $\hat{T}=(V, A)$ rooted at $r$ with cost $c(\hat{T})$ and any $m\leq c(\hat{T})$, there exist $N \le 5\lfloor \frac{c(\hat{T})}{m}\rfloor$ out-subtrees $T^i=(V^i, A^i)$ of $\hat{T}$, for $i \in [N]$, where $V^i \subseteq V$, $A^i = (V^i \times V^i) \cap A$, $c(V^i) \le m+c(r_i)$, $r_i$ is the root of $T^i$, and $\bigcup_{i=1}^{N} V^i=V$.
\end{lemma}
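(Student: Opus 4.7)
The plan is to generalize Kuo et al.'s chunking argument (Claim~3 of~\cite{kuo2014maximizing}, stated for trees in undirected graphs) to out-trees in a directed graph. The adaptation is natural since $\hat{T}$ already carries a canonical rooting at $r$: ``the subtree rooted at a vertex $v$'' is well-defined, and any such subtree is automatically an out-subtree of $\hat{T}$ with the edge set induced by its vertex set, exactly matching the formal definition in the lemma.

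The construction is iterative. Starting from $\hat{T}_0 := \hat{T}$, while $c(\hat{T}_t) > m$, I would find a deepest vertex $v_t$ in $\hat{T}_t$ whose rooted subtree $S_{v_t}$ has cost strictly greater than $m$; by depth-maximality, every child $u$ of $v_t$ in $\hat{T}_t$ satisfies $c(S_u) \le m$. I would then pick a subset $I_t$ of $v_t$'s children, extract the out-subtree $T^t := \{v_t\} \cup \bigcup_{j \in I_t} S_{u_j}$ rooted at $v_t$, and remove the whole subtrees $S_{u_j}$ ($j \in I_t$) from $\hat{T}_t$ to form $\hat{T}_{t+1}$. When the loop terminates, the residual $\hat{T}_T$, of cost at most $m$, is output as the final piece (rooted at $r$). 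Because only whole rooted subtrees are ever removed, both $T^t$ and $\hat{T}_{t+1}$ remain out-subtrees of $\hat{T}$ in the required induced-edge sense.

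The main technical step is the choice of $I_t$, which must simultaneously ensure $c(T^t) \le m + c(v_t)$ and a per-iteration progress $\sum_{j \in I_t} c(S_{u_j}) = \Theta(m)$ that drives the counting. I would use the following case analysis, parameterized by a threshold $\alpha m$: (i) if $\sum_j c(S_{u_j}) \le m$, include every child, so $c(T^t) = c(S_{v_t}) \le c(v_t) + m$, and remove $S_{v_t}$ in its entirety (progress $> m$); (ii) else if some single $c(S_{u_j})$ lies in $[\alpha m, m]$, take $I_t = \{j\}$ (progress $\ge \alpha m$); (iii) else every $c(S_{u_j}) < \alpha m$, in which case a greedy accumulation yields a subset with sum in $[\alpha m, 2\alpha m) \subseteq [\alpha m, m]$ (progress $\ge \alpha m$).

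The main obstacle is pinning down $\alpha$ and the final counting so that $N \le 5\lfloor c(\hat{T})/m \rfloor$ holds exactly, not just up to constants. The naive choice $\alpha = 1/5$ only yields $T \le 5c(\hat{T})/m$ and hence $N \le T+1 \le 5c(\hat{T})/m + 1$, which may exceed $5\lfloor c(\hat{T})/m \rfloor$ by $1$ when $c(\hat{T})/m$ lies just above an integer. I would fix this by taking $\alpha = 2/5$, which keeps constraint (i) intact (since $2\alpha m = 4m/5 \le m$) and boosts the progress to $\ge 2m/5$ per step; then $T \le \lfloor 5c(\hat{T})/(2m)\rfloor$, and a short arithmetic check using the integrality of $T$ and $\lfloor c(\hat{T})/m\rfloor \ge 1$ (from $m \le c(\hat{T})$) yields $N \le 5\lfloor c(\hat{T})/m \rfloor$ across all ranges of $c(\hat{T})/m$.
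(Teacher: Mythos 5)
Your proof is correct and follows essentially the same bottom-up chunking argument as the paper's (both adapt Claim~3 of Kuo et al.\ to out-trees): locate a deepest vertex whose subtree exceeds the threshold, so that every child's subtree is small, peel off a group of children's subtrees of bounded cost rooted there, iterate, and close the count with integer arithmetic using $\lfloor c(\hat T)/m\rfloor\geq 1$. The only difference is organizational --- the paper identifies all minimal ``infeasible'' subtrees in a single bottom-up pass and then partitions each offline into groups of cost at least $m/2$ (getting $N\le 1+\sum_i 4\lfloor c(I_i)/m\rfloor\le 5\lfloor c(\hat T)/m\rfloor$), whereas you interleave identification and extraction with per-step progress $\ge 2m/5$; your $\alpha=2/5$ works, but $\alpha=1/2$ (the paper's implicit choice) satisfies the same constraint $2\alpha m\le m$ and yields a sharper count with the same closing arithmetic.
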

\begin{proof}
An out-subtree $T'$ of $\hat{T}$ rooted at $r'$ is called \textit{feasible} if $c(V(T')\setminus \{r'\}) \le m$; it is called \emph{infeasible} otherwise.

Let us consider the following procedure called \textbf{Proc}. \textbf{Proc} takes as input an out-tree $T'$, \textbf{Proc}$(T')$, and visits the vertices on $T'$ from the leaves to the root. In this visiting process when \textbf{Proc} encounters a vertex $v$ such that $T'_v$ is the first infeasible full out-subtree, it removes $T'_v$ from $T'$, i.e., $T'=T'\setminus T'_v$. \textbf{Proc} iteratively repeats this process for the new tree $T'$. Finally, \textbf{Proc} returns all infeasible full out-subtrees that have been found in the visit. 

Let $I_1, \dots, I_{s}$ be the set of all infeasible full out-subtrees that have been returned after running \textbf{Proc}$(\hat{T})$ and let $I_{s+1}$ be the possible feasible out-subtree rooted at $r$ that remains after the visit of \textbf{Proc}$(\hat{T})$. We have $\cup_{i\in [s+1]} V(I_i) = V(\hat{T})$  and $V(I_i)\cap V(I_j)=\emptyset$, for $i\neq j$.

For each $i\in [s]$, let us consider the infeasible out-subtree $I_i$, let $v_i$ be the root of $I_i$, and let $I_{u}$ be the full out-subtree of $I_i$ rooted at $u$, for each child $u$ of $v_i$. Each out-subtree $I_i$ is further divided into out-subtrees as follows:
\begin{itemize}
    \item for all children $u$ of $v_i$ such that $c(I_{u})\geq m/2$, we generate an out-subtree $I_{u}$, observe that $c(I_{u}) \leq m + c(u)$ because $I_{u}$ is feasible. If all the children of $v_i$ are in this category, we generate a further out-subtree made of only node $v_i$.
    \item All children $u$ of $v_i$ such that $c(I_{u})< m/2$ are partitioned into groups of cost between $m/2$ and $m$, plus a possible group of cost smaller than $m/2$.
    It is always possible to partition the nodes in this way since $c(I_{u})< m/2$ for all such nodes.
    Then, for each of these groups, we generate an out-subtree by connecting $v_i$ to the roots of the out-subtrees in the group. All the generated out-subtrees have the same root $v_i$ and cost at most $m+c(v_i)$.
\end{itemize}
The generated out-subtrees cover all the nodes in $I_i$. We add $I_{s+1}$ to the set of generated out-subtrees, if it exists.
Let $T^1,\ldots,T^N$ be the set of generated out-subtrees.
Since $I_1, \dots, I_{s+1}$ cover all the nodes of $\hat{T}$, then so do $T^1,\ldots,T^N$. Moreover, each generated out-subtree $T^j$ costs at most $m+c(r_j)$, where $r_j$ is the root of $T^j$.

We now bound the number $N$ of generated out-subtrees.
Given an infeasible out-subtree $I_i$, for some $i\in[s]$, each out-subtree generated from $I_i$ costs at least $m/2$, except for the possible out-subtree made of only the root node of $I_i$ and a possible out-subtree of cost smaller than $m/2$. Note that, by construction, at most one of these two additional out-subtrees can be generated. 
Hence, for each $i\in [s]$, the number $s_i$ of out-subtrees generated from $I_i$ is
    \[
    s_i\le\left\lfloor \frac{c(I_i)}{m/2} \right\rfloor+1 \le 2\left\lfloor \frac{c(I_i)}{m} \right\rfloor+2 \le 4 \left\lfloor \frac{c(I_i)}{m} \right\rfloor.
    \]
Since $I_1, \dots, I_{s+1}$ are disjoint, then the overall number of generated out-subtrees is at most $N\leq 1+\sum_{i\in [s]} s_i\le1+\sum_{i\in[s]} 4 \left\lfloor \frac{c(I_i)}{m} \right\rfloor\le 5 \left\lfloor \frac{c(\hat{T})}{m} \right\rfloor$.
\end{proof}



Now we are ready to prove Theorem~\ref{thMain}.

\begin{proof}[Proof of Theorem~\ref{thMain}]

By applying Lemma~\ref{lmClaimKuoExtension} to an optimal solution $T^*$ and by setting $m=\lfloor \sqrt{B} \rfloor$, we obtain $N \le 5 \lfloor \sqrt{B} \rfloor$ out-subtrees $T^i=(V^i, A^i)$ of $T^*$, for $i\in [N]$, where $V^i \subseteq V(T^*)$, $A^i=(V^i \times V^i) \cap A(T^*)$, $c(V^i) \le c(r_i)+\lfloor \sqrt{B} \rfloor$, $r_i$ is the root of $T^i$, and $\bigcup_{i=1}^{N} V^i=T^*$.
Let $p(T') = \max\{ p(T^i ): i\in [N] \}$ and $w$ be the root of $T'$. 
The submodularity of $p$ implies $p(T^*) = p\left( \bigcup_{i=1}^N V(T^i)  \right) \leq N p(T'),$
which implies
\begin{align}\label{eq:main}
   \!\!\!\!\!\!\!\! p(T) \ge p(T_z) \ge p(S_w)  \ge (1-e^{-1}) p(S^*_w) \ge (1-e^{-1}) p(T') \ge \frac{1-e^{-1}}{N}p(T^*) \ge \frac{1-e^{-1}}{5\lfloor \sqrt{B} \rfloor}p(T^*),
\end{align}
where the first two inequalities hold by the definitions of $z$ and $S_w$ and by the monotonicity of function $p$; The Third inequality holds because $S_w$ is a $(1-e^{-1})$-approximate solution for instance $I^{RSM}_w$;
The fourth inequality holds as (i) $T'$ contains nodes at a distance no more than $c(w)+\lfloor\sqrt{B}\rfloor$ from $w$ and contains at most $1+\lfloor\sqrt{B}\rfloor$ nodes (since the minimum cost of a node is at least 1) and (ii) $p(S^*_w) = \max\{p(S): |S|\leq 1+\lfloor\sqrt{B}\rfloor \text{ and } dist(w,v)\leq c(w)+\lfloor\sqrt{B}\rfloor \text{, for all } v\in S\}$.


Before the trimming process in Lemma~\ref{lmBateniTrimmingProcessGeneral}, the ratio between the prize and the cost of $T$ is at least $\gamma= \frac{1-e^{-1}}{10\sqrt{B} B} p(T^*)$ as $c(T)\le 2B$. After applying the trimming process in Lemma~\ref{lmBateniTrimmingProcessGeneral} (with $h=2$) to $T$, the cost of $T$ is at most $(1+\epsilon)B$ and its prize-to-cost ratio is:
\[
\frac{p(T)}{c(T)}\geq\frac{\epsilon^2 \gamma}{64} = \alpha\frac{\epsilon^2}{\sqrt{B}B} p(T^*),
\]
where $\alpha=\frac{1-e^{-1}}{640}$. As $c(T) \ge \epsilon B/2$, we have $p(T) \geq  \frac{\alpha\epsilon^3}{2\sqrt{B}} p(T^*)$, which concludes the proof.
\end{proof}

\section{The unrooted version of \PName}\label{sec:unrooted}

Here we consider the unrooted version of \PName, denoted by \textbf{DUST}, in which the goal is to find an out-tree $T$ of $D$ such that $c(T)\le B$ and $p(T)$ is maximum. Note that $T$ can be rooted at any vertex. 
By guessing the root of an optimal solution, we can apply the algorithm in the previous section to obtain a bicriteria $(1+\epsilon, O(\frac{\sqrt{B}}{\epsilon^3}))$ approximation. We now show that \textbf{DUST} admits an $O(\sqrt{B})$-approximation algorithm with no budget violations. To do this, we first provide an unrooted version of Lemma~\ref{lmBateniTrimmingProcessGeneral} in which it is not necessary to violate the budget constraint when each vertex costs at most half of the budget. This trimming process follows the same procedure as that of Lemma~\ref{lmBateniTrimmingProcessGeneral}, but we include it for the sake of completeness.

\begin{restatable}{lemma}{TrimmingProcessGeneralUnrooted}\label{lmBateniTrimmingProcessGeneralUnrooted}
Let $T$ be an out-tree with the prize-to-cost ratio $\gamma=\frac{p(T)}{c(T)}$, where $p$ is a monotone submodular function. Suppose $\frac{B}{2}\le c(T) \le hB$, where $h \in (1, n]$ and the cost of each vertex is at most $\frac{B}{2}$. One can find an out-subtree $\hat{T} \subseteq T$ with the prize-to-cost ratio at least $\frac{ \gamma}{32h+8}$ such that $B/4 \le c(\hat{T}) \le B$.
\end{restatable}


\begin{proof}
In the initial step, we remove a strict out-subtree $T'$ of $T$ if (i) the prize-to-cost ratio of $T\setminus T'$ is at least $\gamma$, and (ii) $c(T\setminus T') \ge \frac{B}{4}$. This process is performed iteratively, until no such out-subtree exists. Let $T_{-}$ be the remaining out-subtree after applying this iterative process on $T$.

If $c(T_{-}) \le B$, the desired out-subtree is obtained and we are done. Suppose it is not the case. A full out-subtree $T'$ is called \emph{rich} if $c(T') \ge \frac{B}{4}$ and the prize-to-cost ratio of $T'$ and all its strict out-subtrees are at least $\gamma$. As in Lemma~\ref{lmBateniTrimmingProcessGeneral}, we claim that the lemma follows from the existence of a rich out-subtree.

\begin{claim}\label{clBateniRichGeneralUnrooted}
Given a rich out-subtree $T'$, the desired out-subtree $\hat{T}$ can be found.
\end{claim}

\begin{proof}
Let $T''$ be the lowest rich out-subtree of $T'$ such that the strict out-subtrees of $T''$ are not rich, i.e., $c(T'') \ge \frac{B}{4}$ while the cost of strict out-subtrees of $T''$ (if any exist) is less than $\frac{B}{4}$. Let $C$ be the total cost of the immediate out-subtrees of $T''$. We distinguish between two cases:

\begin{enumerate}
    \item If $C < \frac{B}{4}$, then $c(T'') \le \frac{3B}{4}$ as the root of $T''$ costs at most $\frac{B}{2}$. Since $T''$ has the prize-to-cost ratio at least $\gamma$ and cost at least $\frac{B}{4}$ (as it is rich), $\hat{T}=T''$ is the desired out-subtree.
    
    \item If $C \ge \frac{B}{4}$, we first group the immediate out-subtrees of $T''$ into $M$ groups $S_1, \dots, S_M$ in such a way that for each $i\in [M-1]$ the total cost of immediate out-subtrees in $S_i$ is at least $\frac{B}{4}$, and for each $i\in[M]$ it is at most $\frac{B}{2}$. As $c(T_{-}) \le hB$, we have 
    \[
    M \le \left\lceil \frac{hB}{B/4} \right\rceil= \left\lceil 4h \right\rceil \le 4h+1.
    \]
   For each $i\in [M]$, let $S'_i=S_i \cup \{r''\}$, where $r''$ is the root of $T''$. Let $z=\arg\max_{i \in [M]} p(S'_i)$. Hence by the submodularity and monotonicity of $p$, we have 
   
    \[
    p(S'_{z}) \ge \frac{\sum_{i=1}^{M} p(S'_i)}{4h+1} \ge \frac{p(S'_1) + \sum_{i=2}^{M}p(S_i)}{4h+1}\ge \frac{p(S'_1 \cup\bigcup_{i=2}^{M} S_i)}{4h+1} = \frac{p(T'')}{4h+1} \ge \frac{\gamma}{16h+4}B,
    \]
   
   where the last inequality holds as $p(T'') \ge \gamma\frac{B}{4}$ (since $T''$ is rich). 
   
   In case $z=M$ and $c(S'_M) <\frac{B}{4}$, we select a subset of immediate out-subtrees from $\bigcup_{i=1}^{M-1}S_i$ with the total cost of at least $\frac{B}{4}$ and at most $\frac{B}{2}-c(S_M)$, and add it to $S_z$.

   Let $\hat{T}$ be the union of $r''$, the edges from $r''$ to the roots of the out-subtrees in $S_{z}$, and $S_{z}$. The cost of $\hat{T}$ is at most $B$. Hence the prize-to-cost ratio of $\hat{T}$ is at least $\frac{\gamma}{16h+4} \ge \frac{\gamma}{32h+8}$.\qedhere
\end{enumerate}
\end{proof}

It only remains to consider the case when there is no rich out-subtree. Since $T_{-}$ is not rich and $c(T_{-}) \ge \frac{B}{4}$, the ratio of at least one of the strict out-subtrees of $T_{-}$ is less than $\gamma$. Now we find an out-subtree $T'$ with ratio less than $\gamma$ such that the ratio of all of its strict out-subtrees (if any exist) is at least $\gamma$. Since the ratio of $T'$ is less than $\gamma$ and $T'$ is not removed in the initial process, $c(T_-\setminus T') < \frac{B}{4}$ (this can be shown by the same argument as that of Claim~\ref{clLowCost}). As $c(T_{-})> B$ and the cost of the root of $T'$ is at most $\frac{B}{2}$, the total cost of the immediate out-subtrees of $T'$ is at least $\frac{B}{4}$. Also, the cost of an immediate out-subtree of $T'$ is less than $\frac{B}{4}$, otherwise we have a rich out-subtree. As the ratio and cost of $T_{-}$ are at least $\gamma$ and $\frac{B}{4}$, respectively, then $p(T_{-}) \ge \frac{\gamma}{4}B$. We distinguish between two cases.

\begin{enumerate}
    \item  If $p(T') \ge \frac{\gamma}{8}B$, by the similar reasoning as above, we partition the immediate out-subtrees of $T'$ into $M$ groups $S_1, \dots, S_M$ in such a way that for each $i\in [M-1]$ the total cost of immediate out-subtrees in $S_i$ is at least $\frac{B}{4}$, and for each $i\in[M]$ it is at most $\frac{B}{2}$. For each $i\in [M]$, let $S'_i=S_i \cup \{r'\}$ where $r'$ is the root of $T'$. Let $z=\arg\max_{i \in [M]} p(S'_i)$. As $M \le 4h+1$, by the submodularity and monotonicity of $p$ we have: 
     \[
    p(S'_{z}) \ge \frac{\sum_{i=1}^{M} p(S'_i)}{4h+1} \ge \frac{p(S'_1) + \sum_{i=2}^{M}p(S_i)}{4h+1}\ge \frac{p(S'_1 \cup\bigcup_{i=2}^{M} S_i)}{4h+1} =\frac{p(T')}{4h+1} \ge \frac{\gamma}{32h+8}B,
    \]
    where the last inequality holds as $p(T') \ge \frac{\gamma}{8} B$. 
    
    Note that in case $z=M$ and $c(S'_M) <\frac{B}{4}$, we select a subset of immediate out-subtrees from $\bigcup_{i=1}^{M-1}S_i$ with the total cost of at least $\frac{B}{4}$ and at most $\frac{B}{2}-c(S_M)$, and add it to $S_z$. 
    
    Let $\hat{T}$ be the union of $r'$, the edges from $r'$ to the roots of the out-subtrees in $S_{z}$ and $S_{z}$. The cost of $\hat{T}$ is at most $B$ and its prize-to-cost ratio is at least $\frac{\gamma}{32h+8}$.
    
    \item If $p(T') < \frac{\gamma}{8}B$, we proceed as follows. Consider the out-subtree $T''=T_{-}\setminus T'$. 
    Recall that by the above discussion we have $c(T'') <\frac{B}{4}$. Thus we find a subset $S$ of the immediate out-subtrees of $T'$ with cost between $\frac{B}{4}-c(T'')\le  c(S) \le \frac{B}{2}-c(T'')$. Note that such set $S$ can be found as each immediate out-subtree of $T'$ costs less than $\frac{B}{4}$ (otherwise we have a rich subtree) and $c(T'') > \frac{3 B}{4}$ (as $c(T_{-}) > B$ and $c(T'') <\frac{B}{4}$). Then let $\hat{T}$ be the union of $T''$, the edge from $T''$ to $r'$ in $T_{-}$, $S$, and the edges from $r'$ to the roots of the out-subtrees in $S$, where $r'$ is the root of $T'$. We now bound the prize-to-cost ratio of $\hat{T}$. Recall that $T''=T_{-}\setminus T'$. First note that by the submodularity' properties $p(T'')+p(T') \ge f(T_{-})$. Thus by the case assumption and monotonicity, we have $f(\hat{T}) \ge p(T'') \ge \frac{\gamma}{8}B$. Since $\frac{B}{4}B-c(T'') \le c(S) \le \frac{B}{2}-c(T'')$ and $c(r') \le \frac{B}{2}$, $c(\hat{T}) \le B$. Therefore, the prize-to-cost ratio of $\hat{T}$ is at least $\frac{\gamma}{8} \ge \frac{\gamma}{32h+8}$.
\end{enumerate}
The proof is complete.
\end{proof}

\begin{theorem}\label{thMainUndirectedEW-DRBP}
\textbf{DUST} admits a polynomial-time $O(\sqrt{B})$-approximation algorithm.
\end{theorem}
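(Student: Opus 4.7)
The plan is to mirror the algorithm and analysis of Theorem~\ref{thMain} while exploiting two simplifications afforded by the unrooted setting. First, because there is no designated root, we do not need to prepend a shortest path to $T_z$, so $T=T_z$ has cost at most $c(z)+B$ instead of $2B$. Second, the unrooted trimming of Lemma~\ref{lmBateniTrimmingProcessGeneralUnrooted} will replace that of Lemma~\ref{lmBateniTrimmingProcessGeneral}, eliminating any budget violation in exchange for the hypothesis that every vertex costs at most $B/2$. Together, these changes will drop the $(1+\varepsilon)$ violation and the $\varepsilon^{-3}$ factor of Theorem~\ref{thMain}.

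Concretely, for each candidate center $u\in V$ with $c(u)\le B$ the algorithm will compute $V_u$, a $(1-e^{-1})$-approximate prize-maximal subset $S_u\subseteq V_u$ of size at most $\lfloor\sqrt{B}\rfloor+1$ containing $u$ via \textbf{Greedy}, and the inclusion-minimal out-tree $T_u$ spanning $S_u$, exactly as in Algorithm~\ref{algNW-DRBP}. Setting $z=\arg\max_u p(T_u)$ and $T=T_z$, and assuming temporarily that each vertex has cost at most $B/2$, we will have $c(T)\le 3B/2$; Lemma~\ref{lmBateniTrimmingProcessGeneralUnrooted} with $h=3/2$ will then produce $\hat{T}\subseteq T$ with $B/4\le c(\hat{T})\le B$ and ratio $p(\hat{T})/c(\hat{T})\ge (p(T)/c(T))/56$, hence $p(\hat{T})=\Omega(p(T))$. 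Vertices $v$ with $c(v)>B/2$ will be handled separately via the best-singleton candidate $\{v^*\}=\arg\max\{p(\{v\}):c(v)\le B\}$, and the algorithm will return the better of $\{v^*\}$ and $\hat{T}$.

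For the analysis, applying Lemma~\ref{lmClaimKuoExtension} with $m=\lfloor\sqrt{B}\rfloor$ to an optimal solution $T^*$ will yield $N\le 5\lfloor\sqrt{B}\rfloor$ out-subtrees covering $V(T^*)$, each of at most $\lfloor\sqrt{B}\rfloor+1$ vertices. By submodularity, a best subtree $T^{i^*}$ will satisfy $p(T^{i^*})\ge p(T^*)/N$. As in Theorem~\ref{thMain}, running the algorithm at the root $w$ of $T^{i^*}$ will give $p(T_w)\ge(1-e^{-1})p(T^{i^*})$ because $V(T^{i^*})\subseteq V_w$ and $|V(T^{i^*})|\le\lfloor\sqrt{B}\rfloor+1$; whence $p(\hat{T})\ge p(T_w)/O(1)=\Omega(p(T^*)/\sqrt{B})$.

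The hard part will be ensuring that the trimming works (or is bypassed) when the relevant center $w$ or vertices of $T_w$ are heavy. Since $c(T^*)\le B$, the optimum contains at most one heavy vertex $v$; via the submodularity inequality $p(\{v\})+p(V(T^*)\setminus\{v\})\ge p(T^*)$ one concludes that either $\{v^*\}$ already captures $\Omega(p(T^*))$ prize, or the light portion of $T^*$ does and can be recovered by running the main algorithm on $D[V']$ with $V'=\{v:c(v)\le B/2\}$. Making the latter case precise, and in particular verifying the distance condition $V(T^{i^*})\subseteq V_w$ still holds for a light root $w$ once $v$ has been removed (which follows from the bound $c(T^{i^*})-c(v)\le\lfloor\sqrt{B}\rfloor$ implied by Lemma~\ref{lmClaimKuoExtension}), is the main technical step of the argument.
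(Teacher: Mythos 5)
Your algorithmic skeleton (decompose the optimum via Lemma~\ref{lmClaimKuoExtension}, find the best local ball $T_z$, trim via Lemma~\ref{lmBateniTrimmingProcessGeneralUnrooted}) is the same as the paper's, and the flat case works exactly as you describe. The gap is in your treatment of a heavy vertex $v$ with $c(v)>B/2$. You propose the dichotomy ``either $p(\{v^*\})\ge p(T^*)/2$ or $p(V(T^*)\setminus\{v\})\ge p(T^*)/2$, and in the second case run the algorithm on $D[V']$''; but $V(T^*)\setminus\{v\}$ need not be an out-tree in $D[V']$. If $v$ is an internal node of $T^*$ of out-degree $k$ (plus possibly a parent), deleting $v$ shatters $T^*$ into $k{+}1$ out-subtrees $C_0,\dots,C_k$, and submodularity only gives $p(\{v\})+\sum_i p(C_i)\ge p(T^*)$. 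The best single component satisfies $p(C_j)\ge p(T^*)/(2(k{+}1))$, and $k$ can be as large as $\Theta(B)$, so there is no out-tree in $D[V']$ that you can point to with prize $\Omega(p(T^*))$ — the benchmark for the algorithm on $D[V']$ simply does not exist. Worse, even restricting to a single piece $T^{i^*}$ from the decomposition lemma, if $v$ is a cut vertex of $T^{i^*}$ (e.g.\ $v$ is the root with several children $u_1,\dots,u_k$), removing $v$ from the graph destroys all the short paths among the $u_i$'s, so the distance condition $V(T^{i^*})\setminus\{v\}\subseteq V_w$ for a light $w$ also fails; the bound $c(T^{i^*})-c(v)\le\lfloor\sqrt B\rfloor$ does not rescue this because it bounds total cost, not connectivity in the pruned graph.

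The paper resolves this by splitting $T^*$ along an \emph{edge}, not a vertex, into a \emph{flat} piece (all costs $\le B/2$) and a \emph{saddled} piece (one heavy vertex $x$, everything else $\le (B-c(x))/2$), each still a tree, so submodularity gives that one of the two has half the prize (Claim~\ref{clFlatSaddeledBateniGeneralize}). Then the saddled case is \emph{reduced} to the flat case by guessing $x$, setting $c(x)\coloneqq 0$ and the budget to $B-c(x)$: crucially, $x$ stays in the graph as a zero-cost connector, so the algorithm can still route through it and the distance/connectivity arguments for the flat analysis apply to the modified instance. This ``keep $x$ at cost $0$'' step is what is missing from your plan; discarding $x$ entirely and falling back on the best singleton cannot recover the prize that flows through $x$.
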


\begin{proof}

We follow arguments similar to those in Theorem 4 from Bateni et al.~\cite{bateni2018improved}, but for the sake of completeness the proof is provided here.

An out-tree is called \textit{flat} if each vertex of the out-tree costs no more than $\frac{B}{2}$. Let $x$ be a vertex of an out-tree with the largest cost. An out-tree is called \textit{saddled} if $c(x) > \frac{B}{2}$ and the cost of every other vertex of the out-tree is no more than $\frac{B-c(x)}{2}$. Let $T^*_f$ (resp. $T^*_s$) be the optimal flat (resp. saddled) out-tree, i.e, a flat (resp. saddled) out-tree with cost at most $B$ maximizing the prize. 
We first show that given an optimal solution $T^*$ to \textbf{DUST}, then either $p(T^*_f) \ge \frac{p(T^*)}{2}$ or $p(T^*_s) \ge \frac{p(T^*)}{2}$. 

\begin{restatable}{claim}{FlatSaddeledBateniGeneralize}\label{clFlatSaddeledBateniGeneralize}
Either $p(T^*_f) \ge \frac{p(T^*)}{2}$ or $p(T^*_s) \ge \frac{p(T^*)}{2}$, where $T^*$ is an optimal solution to \textbf{DUST}.
\end{restatable}

\begin{proof}
If $T^*$ has only one vertex, then it is either flat or saddled and we are done. If $T^*$ has more than one vertex and it is neither flat nor saddled, then we proceed as follows. Let $x$ and $y$ be two vertices in $T^*$ with the maximum cost and the second maximum cost, respectively. Since $T^*$ is not flat then $c(x) > \frac{B}{2}$ and $c(y) \le \frac{B}{2}$. Also as $T^*$ is not saddled, $c(y) > \frac{B-c(x)}{2}$, and, since the cost of $T^*$ is at most $B$, $y$ is the only node with a cost higher than $\frac{B-c(x)}{2}$. By removing the edge $e$ adjacent to $y$ on the path between $x$ and $y$, we can partition $T^*$ into two out-subtrees $T^*_x$ and $T^*_y$ that contain $x$ and $y$, respectively. Clearly, each vertex in $T^*_y$ costs no more than $\frac{B}{2}$, then $T^*_y$ is flat. Also, each vertex in $T^*_x$ except $x$ costs at most $\frac{B-c(x)}{2}$, implying that $T^*_x$ is saddled. By the submodularity of $p$, $p(T^*_x)+p(T^*_y) \ge p(T^*)$, meaning that one of $T^*_x$ and $T^*_y$ has at least half of the optimum prize $p(T^*)$, which concludes the claim.
\end{proof}

Now we restrict Algorithm~\ref{algNW-DRBP} to only flat and saddled out-trees. Indeed, we can reduce the case of saddled out-trees to flat out-trees as follows. We first find a vertex $x$ with the maximum cost. We then set the cost of $x$ to zero and define a new budget $B'=B-c(x)$. Note that the cost of any other vertex in the optimal saddled out-tree $T^*_s$ is at most half of the remaining budget. This means that we only need to find an approximation solution when restricted to flat out-trees.

Since for the new instance no other vertex except $x$ with cost more than $\frac{B}{2}$ can be contained in the final solution, we remove all vertices with cost more than $\frac{B}{2}$ and run Lines 1-8 of Algorithm~\ref{algNW-DRBP} on the new resulting graph to achieve an out-tree $T$ with cost $c(T)\le 2B$ (as $c(T \setminus \{z\})=B$ and $c(z) \le B$) and prize $p(T) \ge \frac{1-e^{-1}}{5\sqrt{B}}p(T^*_f) \ge \frac{1-e^{-1}}{10\sqrt{B}}p(T^*)$. So, the prize-to-cost ratio of $T$ is $\gamma\geq\frac{p(T)}{2B}$.
As $T$ is flat, we can apply Lemma~\ref{lmBateniTrimmingProcessGeneralUnrooted} to achieve an out-subtree $\hat{T}$ of $T$ with the cost $B/4 \le c(\hat{T}) \le B$ and the prize-to-cost ratio $\frac{p(\hat T)}{c(\hat T)}\ge\frac{\gamma}{32 h+8}=\frac{\gamma}{72}$ as $h \le 2$. This implies that 
\[
p(\hat{T})\ge\frac{\gamma}{72} \cdot \frac{B}{4}=\frac{\gamma}{288}B\geq\frac{p(T)}{576}\geq \frac{1-e^{-1}}{5760\sqrt{B}}p(T^*).\qedhere
\]
\end{proof}

\section{Submodular Tree Orienteering}\label{sec:STO}

Recently, Ghuge and Nagarajan~\cite{ghuge2020quasi} studied the \emph{Submodular Tree Orienteering} problem (\textbf{STO}), which is similar to \PName with the only difference that the costs are associated to the edges of a directed graph instead of the nodes. In particular, in \textbf{STO}, we are given a directed graph $D=(V, A)$, a vertex $r \in V$, a budget $B$, a monotone submodular function $p: 2^V \rightarrow \mathbb{R}^+$, and a cost $c:A \rightarrow \mathbb{Z}^+$, and the goal is to find an out-subtree $T$ of $D$ rooted at $r$  such that $ \sum_{e\in A(T)}c(e) \le B$ and $p(T)=p(V(T))$ is maximum. Ghuge and Nagarajan~\cite{ghuge2020quasi} proposed an $O\left(\frac{\log k}{\log \log k}\right)$-approximation
algorithm for \textbf{STO} that runs in $(n \log B)^{O(\log^{1+\epsilon} k)}$ time for any constant $\epsilon > 0$, where $k \le |V|$ is the number of vertices in an optimal solution.

Here we first show that \PName can be reduced to \textbf{STO}, preserving the approximation factor, by assigning the cost of each node $v$ to all edges entering $v$. 

\begin{restatable}{theorem}{GhugeNagarajanNWDRBP}\label{thGhugeNagarajanNW-DRBP}
There is an $O(\frac{\log k}{\log \log k})$-approximation
algorithm for \PName that runs in
$(n \log B)^{O(\log^{1+\epsilon} k)}$ time for any constant $\epsilon > 0$, where $k \le n=|V|$ is the number of vertices in an optimal solution.
\end{restatable}
\begin{proof}
To prove the theorem, we show that one can transform an instance $J=\langle D'= (V', A'), p', c', r', B' \rangle$ of \PName to an instance $I=\langle D= (V, A), p, c, r, B \rangle$ of \textbf{STO} as follows. We set $V=V'$, $r=r'$, $A=A'\setminus \{(v, r')| (v, r') \in A'\}$, $B=B'-c'(r')$ and for any subset $S \subseteq V$, $p(S)=p'(S)$. For any $e=(i,j) \in A$ in $I$, we set $c(e)=c'(j)$. The theorem follows by observing that any out-subtree $T$ of $D$ is an out-subtree for $D'$, $c'(T)=\sum_{v \in V(T)} c'(v)=\sum_{e=(u,v) \in A(T)}c(e) + c'(r')=c(T) +c'(r')$, and $p'(T)=p(T)$.
%
\end{proof}

Moreover, we show that Algorithm~\ref{algNW-DRBP} can be used to approximate \textbf{STO}. To our knowledge, this is the first polynomial-time bicriteria approximation algorithm for \textbf{STO}.

\begin{theorem}\label{thSTOAlgorithm1}
There exists a polynomial-time bicriteria $(1+\epsilon, O(\frac{\sqrt{B}}{\epsilon^3}))$-approximation algorithm for \textbf{STO}, for any $\epsilon \in (0,1]$.
\end{theorem}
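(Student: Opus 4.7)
The plan is to mirror the proof of Theorem~\ref{thMain} after adapting Algorithm~\ref{algNW-DRBP} and the two auxiliary lemmas, Lemma~\ref{lmClaimKuoExtension} and Lemma~\ref{lmBateniTrimmingProcessGeneral}, to the edge-cost setting. The adapted algorithm first preprocesses $D$ so that every remaining vertex $v$ satisfies $dist(r,v)\le B$, where distance is measured with edge costs; then, for each vertex $u$, it defines $V_u:=\{v : dist(u,v)\le \lfloor\sqrt{B}\rfloor\}$ (the $c(u)$ term from the node-cost version disappears because vertices carry no cost), solves an \textbf{RSM} instance on $V_u$ with specific element $u$ and budget $\lfloor\sqrt{B}\rfloor+1$ via \textbf{Greedy} to obtain $S_u$, and computes $T_u$ as a minimal inclusion-wise out-tree rooted at $u$ spanning $S_u$. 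Taking shortest paths from $u$ to the at most $\lfloor\sqrt{B}\rfloor$ non-root elements of $S_u$ and using that edges have positive integer cost yields $c(T_u)\le \lfloor\sqrt{B}\rfloor^2\le B$. The algorithm then picks $z:=\arg\max_u p(T_u)$, forms $T:=T_z\cup P$ for a shortest path $P$ from $r$ to $z$ with the same clean-up of redundant incoming edges as before, and finally applies the edge-cost trimming procedure.

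The next step is to adapt Lemma~\ref{lmClaimKuoExtension} to edge costs: every out-tree $\hat{T}$ of edge cost $c(\hat{T})$ admits, for any $m\le c(\hat{T})$, a collection of at most $5\lfloor c(\hat{T})/m\rfloor$ out-subtrees of edge cost at most $m$ whose vertex sets cover $V(\hat{T})$. The procedure \textbf{Proc} is unchanged once infeasibility is redefined as edge cost greater than $m$; the regrouping step that rebuilds out-subtrees rooted at a common $v_i$ from children whose full out-subtrees are cheap still yields per-group edge cost at most $m$, with the edges from $v_i$ to the grouped children accounted for in the same group. I would similarly extend Lemma~\ref{lmBateniTrimmingProcessGeneral}: its statement and proof only manipulate sub-structure costs, shortest-path costs from $r$, and vertex prizes, all of which transfer directly to the edge-cost setting once $B$-appropriateness is interpreted with edge-cost distances and $p$ remains a monotone submodular function on subsets of vertices.

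Given these adaptations, the analysis is essentially a transcription of the proof of Theorem~\ref{thMain}. Applying the edge-cost decomposition with $m=\lfloor\sqrt{B}\rfloor$ to an optimal solution $T^*$ (of edge cost at most $B$) produces $N=O(\sqrt{B})$ out-subtrees of edge cost at most $\sqrt{B}$, and submodularity of $p$ yields an out-subtree $T'$ rooted at some $w$ with $p(T')\ge p(T^*)/N$. Since $V(T')\subseteq V_w$ and $|V(T')|\le \lfloor\sqrt{B}\rfloor+1$ (by positive integer edge costs), the greedy solution satisfies $p(T_z)\ge p(T_w)\ge p(S_w)\ge (1-e^{-1})p(T')$, so $p(T)\ge \Omega(p(T^*)/\sqrt{B})$ while $c(T)\le 2B$. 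The edge-cost trimming with $h=2$ and parameter $\epsilon$ then converts $T$ into an out-subtree of edge cost at most $(1+\epsilon)B$ and prize at least $\Omega(\epsilon^3 p(T^*)/\sqrt{B})$, proving the theorem.

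The main obstacle I anticipate is the bookkeeping in the edge-cost version of the trimming lemma: in particular, ensuring that the attaching edge from the root $r''$ (or $r'$) of the focal out-subtree to the selected group of immediate sub-trees is consistently charged to the $\epsilon B$ budget of that group so that the final $(1+\epsilon)B$ cost bound is respected. Every other component of the proof is a routine rewriting of the corresponding step in Theorem~\ref{thMain}.
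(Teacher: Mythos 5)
Your proposal takes a genuinely different route from the paper. The paper reduces \textbf{STO} to \PName by subdividing each edge $e$ into a new vertex $v_e$ of node cost $c_S(e)$, giving every original vertex cost $0$, and then running Algorithm~\ref{algNW-DRBP} and invoking Lemmas~\ref{lmClaimKuoExtension} and~\ref{lmBateniTrimmingProcessGeneral} verbatim on the transformed instance; the only nontrivial extra work is a counting argument that the prize-carrying vertices ($V_S$) in a decomposition piece number at most $\lfloor\sqrt{B}\rfloor+1$. You instead propose to re-prove edge-cost versions of both lemmas and re-derive the algorithm in the edge-cost vocabulary. Both routes lead to the same bound, but the paper's reduction lets it reuse the existing node-cost machinery without modification, whereas yours requires re-examining every cost manipulation.

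The one place your proposal does not hold up as ``routine rewriting'' is exactly the spot you flag at the end. In the node-cost trimming proof, the key fact used for the grouping step is that every immediate out-subtree of $T''$ (equivalently, every strict out-subtree of $T''$) is ``not rich,'' hence has cost strictly below $\tfrac{\epsilon}{2}B$; only then can one partition them into groups of total cost in $[\tfrac{\epsilon}{2}B,\epsilon B]$. If you transcribe this to edge costs with the naive definition $c(T')=c(A(T'))$, the not-rich bound controls only the edges \emph{inside} the immediate out-subtree $I_u$, not the attaching edge $(r'',u)$; a single heavy edge $(r'',u)$ can blow up the cost of any group that contains $u$, and then the final bound $c(\hat T)\le (1+\epsilon)B$ fails. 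The same issue propagates through Claim~\ref{clLowCost}: the identity $c(T_-\setminus T')=c(T_-)-c(T')$ is false for edge costs unless you also remove the edge entering the root of $T'$. The consistent fix is to define the cost of a (strict or immediate) out-subtree $T'$ rooted at $v$ \emph{inside} a host tree to be $c(A(T'))$ \emph{plus} the cost of the unique edge of the host tree entering $v$; with this ``closed'' cost the removal identity, the not-rich bound, and the grouping all go through unchanged. This is not mere bookkeeping --- it is the one conceptual adjustment that makes the direct adaptation sound, and it deserves to be stated explicitly as a modified definition rather than left as ``ensuring the attaching edge is consistently charged.'' Note that the paper's subdivision trick produces exactly this closed cost automatically: the heavy edge becomes a vertex $v_e$ that is itself the root of the corresponding immediate out-subtree, so its cost is already inside the not-rich bound. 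Your adaptation of Lemma~\ref{lmClaimKuoExtension} needs the same care (the feasibility threshold for a full out-subtree should include its incoming edge), but you already indicate the right accounting there, so that part is fine.

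One small additional point: you should double-check that in the edge-cost analogue of inequality~\eqref{eq:main} the piece $T'$ returned by the decomposition lemma has at most $\lfloor\sqrt{B}\rfloor+1$ vertices. This follows because the closed edge cost of $T'$ is at most $\lfloor\sqrt{B}\rfloor$ and each edge costs at least $1$, so $T'$ has at most $\lfloor\sqrt{B}\rfloor$ edges; you state this, and it is correct, but it is worth noting that this is where positivity and integrality of edge costs (rather than vertex costs) is used, replacing the ``minimum cost of a node is at least $1$'' observation in the proof of Theorem~\ref{thMain}.
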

\begin{proof}
We first transform an instance $I_S=\langle D_S= (V_S, A_S), p_S, c_S, r, B \rangle$ of \textbf{STO} to an instance $I_D=\langle D_D= (V_D, A_D), p_D, c_D, r, B \rangle$ of \PName, where  $V_D = V_S \cup V_{A}$, $V_{A} = \{v_e : e \in A_S\}$, $A_D = \{(i, v_e), (v_e, j):  e=(i,j)\in A_S\}$, $p_D(S)=p_S(S\cap V_S)$, for each $S\subseteq V_D$, $c_D(v)=0$ for each $v\in V_S$, and $c_D(v_e)=c_S(e)$ for each $v_e\in V_A$.


Let $T^*_S$ be an optimal solution for $I_S$ and let $T^*_D$ be the out-subtree of $D'$ corresponding to $T^*_S$ (i.e. $V(T^*_D) = V(T^*_S) \cup \{ v_e : e \in A(T^*_S) \}$, $A(T^*_D) = \{ (i, v_e), (v_e, j):  e=(i,j)\in A(T^*_S) \}$). We observe that $p_D(T^*_D) = p_S(T^*_S)$, $c_D(T^*_D) = c_S(T^*_S)$, and $T^*_D$ is an optimal solution for $I_D$, since if there exists an out-subtree $\bar{T}_D$ of $D_D$ with $p_D(\bar{T}_D) > p_D(T^*_D)$, then we can construct an out-subtree $\bar{T}_S=(V(\bar{T}_D)\cap V_S, \{e\in A_S:v_e\in V(\bar{T}_D)\cap V_A\})$ of $D_S$ such that $p_S(\bar{T}_S)  >  p_S(T^*_S)$. 

We decompose $T^*_D$ as in Lemma~\ref{lmClaimKuoExtension},\footnote{Note that the Lemma~\ref{lmBateniTrimmingProcessGeneral} and~\ref{lmClaimKuoExtension} hold even if node costs are allowed to be equal to 0.} with $m=\lfloor \sqrt{B}\rfloor$; let $T'_D$ be the out-subtree that maximizes the prize among those returned by the lemma, and let $w$ be the root of $T'_D$. We have that $p_D(T'_D)\ge \frac{1}{5\lfloor \sqrt{B} \rfloor}p_D(T^*_D)$ and $c(T'_D) \leq c(w)+ \lfloor\sqrt{B}\rfloor$.
It follows that the distance from $w$ to any other node in $T'_D$ is at most $c(w)+\lfloor\sqrt{B}\rfloor$.

We now show that $|V(T'_D)\cap V_S|\leq  \lfloor\sqrt{B}\rfloor +1$. Since the cost of nodes in $V_S$ is equal to 0, then $c((V(T'_D)\cap V_A)\setminus\{w\}) = c(V(T'_D)\setminus\{w\}) \leq \lfloor\sqrt{B}\rfloor$. Therefore, as the cost of each edge in $A_S$ is at least 1, $|(V(T'_D)\cap V_A)\setminus\{w\} |\leq \lfloor\sqrt{B}\rfloor$.
For every node in $(V(T'_D)\cap V_S) \setminus \{w\}$, there exists a distinct node in $V(T'_D)\cap V_A$, which means that $|(V(T'_D)\cap V_S) \setminus \{w\}|=|V(T'_D)\cap V_A|$. If $w\in V_S$, then $|V(T'_D)\cap V_A| = |(V(T'_D)\cap V_A)\setminus\{w\} |\leq \lfloor\sqrt{B}\rfloor$. If $w\in V_A$, then $|(V(T'_D)\cap V_A) |\leq \lfloor\sqrt{B}\rfloor +1$. In both cases $|V(T'_D)\cap V_S|\leq  \lfloor\sqrt{B}\rfloor +1$.


Let $T_D$ be the output of lines 1-11 of Algorithm~\ref{algNW-DRBP} for instance $I_D$. 
We have that 
\begin{align*}
    p_D(T_D) \ge (1-e^{-1}) p_D(S^*_w) \ge (1-e^{-1}) p_D(T'_D) \ge \frac{1-e^{-1}}{5\lfloor \sqrt{B} \rfloor}p_D(T^*_D) = \frac{1-e^{-1}}{5\lfloor \sqrt{B} \rfloor}p_S(T^*_S),
\end{align*}
where the second inequality is due to the fact that (i) $T'_D$ contains nodes at a distance no more than $c(w)+\lfloor\sqrt{B}\rfloor$ from $w$ and contains at most $\lfloor\sqrt{B}\rfloor +1$ nodes in $V_S$, and (ii) $p_D(S) = p_D(S\cap V_S)$, for each $S\subseteq V_D$, and therefore $p_D(S^*_w)  = \max\{p_D(S): |S\cap V_S|\leq \lfloor\sqrt{B}\rfloor +1\text{ and } dist(w,v)\leq c(w)+\lfloor\sqrt{B}\rfloor \text{, for all } v\in S\}$. The other inequalities are analogous to those in~\eqref{eq:main}.

The cost of $T_D$ is at most $2B$, as in Theorem~\ref{thMain} we can trim $T_D$ to reduce its cost to $(1+\epsilon)B$ and maintaining a prize of $p_D(T_D)=\frac{\alpha\epsilon^2}{\sqrt{B}}p(T^*_S)$, for some constant $\alpha$ and any arbitrary $\epsilon>0$.

Let us consider the out-subtree $T_S$ of $D_S$ corresponding to $T_D$,  $T_S=(V(T_D)\cap V_S, \{e\in A_S:v_e\in V(T_D)\cap V_A\})$, then $p_S(T_S)=p_D(T_D)$ and $c_S(T_S)=c_D(T_D)$, which concludes the proof.
\end{proof}
\section{Further Results on Some Variants of \PName}\label{sec:variants}
In this section, we provide approximation results on some variants of \PName. All the details are given in Appendix~\ref{apx:variants}. 

\textbf{Additive prize function and Directed Tree Orienteering (\textbf{DTO})}.
We consider the special case of \PName in which the prize function is additive, i.e., for any $S \subseteq V$, $p(S)=\sum_{v \in S} p(\{v\})$, called \textbf{DRAT}.
We show that there exists a polynomial-time bicriteria $(1+\epsilon, O(\sqrt{B}/\epsilon^2))$-approximation algorithm for \textbf{DRAT} (Theorem~\ref{thNWDRBAlgorithm1}).
By using the reduction in Theorem~\ref{thSTOAlgorithm1}, it follows that this result also holds for \textbf{DTO}, which is the special case of \textbf{STO} in which the prize function is additive~\cite{ghuge2020quasi}.


\textbf{Undirected graphs}.
All our results hold also in the case in which the input graph is undirected and the output graph is a tree.
In particular, our $O(\sqrt{B})$-approximation algorithm for the unrooted case improves over the factors $O((\Delta+1) \sqrt{B})$~\cite{kuo2014maximizing} and $\min\{1/((1-1/e)(1/R-1/B)), B\}$~\cite{hochbaum2020approximation}, where $R$ is the radius of the input graph $G$.
For the case in which the prize function is additive, we show that there exists a polynomial-time bicriteria approximation algorithm whose approximation factor only depends on the the maximum degree $\Delta$ of the given graph. In particular, it guarantees a bicriteria $(1+\epsilon, 16 \Delta/\epsilon^2)$-approximation (Theorem~\ref{thRBPDegree}).


\textbf{Quota problem}. We consider the problem in which we are given an undirected graph $G=(V,E)$, a cost function $c:V \rightarrow \mathbb{R}^+$, a prize function $p:V \rightarrow \mathbb{R}^+$, a quota $Q\in \mathbb{R}^+$, and a vertex $r$, and the goal is to find a tree $T$ such that $p(T) \ge Q$, $r \in V(T)$ and $c(T)$ is minimum. We prove that this problem admits a $2\Delta$-approximation algorithm (Theorem~\ref{thNW-RQPDegree}).

\textbf{Maximum Weighted Budgeted Connected Set Cover (\textbf{MWBCSC})}. Let $X$ be a set of elements, $\mathcal{S} \subseteq 2^X$ be a collection of sets, $p: X \rightarrow \mathbb{R}^+$ be a prize function, $c: \mathcal{S} \rightarrow \mathcal{R}^+$ be a cost function, $G_{\mathcal{S}}$ be a graph on vertex set $\mathcal{S}$, and $B$ be a budget. In \textbf{MWBCSC}, the goal is to find a subcollection $\mathcal{S}'\subseteq \mathcal{S}$ such that $c(\mathcal{S}')=\sum_{S \in \mathcal{S}'}c(S) \le B$, the subgraph induced by $\mathcal{S}'$ is connected and $p(\mathcal{S}')=\sum_{x \in X_{\mathcal{S}'}} p(x)$ is maximum, where $X_{\mathcal{S}'}=\bigcup_{S \in \mathcal{S}'} S$.
We show that \textbf{MWBCSC} admits a polynomial-time $\alpha f$-approximation algorithm, where $f$ is the maximum frequency of an element and $\alpha$ is the performance ratio of an algorithm for the unrooted version of \textbf{MCSB} with additive prize function (Theorem~\ref{thMWBCSCGeneral}).
Moreover, one can have a polynomial-time $O(\log{n})$-approximation algorithm for \textbf{MWBCSC} under the assumption that if two sets have an element in common, then they are adjacent in $G_{\mathcal{S}}$ (Corollary~\ref{assumAdjacencyCMCP}). This last result is an improvement over the factor $2(\Delta+1)\alpha/(1-e^{-1})$ by Ran et al.~\cite{ran2016approximation}.

\textbf{Budgeted Sensor Cover Problem (\textbf{BSCP})}. In \textbf{BSCP}, we are given a set $\mathcal{S}$ of sensors, a set $\mathcal{P}$ of target points in a metric space, a sensing range $R_s$, a communication range $R_c$, and a budget $B$. A target point is covered by a sensor if it is within distance $R_s$ from it. Two sensors are connected if they are at a distance at most $R_c$.
The goal is to find a subset $\mathcal{S}' \subseteq \mathcal{S}$ such that $|\mathcal{S}'| \le B$, the number of covered target points by $\mathcal{S}'$ is maximized and $\mathcal{S}'$ induces a connected subgraph. 
We give a $2f$-approximation algorithm for \textbf{BSCP} (Theorem~\ref{thMWBCSCwithCost1}), where $f$ is the maximum number of sensors that cover a target point. We also show that,  under the assumption that $R_s\le R_c/2$,  \textbf{BSCP} admits a polynomial-time $8/(1-e^{-1})$-approximation algorithm (Theorem~\ref{thBSCPImproved}), which improves the factors $8(\lceil 2\sqrt{2} C\rceil+1)^2/(1-1/e)$~\cite{huang2020approximation} and $8(\lceil 4C/\sqrt{3}\rceil+1)^2/(1-1/e)$~\cite{yu2019connectivity}, where $C=R_s/R_c$.
%
Note that Huang et al.~\cite{huang2020approximation} do not assume that $R_s/R_c\le R_c/2$, however, our technique improves over their result if $R_s\le R_c/2$.


\bibliographystyle{plain}
\bibliography{biblio.bib}
\newpage
\appendix
\section*{Appendix}

\section{Further Related Work}

Zelikovsky~\cite{zelikovsky1997series} provided the first approximation algorithm for Directed Steiner Tree Problem (\textbf{DSTP}) with factor $O(k^{\epsilon} (\log^{1/\epsilon}{k}))$ that runs in $O(n^{1/\epsilon})$, where $|U|=k$. Charikar et al.~\cite{charikar1999approximation} proposed a better approximation algorithm for \textbf{DSTP} with a factor $O(\log^3{k})$ in quasi-polynomial time. Grandoni et al.~\cite{grandoni2019log2} improved this factor and provided a randomized $O(\frac{\log^2 k}{\log \log k})$-approximation algorithm $n^{O(\log^{5} k)}$ time. Also, they showed that, unless $NP \subseteq \bigcap_{0 <\epsilon< 1} \text{ZPTIME}(2^{n^\epsilon})$ or the Projection Game Conjecture is false, there is no quasi-polynomial time algorithm for \textbf{DSTP} that achieves an approximation ratio of $o(\frac{\log^2 k}{\log \log k})$. Ghuge and Nagarajan~\cite{ghuge2020quasi} showed that their approximation algorithm results in a deterministic $O(\frac{\log^2 k}{\log \log k})$-approximation algorithm for \textbf{DSTP} in $n^{O(\log^{1+\epsilon} k)}$ time. Very recently, Li and Laekhanukit~\cite{li2022polynomial} showed that the lower bound on the
integrality gap of the flow LP is polynomial in the number of vertices.

Danilchenko et al.~\cite{danilchenko2020covering} investigated a closely related problem to \textbf{BSCP}, where the goal is to place a set of connected disks (or squares) such that the total weight of target points in the plane is maximized. They provided a polynomial-time $O(1)$-approximation algorithm for this problem. \textbf{MCSB} is also closely related to the budgeted connected dominating set problem, where the goal is to select at most $B$ connected vertices in a given undirected graph to maximize the profit function on the set of selected vertices. Khuller et al.~\cite{khuller2020analyzing} investigated this problem in which the profit function is a \emph{special submodular function}. Khuller et al.~\cite{khuller2020analyzing} designed a $\frac{12}{1-1/e}$-approximation algorithm. By generalizing the analysis of Khuller et al.~\cite{khuller2020analyzing}, Lamprou et al.~\cite{lamprou2020improved} showed that there is a $\frac{11}{1-e^{-7/8}}$-approximation algorithm for the budgeted connected dominating set problem. They also showed that for this problem we cannot achieve in polynomial time an approximation factor better than $\left(\frac{1}{1-1/e}\right)$, unless $P=NP$. 

Lee and Dooly~\cite{lee1996algorithms} provided a $(B-2)$-approximation algorithm for \textbf{URAT}, where each vertex costs $1$. Zhou et al.~\cite{zhou2018relay} studied a variant of \textbf{E-URAT} in the wireless sensor networks and provided a $10$-approximation algorithm. Seufert et al.~\cite{seufert2010bonsai} investigated a special case of the unrooted version of \textbf{URAT}, where each vertex has cost $1$ and we aim to find a tree with at most $B$ nodes maximizing the accumulated prize. This coincides with the unrooted version of \textbf{E-URAT} when the cost of each edge is $1$ and we are looking for a tree containing at most $B-1$ edges to maximize the accumulated prize. Seufert et al.~\cite{seufert2010bonsai} provided a $(5+\epsilon)$-approximation algorithm for this problem. Similarly, Huang et al.~\cite{huang2019maximizing} investigated this variant of \textbf{E-URAT} (or \textbf{URAT}) in the plane and proposed a $2$-approximation algorithm.

The quota variant of \textbf{URAT} also has been studied, which is called \textbf{Q-URAT}. Here we wish to find a tree including a vertex $r$ in a way that the total cost of the tree is minimized and its prize is no less than some \emph{quota}. By using Moss and Rabbani~\cite{moss2007approximation}'black box and the ideas of K{\"o}nemann et al.~\cite{konemann2013lmp}, and Bateni et al.~\cite{bateni2018improved}, we have an $O(\log{n})$-approximation algorithm for \textbf{Q-URAT}. This bound is tight~\cite{moss2007approximation}. The edge cost variant of \textbf{Q-URAT}, called \textbf{EQ-URAT}, has been investigated by Johnson et al.~\cite{johnson2000prize}. They showed that by adapting an $\alpha$-approximation algorithm for the $k$-MST problem, one can have an $\alpha$-approximation algorithm for \textbf{EQ-URAT}. Hence, the $2$-approximation algorithm of Garg~\cite{garg2005saving} for the $k$-MST problem results in a $2$-approximation algorithm for \textbf{EQ-URAT}.

The prize collecting variants of \textbf{URAT} have also been studied.  K{\"o}nemann et al.~\cite{konemann2013lmp} provided a Lagrangian multiplier preserving $O(\ln{n})$-approximation algorithm for \textbf{NW-PCST}, where the goal is to minimize the cost of the nodes in the resulting tree
plus the penalties of vertices not in the tree. Bateni et al.~\cite{bateni2018improved} considered a more general case of \textbf{NW-PCST} and provided an $O(\log{n})$-approximation algorithm. There exists no $o(\ln{n})$-approximation algorithm for \textbf{NW-PCST}, unless $NP \subseteq \text{DTIME}(n^{\text{Polylog}(n)})$~\cite{klein1993nearly}. The edge cost variant of \textbf{NW-PCST} has been investigated by Goemans and Williamson~\cite{goemans1995general}. They provided a $2$-approximation algorithm for \textbf{EW-PCST}. Later, Archer et al.~\cite{archer2011improved} proposed a $(2-\epsilon)$-approximation algorithm for \textbf{EW-PCST} which was an improvement upon the long standing bound of $2$.

\begin{table}[h]
\begin{center}
\begin{tabular}{ |c|c| } 
 \hline
 Problem & Best Bound \\ \hline
 \textbf{STO} & $O(\frac{\log{n}}{\log{\log{n}}})$~\cite{ghuge2020quasi} (tight) \\ \hline
 \textbf{DTO} & $O(\frac{\log{n}}{\log{\log{n}}})$~\cite{ghuge2020quasi} (tight) \\ \hline
 \textbf{DSTP} & $O(\frac{\log^2{k}}{\log{\log{k}}})$~\cite{ghuge2020quasi, grandoni2019log2} (tight) \\ \hline
 \textbf{NW-PCST} & $O(\log{n})$~\cite{bateni2018improved, konemann2013lmp} (tight) \\ \hline
  \textbf{EW-PCST} & $2-\epsilon$~\cite{archer2011improved} \\ \hline
 \textbf{URAT} & $(1+\epsilon, O(\frac{\log{n}}{\epsilon^2}))$~\cite{bateni2018improved, konemann2013lmp, moss2007approximation} \\ \hline
 \textbf{E-URAT} & $2$~\cite{paul2020budgeted} \\ \hline
 \textbf{Q-URAT} & $O(\log{n})$~\cite{bateni2018improved, konemann2013lmp, moss2007approximation} (tight)\\ \hline
 \textbf{EQ-URAT} & $2$~\cite{garg2005saving, johnson2000prize} \\ \hline
\end{tabular}
\caption{A summary of the best bounds on some variants of prize collecting problems.}\label{tbBestBounds}
\end{center}
\end{table}

\section{Some Variants of \PName}\label{apx:variants}

\paragraph*{Additive prize function}

We consider a the special case of \PName where the prize function $p$ is an additive function, i.e., for any $S \subseteq V$, $p(S)=\sum_{v \in S} p(\{v\})$, which we call \emph{Directed Rooted Additive Tree} (\textbf{DRAT}).

We apply a variant of Algorithm~\ref{algNW-DRBP} to \textbf{DRAT} where, in Line 12 of the algorithm, we use a different trimming process to achieve a better approximation factor.
In particular, we show that we can use the trimming process by Bateni et al.~\cite{bateni2018improved} (Lemma~\ref{lmBateniTrimmingProcess}) in directed graphs and achieve the same guarantee.

Let us first briefly explain the trimming process by Bateni et al.~\cite{bateni2018improved}. We are given an undirected graph $G=(V, E)$, a distinguished vertex $r \in V$ and a budget $B$, where each vertex $v \in V$ is assigned with a prize $p'(v)$ and a cost $c'(v)$. Let $T$ be a tree, where $V(T) \subseteq V$ and $E(T) \subseteq E$, let $c'(T)=\sum_{v \in V(T)} c'(v)$, and let $p'(T)=\sum_{v \in V(T)} p'(v)$. The trimming procedure by Bateni et al. takes  as input a subtree $T$ rooted at a node $r$ of a $B$-proper graph $G$ and first (i) computes a subtree $T''$ of $T$, not necessarily rooted $r$, such that $\frac{\epsilon B}{2}\leq c'(T'') \leq \eps B$ and $p'(T'')\geq \frac{\eps B}{2}\gamma$, where $\gamma = \frac{p'(T)}{c'(T)}$. Then (ii) it connects node $r$ to the root of $T''$ with a minimum-cost path and obtains a tree $T'$ rooted at $r$. Since $G$ is $B$-proper, this path exists and has length at most $B$, which implies that the cost of the resulting tree is between $\frac{\epsilon B}{2}$ and $(1+\eps)B$, and its prize to cost ratio is at least $\frac{p'(T'')}{(1+\eps)B}\geq\frac{\epsilon \gamma}{4}$. This leads to Lemma~\ref{lmBateniTrimmingProcess}. For the case of directed graphs, we prove a lemma equivalent to Lemma~\ref{lmBateniTrimmingProcess}.

\begin{lemma}\label{coTrimmingProcess}
Let $D= (V, A)$ be a $B$-appropriate graph for a node $r$. Let $T$ be an out-tree of $D$ rooted at $r$ with the prize-to-cost ratio $\gamma=\frac{p(T)}{c(T)}$. Suppose that for $\epsilon \in (0, 1]$, $c(T) \ge \frac{\epsilon B}{2}$. One can find an out-tree $\hat{T}$ rooted at $r$ with the prize-to-cost ratio at least $\frac{\epsilon \gamma}{4}$ such that $\epsilon B/2 \le c'(T') \le (1+\epsilon)B$.
\end{lemma}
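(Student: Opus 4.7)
The plan is to adapt the trimming process of Bateni et al.\ (Lemma~\ref{lmBateniTrimmingProcess}) to directed graphs, following the same high-level case analysis as the proof of Lemma~\ref{lmBateniTrimmingProcessGeneral}, but exploiting additivity of the prize to recover the stronger ratio $\epsilon\gamma/4$ in place of $\epsilon^2\gamma/(32h)$. First I would run an initial trimming phase: iteratively delete any strict full out-subtree $T'$ of $T$ whose removal both (i) preserves the prize-to-cost ratio at $\gamma$ and (ii) leaves a tree of cost at least $\epsilon B/2$. Let $T_-$ denote the result. If already $c(T_-)\le(1+\epsilon)B$, return $\hat T=T_-$, whose ratio is at least $\gamma$ by construction.

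Otherwise $c(T_-)>(1+\epsilon)B$, and I would split according to whether a \emph{rich} full out-subtree exists, where rich means cost at least $\epsilon B/2$ and ratio at least $\gamma$ for both the subtree and every full sub-subtree of it. If a rich subtree exists, pick an inclusion-minimal one $T''$ rooted at $r''$; by minimality every strict full subtree of $T''$ has cost $<\epsilon B/2$. Let $C$ be the total cost of the immediate subtrees of $T''$. If $C<\epsilon B/2$, take $\hat T$ to be $T''$ together with a shortest $r$-$r''$ path $P$ (discarding redundant incoming edges at nodes of $V(T'')\cap V(P)$, exactly as in line~11 of Algorithm~\ref{algNW-DRBP}); $B$-appropriateness gives $c(P)\le B$, so $c(\hat T)\le(1+\epsilon/2)B$, while $p(\hat T)\ge p(T'')\ge \gamma c(T'')\ge\gamma\epsilon B/2$. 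If instead $C\ge\epsilon B/2$, greedily partition the immediate subtrees of $T''$ (each of cost $<\epsilon B/2$) into groups of total cost in $[\epsilon B/2,\epsilon B]$, attach any such group $S$ to $r''$, and prepend $P$. The crucial point is that by additivity, $p(S)/c(S)$ is a weighted average of the ratios of the subtrees in $S$, each of which is $\ge\gamma$ because $T''$ is rich; hence $p(S)\ge\gamma c(S)\ge\gamma\epsilon B/2$, while $c(\hat T)\le B+\epsilon B=(1+\epsilon)B$.

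If no rich subtree exists, I would find an inclusion-minimal full subtree $T'$ of $T_-$ with ratio $<\gamma$ (which exists because $T_-$ has ratio $\ge\gamma$ and cost $\ge\epsilon B/2$ but is not rich); minimality guarantees all strict full subtrees of $T'$ have ratio $\ge\gamma$. The additive cancellation argument of Claim~\ref{clLowCost} then forces $c(T_-\setminus T')<\epsilon B/2$, since otherwise removing $T'$ during the initial phase would have both improved the ratio above $\gamma$ and left cost at least $\epsilon B/2$, contradicting the fact that $T'$ survived. Hence $c(T')>(1+\epsilon/2)B$; combined with $c(r')\le dist(r,r')\le B$, the immediate subtrees of $T'$ have total cost exceeding $\epsilon B/2$, while each such subtree has ratio $\ge\gamma$ (by the choice of $T'$) and cost $<\epsilon B/2$ (otherwise it would be rich). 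Applying the same grouping-and-attachment construction as above, but now rooted at $r'$, produces $\hat T$ with $\epsilon B/2\le c(\hat T)\le(1+\epsilon)B$ and $p(\hat T)\ge\gamma\epsilon B/2$. In all three cases, $p(\hat T)/c(\hat T)\ge \frac{\gamma\epsilon B/2}{(1+\epsilon)B}\ge\gamma\epsilon/4$.

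The main obstacle is not conceptual: all the structural ingredients have already been developed in the submodular case of Lemma~\ref{lmBateniTrimmingProcessGeneral}, and the directed-graph issue of making $P\cup T''$ (or $P$ together with the attached group) a proper out-tree is handled exactly as in Algorithm~\ref{algNW-DRBP}. The point to verify carefully is that additivity strictly improves the two subcases where the submodular proof had to pick a single best of $M=\Theta(h/\epsilon)$ groups and lost a factor $1/M$; with additivity, \emph{any} group already inherits the ambient ratio $\gamma$ as a weighted average, eliminating both the dependence on $h$ and one factor of $\epsilon$.
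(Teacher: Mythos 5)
Your proof is correct, and it takes a genuinely different route from the paper's. The paper's proof of this lemma is a black-box reduction: it removes the directions from $T$, invokes only ``step~(i)'' of Bateni et al.'s undirected trimming (the part that extracts, from the tree alone, a connected sub-piece $T''$ with $\epsilon B/2 \le c(T'') \le \epsilon B$ and $p(T'')\ge\gamma\epsilon B/2$), re-orients $T''$ as an out-tree, and then replaces their ``step~(ii)'' by a directed shortest $r$--$r''$ path, using $B$-appropriateness of $D$. This is short but requires the reader to know that Bateni et al.'s argument factors cleanly into a direction-insensitive extraction step and a path-attachment step. You instead re-derive the whole trimming directly in the directed setting, mirroring the case analysis of Lemma~\ref{lmBateniTrimmingProcessGeneral} and observing that with an additive prize the weighted-average argument makes \emph{every} group inherit the ambient ratio $\gamma$, so no factor $1/M$ is lost and the final case split on $p(T')$ becomes unnecessary. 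Your version is longer but self-contained and makes transparent exactly where $\epsilon^2\gamma/(32h)$ improves to $\epsilon\gamma/4$ under additivity, whereas the paper's version is terser at the cost of leaning on the internal structure of the cited undirected proof. One small point worth stating explicitly in your case where no rich out-subtree exists: $T'$ does have at least one immediate out-subtree, since $c(T')> (1+\epsilon/2)B > B \ge c(r')$; you implicitly use this when forming the groups.
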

\begin{proof}
We first define an undirected tree by removing the directions from $T$ and apply to it the step (i) of the trimming procedure by Bateni et al. with $p'(\cdot) = p(\cdot)$ and $c'(\cdot) =c(\cdot)$ as prize and cost functions, respectively. We obtain a subtree $T''$ such that $\frac{\epsilon B}{2}\leq c'(T'') \leq \eps B$ and $p'(T'')\geq \frac{\eps B}{2}\gamma$. We reintroduce in $T''$ the same directions as in $T$  to obtain an out-tree $T''_{out}$ corresponding to $T''$. Then, we add a minimum cost path from $r$ to the root of $T''_{out}$. Since $D$ is $B$-appropriate, this path exists and has length at most $B$. The obtained out-tree $\hat{T}$ has the desired properties because $p(T''_{out}) = p'(T'') $ and $ c(T''_{out}) = c'(T'')$ and therefore $\frac{\epsilon B}{2}\leq c(\hat{T})\leq (1+\eps)B$ and $\frac{p(\hat{T})}{c(\hat{T})}\geq\frac{p'(T'')}{(1+\eps)B}\geq\frac{\epsilon \gamma}{4}$.
\end{proof}

This results in the following theorem.

\begin{theorem}\label{thNWDRBAlgorithm1}
There exists a polynomial-time bicriteria $(1+\epsilon, O(\frac{\sqrt{B}}{\epsilon^2}))$-approximation algorithm for \textbf{DRAT}.
\end{theorem}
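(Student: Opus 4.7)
The plan is to modify Algorithm~\ref{algNW-DRBP} in two places to exploit the additivity of the prize function, and then to re-run the analysis of Theorem~\ref{thMain} with the improved constants. First, in the computation of $S_u$ (Lines 4--5), observe that for an additive prize function the instance $I^{RSM}_u$ of \textbf{RSM} can be solved \emph{exactly} in polynomial time: the optimal $S_u$ consists of $u$ together with the $\lfloor\sqrt{B}\rfloor$ elements of $V_u\setminus\{u\}$ of largest individual prize. Thus the approximation factor $(1-e^{-1})$ incurred by \textbf{Greedy} in~\eqref{eq:main} can be replaced by~$1$. Second, in the trimming step (Line~\ref{lnTrimmingProcess}), I would replace Lemma~\ref{lmBateniTrimmingProcessGeneral} by Lemma~\ref{coTrimmingProcess}, which gives a prize-to-cost ratio after trimming of $\epsilon\gamma/4$ rather than $\epsilon^2\gamma/(32h)$; this is precisely the payoff of additivity, since the Bateni et al.\ trimming inside Lemma~\ref{coTrimmingProcess} relies on prize splitting into parts.

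With these modifications, I would then follow the analysis of Theorem~\ref{thMain} step by step. Applying Lemma~\ref{lmClaimKuoExtension} to an optimal out-tree $T^*$ with parameter $m=\lfloor\sqrt{B}\rfloor$ yields $N\le 5\lfloor\sqrt{B}\rfloor$ out-subtrees, one of which, $T'$, has prize at least $p(T^*)/N$ by additivity (this is even simpler than in the submodular case). Taking $w$ to be the root of $T'$, the chain of inequalities in~\eqref{eq:main} now becomes
\[
p(T)\ \ge\ p(T_z)\ \ge\ p(S_w)\ \ge\ p(T')\ \ge\ \frac{p(T^*)}{5\lfloor\sqrt{B}\rfloor},
\]
without the $(1-e^{-1})$ factor. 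Since $c(T)\le 2B$, the prize-to-cost ratio of $T$ before trimming is $\gamma\ge p(T^*)/(10\sqrt{B}\,B)$.

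Next I would invoke Lemma~\ref{coTrimmingProcess} with this $T$ and the parameter $\epsilon\in(0,1]$. (The underlying graph is $B$-appropriate for $r$ because Line~1 of Algorithm~\ref{algNW-DRBP} discards all nodes at distance more than $B$ from $r$.) The lemma returns an out-tree $\hat T$ rooted at $r$ with $\epsilon B/2\le c(\hat T)\le(1+\epsilon)B$ and prize-to-cost ratio at least $\epsilon\gamma/4$, hence
\[
p(\hat T)\ \ge\ \frac{\epsilon\gamma}{4}\cdot c(\hat T)\ \ge\ \frac{\epsilon\gamma}{4}\cdot\frac{\epsilon B}{2}\ =\ \frac{\epsilon^2\gamma B}{8}\ \ge\ \frac{\epsilon^2\,p(T^*)}{80\sqrt{B}},
\]
which is the claimed $O(\sqrt{B}/\epsilon^2)$ approximation with $(1+\epsilon)$ budget violation.

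The main technical obstacle is verifying that Lemma~\ref{coTrimmingProcess} indeed applies in Algorithm~\ref{algNW-DRBP}'s output regime: one must check the hypothesis $c(T)\ge \epsilon B/2$ (which may require handling the trivial case $c(T)<\epsilon B/2$ separately, where $T$ itself already satisfies the budget) and that $T$ is an out-tree rooted at $r$ in a $B$-appropriate graph. The first point is handled exactly as in Theorem~\ref{thMain}; the second point is immediate from Lines 9--11 of the algorithm together with the pruning in Line~1. Everything else is a direct re-use of the estimates already proved in the paper, so the proof should be compact.
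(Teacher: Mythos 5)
Your proposal is correct and takes essentially the same approach as the paper: keep Algorithm~\ref{algNW-DRBP} and, in the trimming step, substitute Lemma~\ref{coTrimmingProcess} (whose $\epsilon\gamma/4$ guarantee exploits additivity) for the general submodular Lemma~\ref{lmBateniTrimmingProcessGeneral}, which drops one factor of $1/\epsilon$ exactly as you compute. Your extra observation that $I^{RSM}_u$ can be solved exactly for additive $p$ (rather than via \textbf{Greedy}) is also correct, but it only improves the hidden constant by $(1-e^{-1})^{-1}$ and is not needed for the stated $O(\sqrt{B}/\epsilon^2)$ bound.
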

\begin{proof}[Sketch of proof]
The theorem follows by the same arguments used in the proof of Theorem~\ref{thMain} and by Lemma~\ref{coTrimmingProcess}.
\end{proof}

\paragraph*{Undirected graphs}


Here we consider the particular case of \PName in which the graph is undirected. Since it is a special case of \PName, all our results hold.

Kuo et al.~\cite{kuo2014maximizing} studied the unrooted version of \PName in undirected graphs, which they call \emph{Maximum Connected Submodular function with Budget constraint} (\textbf{MCSB}).
In \textbf{MCSB}, we are given an undirected graph $G=(V, E)$, a monotone submodular function $p: 2^V \rightarrow \mathbb{R}^+$, a cost function $c: V \rightarrow \mathbb{Z}^+$ and a budget $B$, and the goal is to find a tree $T$ s.t. $c(T)=\sum_{v \in V(T)}c(v) \le B$ and $p(T)$ is maximum.

Kuo et al. provided an $O((\Delta+1) \sqrt{B})$-approximation algorithm for \textbf{MCSB}, where $\Delta$ is the maximum degree. Recently, Hochbaum and Rao~\cite{hochbaum2020approximation} investigated \textbf{MCSB} in the case in which each vertex costs $1$ for the detection of mutated driver pathways in cancer and provided an approximation algorithm with factor $\min\{\frac{1}{(1-1/e)(1/R-1/B)}, B\}$, where $R$ is the radius of the input graph $G$.

Our algorithm for \textbf{DUST} can be applied to undirected graphs and improves the above results~\cite{kuo2014maximizing,hochbaum2020approximation} by providing an $O(\sqrt{B})$-approximation.

\paragraph*{Additive prize function in undirected graphs}
We consider the problem where we are given an undirected graph $G=(V,E)$, a cost function $c:V \rightarrow \mathbb{R}^+$, a prize function $p:V \rightarrow \mathbb{R}^+$, a budget $B$ and a vertex $r$, and the goal is to find a tree $T$ such that $c(T)=\sum_{v \in V(T)}c(v) \le B$, $r \in V(T)$ and $p(T)=\sum_{v \in V(T)}p(v)$ is maximum. 
We call this problem \emph{Undirected Rooted Additive Tree} (\textbf{URAT}) and we show that it admits a bicriteria polynomial-time approximation algorithm, which depends on the maximum degree in the graph.

\begin{theorem}\label{thRBPDegree}
\textbf{URAT} admits a polynomial-time bicriteria $(1+\epsilon, \frac{16 \Delta}{\epsilon^2})$-approximation algorithm, where $\Delta$ is the maximum degree in the given graph.
\end{theorem}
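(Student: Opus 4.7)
The plan is to reduce URAT to its edge-weighted sibling \textbf{E-URAT}, apply the polynomial-time $2$-approximation of Paul et al.~\cite{paul2020budgeted}, and finish with the trimming procedure of Bateni et al.\ recalled in Lemma~\ref{lmBateniTrimmingProcess}. Given a URAT instance $(G, c, p, B, r)$, I would first preprocess as in Section~\ref{sec:apxalgo} by discarding every vertex whose node-weighted distance from $r$ exceeds $B$, so that the graph is $B$-proper for $r$. I would then build an E-URAT instance on the same (preprocessed) graph with the same root and vertex prizes, set edge costs $c'(\{u,v\}) := c(u)+c(v)$, and enlarge the budget to $B' := \Delta B$.

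The pivotal identity is that for any tree $T$ in $G$,
\[
c'(T) \;=\; \sum_{\{u,v\}\in E(T)}\bigl(c(u)+c(v)\bigr) \;=\; \sum_{v\in V(T)} d_T(v)\,c(v),
\]
which gives the two-sided sandwich $c(T) \le c'(T) \le \Delta\,c(T)$ for any tree with at least two vertices, because each such vertex has degree in $T$ between $1$ and $\Delta$. In particular the URAT optimum $T^*$ is feasible for the E-URAT instance, so Paul et al.'s algorithm returns a tree $\bar T$ with $c'(\bar T) \le \Delta B$ and $p(\bar T) \ge p(T^*)/2$; the left inequality of the sandwich then yields $c(\bar T) \le \Delta B$ in the URAT metric, so its prize-to-cost ratio is at least $\gamma := p(T^*)/(2\Delta B)$. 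Applying Lemma~\ref{lmBateniTrimmingProcess} to $\bar T$ with the given $\epsilon$ produces a tree $T'$ rooted at $r$ with $\epsilon B/2 \le c(T') \le (1+\epsilon)B$ and prize-to-cost ratio at least $\epsilon\gamma/4$, whence
\[
p(T') \;\ge\; \frac{\epsilon B}{2}\cdot\frac{\epsilon\gamma}{4} \;\ge\; \frac{\epsilon^2\,p(T^*)}{16\,\Delta},
\]
which is exactly the claimed bicriteria guarantee. The case $c(\bar T)<\epsilon B/2$ is handled directly by outputting $\bar T$ itself: the budget is respected without any violation, and $p(\bar T)\ge p(T^*)/2$ trivially beats $\epsilon^2 p(T^*)/(16\Delta)$ since $\epsilon\le 1\le\Delta$.

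The main obstacle, as I see it, is really bookkeeping: the sandwich $c(T)\le c'(T)$ fails for the degenerate singleton tree $T=\{r\}$ (where $c'(T)=0$ but $c(T)=c(r)$), so the algorithm must separately compare $\bar T$ against the trivial $\{r\}$ and return the better of the two. Beyond that, one needs to verify that the $2$-approximation of Paul et al.\ accommodates a prescribed root and an arbitrary integer budget (which it does). Once these points are settled, the approximation bound drops out of a clean three-step pipeline: reduction to E-URAT, the Paul et al.\ $2$-approximation, and Bateni et al.'s trimming.
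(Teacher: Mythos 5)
Your proposal is correct and follows essentially the same pipeline as the paper's proof: preprocess to make the graph $B$-proper (the paper calls it $B$-appropriate), reduce to \textbf{E-URAT} by putting cost $c(u)+c(v)$ on each edge and inflating the budget to $\Delta B$ (this is exactly the paper's Lemma~\ref{lmEW-RBPandRBP}), invoke Paul et al.'s $2$-approximation, and finish with Lemma~\ref{lmBateniTrimmingProcess} to pull the cost back down to $(1+\epsilon)B$ while paying an extra $O(1/\epsilon^2)$ factor. You are in fact slightly more careful than the paper in flagging the degenerate singleton tree (where $c'(T)=0$ but $c(T)=c(r)$) and the small-cost case $c(\bar T)<\epsilon B/2$ that falls outside Lemma~\ref{lmBateniTrimmingProcess}'s hypothesis, both of which the paper handles implicitly; your added bookkeeping is sound and the arithmetic matches the stated $\frac{\epsilon^2 p(T^*)}{16\Delta}$ bound.
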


To prove Theorem~\ref{thRBPDegree}, we need the following observations. Let \textbf{E-URAT} be the same problem as \textbf{URAT} except that the edges and not the nodes have costs. In particular, in \textbf{E-URAT}, we are given an undirected graph $G=(V,E)$, a cost function $c:E \rightarrow \mathbb{R}^+$, a prize function $p:V \rightarrow \mathbb{R}^+$, a budget $B$ and a vertex $r$, and the goal is to find a tree $T=(V(T), E(T))$ such that $c(T)=\sum_{e \in E(T)}c(e) \le B$, $r \in V(T)$ and $p(T)=\sum_{v \in V(T)}p(v)$ is maximum.
Paul et al.~\cite{paul2020budgeted} gave a $2$-approximation algorithm for \textbf{E-URAT}.

\begin{lemma}\label{lmEW-RBPandRBP}
Let $\mathcal{A}$ be an $\alpha$-approximation algorithm for \textbf{E-URAT}. Using $\mathcal{A}$, a bicriteria $(\Delta, \alpha)$-approximation can be obtained for \textbf{URAT}, where $\Delta$ is the maximum degree of a node.
\end{lemma}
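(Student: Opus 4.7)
The plan is to reduce an \textbf{URAT} instance $I=\langle G=(V,E), c, p, B, r\rangle$ to an \textbf{E-URAT} instance $I'=\langle G=(V,E), c', p, B', r\rangle$ by pushing each node cost onto all of its incident edges: I will set $c'(\{u,v\}) := c(u)+c(v)$ for every $\{u,v\}\in E$ and choose the inflated budget $B' := \Delta B$. Then I will run $\mathcal{A}$ on $I'$ and return the resulting tree $T$.

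The first step is to show that the optimal URAT solution $T^*$ is feasible for $I'$. The key identity is $c'(T^*)=\sum_{\{u,v\}\in E(T^*)}(c(u)+c(v))=\sum_{v\in V(T^*)}\deg_{T^*}(v)\cdot c(v)$, which, since $\deg_{T^*}(v)\le \Delta$ for every $v$, is at most $\Delta\,c(T^*)\le \Delta B=B'$. Consequently the optimum of $I'$ has prize at least $p(T^*)$, and the $\alpha$-approximation guarantee of $\mathcal{A}$ yields a solution $T$ with $c'(T)\le B'$ and $p(T)\ge p(T^*)/\alpha$.

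The second step is to translate the edge-cost bound on $T$ back into a node-cost bound. Using the same identity in reverse, $c'(T)=\sum_{v\in V(T)}\deg_T(v)\cdot c(v)$; since every vertex of a tree with at least one edge has degree at least $1$, this immediately gives $c(T)=\sum_{v\in V(T)} c(v)\le c'(T)\le \Delta B$. The only edge case is $T=\{r\}$, which I handle by the standard assumption $c(r)\le B$ (otherwise $I$ admits no feasible solution at all), so $c(T)=c(r)\le B\le\Delta B$. Combining both steps produces a bicriteria $(\Delta,\alpha)$-approximation, as required.

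I do not expect a real obstacle here: the whole argument rests on the elementary double-counting identity $\sum_{\{u,v\}\in E(T)}(c(u)+c(v))=\sum_v \deg_T(v)c(v)$ and the trivial degree bounds $1\le\deg_T(v)\le\Delta$ valid in any non-trivial tree. The only subtlety worth being explicit about is the trivial single-vertex tree, which is dispatched by the feasibility assumption on $r$; everything else is a one-line calculation.
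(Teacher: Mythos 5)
Your proposal is correct and uses essentially the same reduction as the paper: push node costs onto incident edges via $c'(\{u,v\})=c(u)+c(v)$, inflate the budget to $B'=\Delta B$, run $\mathcal{A}$, and return its output. You merely make explicit the double-counting identity $c'(T)=\sum_v \deg_T(v)c(v)$ and the bounds $1\le\deg_T(v)\le\Delta$ that the paper uses implicitly (both to show $OPT_I$ is feasible for $I'$ and to convert the edge-cost bound on the output back to a node-cost bound), and you additionally dispatch the single-vertex edge case, which the paper glosses over.
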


\begin{proof}
Consider an instance $I=\langle G= (V, E), p, c, r, B \rangle$ of \textbf{URAT}. We create an instance $I'=\langle G, p, c', r, B' \rangle$ of \textbf{E-URAT} where $B'=\Delta B$  and, for any edge $e=\{u,v\} \in E'$, we set $c'(e)=c(u)+c(v)$. Let $OPT_{I}$ and $OPT_{I'}$ be optimal solutions to $I$ and $I'$, respectively.


First, observe that $p(OPT_{I'}) \ge p(OPT_{I})$ as $OPT_{I}$ is a feasible solution for $I'$. By running $\mathcal{A}$ on instance $I'$, we have a tree $T'$ with the cost $c(T')=\sum_{e \in E(T')} c(e) \le B'=\Delta B$ and the prize $p'(T') \ge \alpha p(OPT_{I'})\ge \alpha p(OPT_{I})$. Therefore $T'$ is a $(\Delta, \alpha)$-approximation  for $I$.
\end{proof}



Now we are ready to prove Theorem~\ref{thRBPDegree}.

\begin{proof}[Proof of Theorem~\ref{thRBPDegree}]
Consider an instance $I=\langle G= (V, E), p, c, r, B \rangle$ of \textbf{URAT}. From graph $G$, we first create a maximal inclusion-wise $B$-appropriate graph $G'$ for $r$. Let $OPT$ be an optimal solution to $I$. By Lemma~\ref{lmEW-RBPandRBP} and the $2$-approximation algorithm for \textbf{E-URAT} proposed by Paul et al.~\cite{paul2020budgeted}, we create a subtree $T'$ of $G'$ with prize $p(T)\ge \frac{OPT}{2}$ and cost $c(T) \le \Delta B$. Now by using Lemma~\ref{lmBateniTrimmingProcess}, we obtain a subtree $\hat{T}$ from $T$ with the cost at most $c(\hat{T}) \le (1+\epsilon) B$ and the prize-to-cost ratio:
\[
\frac{p(\hat{T})}{c(\hat{T})} \ge \frac{\epsilon p(T)}{4c(T)} \ge \frac{\epsilon}{8 \Delta B}OPT_{B}.
\]
As $c(\hat{T}) \ge \epsilon B/2$, we have $p(\hat{T}) \ge \frac{\epsilon^2}{16 \Delta} OPT_{B}$, which concludes the proof.
\end{proof}

\paragraph*{Quota Problem}


We consider the quota version of \textbf{URAT}, called \textbf{Q-URAT},  where we are given an undirected graph $G=(V,E)$, a cost function $c:V \rightarrow \mathbb{R}^+$, a prize function $p:V \rightarrow \mathbb{R}^+$, a quota $Q\in \mathbb{R}^+$, and a vertex $r$, and the goal is to find a tree $T$ such that $p(T)=\sum_{v \in V(T)} p(v)\ge Q$, $r \in V(T)$ and $c(T)=\sum_{v \in V(T)}c(v)$ is minimum.

We provide a new polynomial-time approximation algorithm for \textbf{Q-URAT} that depends on the maximum degree of the given graph.

\begin{theorem}\label{thNW-RQPDegree}
\textbf{Q-URAT} admits a $2\Delta$-approximation algorithm, where $\Delta$ is the maximum degree of a node in the given graph.
\end{theorem}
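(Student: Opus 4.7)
The plan is to reduce \textbf{Q-URAT} to its edge-cost counterpart \textbf{EQ-URAT} and then invoke the known $2$-approximation algorithm for \textbf{EQ-URAT} (which follows, as noted in the excerpt, from Garg's $2$-approximation for $k$-MST~\cite{garg2005saving} via the observation of Johnson et al.~\cite{johnson2000prize}). Concretely, given an instance $(G,c,p,r,Q)$ of \textbf{Q-URAT}, I would form the edge-cost instance $(G,c',p,r,Q)$ of \textbf{EQ-URAT} by setting
\[
c'(\{u,v\}) \;:=\; c(u) + c(v),
\]
and run the $2$-approximation on it to obtain a tree $T$ with $r\in V(T)$, $p(T)\ge Q$, and $c'(T)\le 2\cdot \mathrm{OPT}_{c'}$, where $\mathrm{OPT}_{c'}$ denotes the optimum of the edge-cost instance. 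The degenerate case $p(r)\ge Q$ is handled by outputting $\{r\}$; otherwise every feasible solution has at least two vertices.

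The analysis rests on two elementary inequalities that relate node and edge costs on the same tree. For any tree $T$ with $|V(T)|\ge 2$, the identity $c'(T) = \sum_{v\in V(T)} \deg_T(v)\, c(v)$ yields on the one hand the lower bound $c'(T)\ge c(T)$, because every vertex in a tree of at least two nodes has degree at least one, and on the other hand the upper bound $c'(T)\le \Delta\, c(T)$, because $\deg_T(v)\le \Delta$. Applying the upper bound to an optimal \textbf{Q-URAT} solution $T^*$ shows that $T^*$ (which is also feasible for the edge-cost instance) has $c'(T^*)\le \Delta\cdot \mathrm{OPT}$, so $\mathrm{OPT}_{c'}\le \Delta\cdot \mathrm{OPT}$. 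Applying the lower bound to the returned tree $T$ then gives $c(T)\le c'(T)\le 2\,\mathrm{OPT}_{c'}\le 2\Delta\cdot \mathrm{OPT}$, which is exactly the desired guarantee.

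There is no substantial obstacle: the only subtlety is making sure the lower bound $c(T)\le c'(T)$ is applied to a tree with at least two vertices, which is why the case $p(r)\ge Q$ must be split off and dealt with trivially. Everything else is a clean black-box invocation, and the factor $2\Delta$ in the final bound appears naturally as the product of the factor $2$ from the approximation for \textbf{EQ-URAT} and the factor $\Delta$ coming from the worst-case ratio between edge and node cost on the same tree.
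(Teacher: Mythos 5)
Your proposal is correct and follows essentially the same route as the paper: the paper's Lemma~\ref{lmEW-RQPandNW-RQP} sets $c'(\{u,v\}) = c(u)+c(v)$, invokes the $2$-approximation for \textbf{EQ-URAT} (via Johnson et al.\ and Garg's $k$-MST algorithm), and chains the same two inequalities $c(T)\le c'(T)$ and $c'(T^*)\le \Delta\, c(T^*)$. Your write-up is a bit more explicit — in particular, singling out the degenerate single-vertex case and deriving both bounds cleanly from the identity $c'(T)=\sum_{v}\deg_T(v)\,c(v)$ — but the underlying argument is identical.
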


To prove Theorem~\ref{thNW-RQPDegree}, we first prove the following lemma. Let \textbf{EQ-URAT} be the same problem as \textbf{Q-URAT} except that the edges and not the nodes have costs. In particular, in \textbf{EQ-URAT}, we are given an undirected graph $G=(V,E)$, a cost function $c:E \rightarrow \mathbb{R}^+$, a prize function $p:V \rightarrow \mathbb{R}^+$, a quota $Q$ and a vertex $r$, and the goal is to find a tree $T$ such that $p(T)=\sum_{v \in V(T)} p(v)\ge Q$, $r \in V(T)$ and $c(T)=\sum_{e \in E(T)}c(v)$ is minimum.

\begin{lemma}\label{lmEW-RQPandNW-RQP}
Let $\mathcal{A}$ be an $\alpha$-approximation algorithm for \textbf{EQ-URAT}. Using $\mathcal{A}$, an $\alpha \Delta$-approximation can be obtained for \textbf{Q-URAT}.
\end{lemma}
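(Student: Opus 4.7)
The plan is to mirror the reduction used in Lemma~\ref{lmEW-RBPandRBP} for the budget version, but reverse the flow so that edge costs now \emph{upper bound} vertex costs. Given an instance $I=\langle G=(V,E), p, c, r, Q\rangle$ of \textbf{Q-URAT}, I construct an instance $I'=\langle G, p, c', r, Q\rangle$ of \textbf{EQ-URAT} with the same graph, root, prizes, and quota, setting $c'(e) := c(u)+c(v)$ for every edge $e=\{u,v\}\in E$. Note that any tree in $G$ is feasible for $I'$ if and only if it is feasible for $I$, since the quota constraint, root, and prizes are unchanged.

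The key quantitative observation is that for every tree $T$ on at least two vertices,
\[
c'(T)\;=\;\sum_{e=\{u,v\}\in E(T)}\bigl(c(u)+c(v)\bigr)\;=\;\sum_{v\in V(T)}\deg_T(v)\,c(v),
\]
and since $1\le \deg_T(v)\le \Delta$ for every $v\in V(T)$, we obtain the sandwich
\[
c(T)\;\le\;c'(T)\;\le\;\Delta\cdot c(T).
\]
Let $T^*$ denote an optimal tree for $I$ with cost $OPT_Q = c(T^*)$, and let $OPT'_{Q}$ denote the optimum of $I'$. Because $T^*$ is feasible for $I'$, the right inequality above yields $OPT'_{Q}\le c'(T^*)\le \Delta\cdot OPT_Q$.

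Now I run $\mathcal{A}$ on $I'$ to obtain a feasible tree $T$ with $c'(T)\le \alpha\cdot OPT'_{Q}\le \alpha\Delta\cdot OPT_Q$. If $|V(T)|\ge 2$, the left inequality of the sandwich gives $c(T)\le c'(T)\le \alpha\Delta\cdot OPT_Q$, as desired. The only thing to handle carefully is the degenerate case $V(T)=\{r\}$: here $c'(T)=0$ but $T$ is still feasible for $I'$, which forces $p(r)\ge Q$. In that subcase $\{r\}$ is itself an optimal tree for $I$ with cost $c(r)=OPT_Q$, so returning $\{r\}$ gives a $1$-approximation, which is within the claimed factor.

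The main obstacle I expect is precisely this edge case where $T$ has no edges and hence the lower bound $c(T)\le c'(T)$ fails; but, as just observed, feasibility of the singleton tree in $I'$ forces feasibility (and optimality) of $\{r\}$ in $I$, so the reduction goes through cleanly and the overall approximation factor is $\alpha\Delta$.
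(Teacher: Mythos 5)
Your proof is correct and follows essentially the same reduction as the paper: define $c'(e)=c(u)+c(v)$, use $c(T)\le c'(T)\le \Delta\, c(T)$, bound the optimum of the edge-cost instance by $\Delta$ times the node-cost optimum, and translate the $\alpha$-approximate solution back. The only addition is your explicit treatment of the singleton-tree edge case, which the paper leaves implicit; that is a reasonable and correct refinement of the same argument.
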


\begin{proof}
From an instance $I=\langle G= (V, E), p, c, r, Q \rangle$ of \textbf{Q-URAT}, we create an instance $I'=\langle G, p, c', r, Q \rangle$ of \textbf{EQ-URAT}, where, for any edge $e=\{u,v\} \in E'$, we set $c'(e)=c(u)+c(v)$. 

By running $\mathcal{A}$ on instance $I'$, we have a tree $T$ with  prize $p(T) \ge Q$ and cost $c'(T)\le \alpha c'(OPT_{I'})$, where $OPT_{I'}$ is an optimal solution to $I'$.
Moreover, we have that $c(T)\le c'(T)$ and $c'(OPT_{I}) \le \Delta c(OPT_{I})$, where $OPT_{I}$ is an optimal solution to $I$.
We obtain $c(T) \le \alpha \Delta c(OPT_{I})$. Therefore $T$ is an $\alpha \Delta$-approximation for $I$.
\end{proof}

Now we prove Theorem~\ref{thNW-RQPDegree}.

\begin{proof}[Proof of Theorem~\ref{thNW-RQPDegree}]
Johnson et al.~\cite{johnson2000prize} observed that an $\alpha$-approximation algorithm for the $k$-MST problem guarantees an $\alpha$-approximation also for \textbf{EQ-URAT}. This observation, along with the $2$-approximation algorithm by Garg~\cite{garg2005saving} for the $k$-MST problem result in a $2$-approximation algorithm for \textbf{EQ-URAT}. Therefore, by Lemma~\ref{lmEW-RQPandNW-RQP}, the theorem follows.
\end{proof}

\paragraph*{Maximum Weighted Budgeted Connected Set Cover (\textbf{MWBCSC}).}

Let $X$ be a set of elements, $\mathcal{S} \subseteq 2^X$ be a collection sets, $p: X \rightarrow \mathbb{R}^+$ be a prize function, $c: \mathcal{S} \rightarrow \mathbb{R}^+$ be a cost function, $G_{\mathcal{S}}$ be a graph on vertex set $\mathcal{S}$, and $B$ be a budget. In \textbf{MWBCSC}, the goal is to find a subcollection $\mathcal{S}'\subseteq \mathcal{S}$ such that $c(\mathcal{S}')=\sum_{S \in \mathcal{S}'}c(S) \le B$, the subgraph induced by $\mathcal{S}'$ is connected and $p(\mathcal{S}')=\sum_{x \in X_{\mathcal{S}'}} p(x)$ is maximized, where $X_{\mathcal{S}'}=\bigcup_{S \in \mathcal{S}'} S$. Note that, as \textbf{MWBCSC} coincides with the unrooted version of \textbf{URAT} when no two sets share an element, \textbf{MWBCSC} admits no $o(\log \log n)$-approximation algorithm unless $NP \subseteq DTIME(n ^{\text{polylog}(n)})$ even if the budget constraint is violated by a factor of any universal constant $\rho$~\cite{kortsarz2009approximating}. Moreover, since the prize function of \textbf{MWBCSC} is submodular w.r.t. subsets of nodes, our algorithm for the undirected unrooted version of $\PName$ guarantees an $O(\sqrt{B})$-approximation for \textbf{MWBCSC}.

Consider the following assumption.
\begin{assumption}\label{assumAdjacencyMWBCSC}
Two sets $S$ and $S'$ having an element in common are adjacent in $G_{\mathcal{S}}$.
\end{assumption}

Denote by \textbf{UUAT} the unrooted version of \textbf{URAT}. Ran et al.~\cite{ran2016approximation} provided a $\frac{2(\Delta+1)\alpha}{1-e^{-1}}$-approximation algorithm for \textbf{MWBCSC} under Assumption~\ref{assumAdjacencyMWBCSC}, where $\alpha$ is the performance ratio for the algorithm used to approximate \textbf{UUAT}. Note that the best known algorithm for \textbf{UUAT} is an $O(\log{n})$-approximation~\cite{bateni2018improved}. Let $f$ be the maximum frequency of an element in an instance of \textbf{MWBCSC}. Clearly, in the work by Ran et al.~\cite{ran2016approximation}, $\Delta \ge f$ as the authors investigated \textbf{MWBCSC} under Assumption~\ref{assumAdjacencyMWBCSC}. In the following, we aim to provide some approximation results for some variants of \textbf{MWBCSC}.

We first show that there exists a polynomial-time $\alpha f$-approximation algorithm for \textbf{MWBCSC} (without Assumption~\ref{assumAdjacencyMWBCSC}), where $\alpha$ is the performance ratio of an algorithm for \textbf{UUAT}.

\begin{theorem}\label{thMWBCSCGeneral}
\textbf{MWBCSC} admits a polynomial-time $\alpha f$-approximation algorithm, where $f$ is the maximum frequency of an element and $\alpha$ is the performance ratio of an algorithm for \textbf{UUAT}.
\end{theorem}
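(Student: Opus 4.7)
The plan is to give a cost-preserving, prize-approximating reduction from \textbf{MWBCSC} to \textbf{UAT}, then invoke the hypothetical $\alpha$-approximation algorithm for \textbf{UAT} as a black box. Concretely, given an instance $I = \langle X, \mathcal{S}, p, c, G_{\mathcal{S}}, B\rangle$ of \textbf{MWBCSC}, I would build an instance $I'$ of \textbf{UAT} whose underlying graph is exactly $G_{\mathcal{S}}$; the node cost of a set $S\in\mathcal{S}$ is its original cost $c(S)$; and the node prize is defined by summing the prizes of its elements, $p'(S) = \sum_{x\in S} p(x)$. The budget remains $B$. Note that \textbf{UAT} is unrooted, matching the fact that \textbf{MWBCSC} does not distinguish any set.

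Next I would run the $\alpha$-approximation algorithm for \textbf{UAT} on $I'$ to obtain a tree $T$ in $G_{\mathcal{S}}$ with $c(T)\le B$ and $p'(T)\ge \frac{1}{\alpha}\mathrm{OPT}(I')$, and output $\mathcal{S}':=V(T)\subseteq \mathcal{S}$. Since $\mathcal{S}'$ is the vertex set of a tree in $G_{\mathcal{S}}$, it induces a connected subgraph, and $c(\mathcal{S}')=c(T)\le B$, so $\mathcal{S}'$ is feasible for $I$.

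The two inequalities that drive the analysis are:
\begin{enumerate}
    \item For every feasible $\mathcal{S}''\subseteq \mathcal{S}$ for \textbf{MWBCSC}, the induced connected subgraph of $G_{\mathcal{S}}$ on $\mathcal{S}''$ contains a spanning tree $T''$ with $c(T'')\le B$ and $p'(T'') = \sum_{S\in \mathcal{S}''} p'(S) \ge \sum_{x\in X_{\mathcal{S}''}} p(x) = p(\mathcal{S}'')$, because each element of $X_{\mathcal{S}''}$ contributes to at least one term of the double sum. Applying this to an optimal solution $\mathcal{S}^*$ for $I$ gives $\mathrm{OPT}(I')\ge p(\mathcal{S}^*) = \mathrm{OPT}(I)$.
    \item For any $\mathcal{S}'\subseteq \mathcal{S}$, $\sum_{S\in \mathcal{S}'} p'(S) = \sum_{x\in X_{\mathcal{S}'}} p(x)\cdot |\{S\in \mathcal{S}' : x\in S\}| \le f \cdot p(\mathcal{S}')$, since each element appears in at most $f$ sets. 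Hence $p(\mathcal{S}')\ge \frac{1}{f} p'(T)$.
\end{enumerate}
Chaining these yields $p(\mathcal{S}')\ge \frac{1}{f}p'(T)\ge \frac{1}{\alpha f}\mathrm{OPT}(I')\ge \frac{1}{\alpha f}\mathrm{OPT}(I)$, proving the $\alpha f$-approximation. Polynomial runtime is immediate since the reduction is linear in the input size and we make one call to the $\alpha$-approximation algorithm for \textbf{UAT}.

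The only subtlety — and the step that needs to be stated carefully rather than calculated — is the double-counting bound in item (2), which is precisely where the factor $f$ enters and which also explains why the bound cannot be improved by this simple reduction without further structural assumptions (e.g.\ Assumption~\ref{assumAdjacencyMWBCSC}, which is used in the subsequent corollary). No budget violation is incurred because costs transfer identically between the two instances.
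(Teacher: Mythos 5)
Your reduction is exactly the one in the paper: build a \textbf{UAT} instance on $G_{\mathcal{S}}$ with node costs $c(S)$ and node prizes $p'(S)=\sum_{x\in S}p(x)$, run the $\alpha$-approximation for \textbf{UAT}, and lose at most a factor $f$ via the double-counting bound $\sum_{S\in V(T)}p'(S)\le f\,p(V(T))$. Your two inequalities correspond precisely to the paper's observations $p(OPT_I)\le p'(OPT_J)$ and $p'(T)\le f\,p(V(T))$, so the argument is correct and matches the paper's proof.
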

\begin{proof}
 Let $I=(X, \mathcal{S}, c, p, G_{\mathcal{S}}=(\mathcal{S},E_S), B)$ be an instance \textbf{MWBCSC}. We create an instance $J=(G_{\mathcal{S}}, c, p', B)$ of \textbf{UUAT}, where $p'(S)=\sum_{x \in S} p(x)$ for each $S\in \mathcal{S}$, that results in loosing an approximation factor $f$. 
 
 Let $T$ be the solution returned by an $\alpha$-approximation algorithm for \textbf{UUAT} applied to $J$, $X_{T}=\cup_{S\in V(T)}S$, $OPT_I$, and $OPT_j$ be optimal solutions for $I$ and $J$, respectively.
 
 We have $p'(T)\ge \frac{1}{\alpha}p'(OPT_J)$, $p(OPT_I)\le p'(OPT_I)\le p'(OPT_J)$, and
\begin{align*}
    p'(T)= \sum_{S\in V(T)} \sum_{x\in S} p(x) =\sum_{x\in X_{T}}|\{S\in V( T) : x\in S\}|p(x) \leq f\sum_{x\in X_{T}}p(x) = f p(V(T)).
\end{align*}

Therefore,
\begin{align*}
    p(V(T))\ge \frac{1}{f} p'(T) \ge \frac{1}{\alpha f}p'(OPT_J)\ge \frac{1}{\alpha f}p'(OPT_I).
\end{align*}
This implies that $V(T)$ is an $\alpha f$-approximation for $I$.
\end{proof}
Since the algorithm by Bateni et al. for \textbf{UUAT} guarantees an $O(\log n)$-approximation, where $n$ is the number of nodes in the graph, then \textbf{MWBCSC} admits a polynomial-time $O(f\log{|\mathcal{S}|})$-approximation algorithm.

In the next step, we show that \textbf{MWBCSC} admits a $2f$-approximation algorithm when each set has cost $1$. Note that this variant of \textbf{MWBCSC} has some applications, e.g., the detection of mutated pathways in cancers~\cite{hochbaum2020approximation, vandin2011algorithms} and wireless sensor networks~\cite{gao2018algorithm, huang2020approximation, kuo2014maximizing, ran2016approximation, xu2021throughput, yu2019connectivity}.

\begin{theorem}\label{thMWBCSCwithCost1}
When each set costs $1$, \textbf{MWBCSC} admits a polynomial-time $2f$-approximation algorithm, where $f$ is the maximum frequency of an element.
\end{theorem}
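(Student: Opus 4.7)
The plan is to compose two reductions: first, apply Theorem~\ref{thMWBCSCGeneral} to turn unit-cost \textbf{MWBCSC} into (unrooted, unit-node-cost) \textbf{UAT} at the price of a factor of $f$; and second, exhibit a polynomial-time $2$-approximation for the resulting \textbf{UAT} subproblem. Multiplying the two ratios yields the claimed $2f$ bound.

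For the first reduction, given an instance $I=(X,\mathcal{S},c\equiv 1, p, G_{\mathcal{S}}, B)$ of \textbf{MWBCSC}, I would form the instance $J=(G_{\mathcal{S}}, c'\equiv 1, p', B)$ of unrooted \textbf{UAT} with $p'(v_S)=\sum_{x\in S}p(x)$, exactly as in the proof of Theorem~\ref{thMWBCSCGeneral}. That proof shows, via double-counting each element at most $f$ times, that any $\alpha$-approximate solution for $J$ yields an $\alpha f$-approximate solution for $I$. Hence the whole task reduces to designing a polynomial-time $2$-approximation for the unit-node-cost unrooted \textbf{UAT}.

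For the second reduction, I would exploit that every tree on $k$ vertices has exactly $k-1$ edges. Therefore the feasible region of unit-node-cost \textbf{UAT} with budget $B$ coincides with the feasible region of unit-edge-cost (unrooted) \textbf{E-URAT} with budget $B-1$, and since the vertex prizes are unchanged, the two problems are prize-equivalent. Paul et al.~\cite{paul2020budgeted} provide a polynomial-time $2$-approximation for the \emph{rooted} \textbf{E-URAT}, so I would upgrade it to the unrooted setting by the standard ``guess the root'' trick: run Paul et al.'s algorithm once for each candidate root $v\in V$ and return the best of the $|V|$ trees found. Correctness is immediate, because fixing any vertex of an optimal unrooted tree $T^{*}$ as root makes $T^{*}$ feasible for the corresponding rooted instance; Paul et al.'s algorithm applied to that root then returns a tree of prize at least $p(T^{*})/2$ and of edge-cost at most $B-1$, which translates back to at most $B$ vertices in the original \textbf{UAT} instance.

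Concatenating the two reductions gives a polynomial-time $2f$-approximation for unit-cost \textbf{MWBCSC}. The only delicate point is ensuring that the bound of~\cite{paul2020budgeted} is a genuine $2$-approximation without budget violation (which is the case, as recalled in the related-work discussion) and that the node-cost/edge-cost equivalence carries through without introducing any loss; all the remaining steps are routine book-keeping of feasibility and prize preservation along the chain of reductions.
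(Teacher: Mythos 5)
Your proposal matches the paper's proof step for step: reduce via Theorem~\ref{thMWBCSCGeneral} to unrooted unit-node-cost \textbf{UAT} at a cost of a factor $f$, observe the equivalence between a tree with at most $B$ unit-cost nodes and one with at most $B-1$ unit-cost edges, and then obtain a $2$-approximation for the unrooted edge-cost problem by guessing the root and applying Paul et al.'s rooted algorithm. This is exactly the argument the paper gives.
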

\begin{proof}
We observe that \textbf{URAT} when each vertex costs $1$ is equivalent to \textbf{E-URAT} when each edge costs $1$ from an approximation point of view. Indeed, each tree with at most $B-1$ edges includes at most $B$ nodes and vice-versa. The same holds for the unrooted versions of \textbf{URAT} and \textbf{E-URAT}, which we call \textbf{UUAT} and \textbf{E-UUAT}, respectively. Moreover, \textbf{E-UUAT} admits a $2$-approximation algorithm that consists in guessing the root and applying the 2-approximation algorithm for \textbf{E-URAT} by Paul et al.~\cite{paul2020budgeted}. Therefore, we have a $2$-approximation algorithm for \textbf{UUAT} when each vertex costs 1 and the theorem follows by Theorem~\ref{thMWBCSCGeneral}.
\end{proof}

In the next step, we will provide an approximation algorithm for \textbf{MWBCSC} which does not depend on $f$. Recall that in a node-weighted graph $G$, for any two vertices $u$ and $v$, $dist(u,v)$ is the minimum cost of a path $P$ from $u$ to $v$, where the cost of $P$ is the sum of the costs of its nodes. 
Now we show the following.

\begin{theorem}\label{thMWBCSCDistance}
Let $\mathcal{A}$ be an $\alpha$-approximation algorithm for \textbf{UUAT}. \textbf{MWBCSC} admits a polynomial-time bicriteria $(h+1, \alpha)$-approximation algorithm, where $h=\max_{\substack{S, S' \in \mathcal{S}\\S \cap S' \ne \emptyset}} (dist(S,S')-(c(S)+c(S')))$ such that $c(S) \ge 1, \forall{S \in \mathcal{S}}$.
\end{theorem}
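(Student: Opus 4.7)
}
The plan is to reduce \textbf{MWBCSC} to a \textbf{UAT} instance on an auxiliary bipartite-augmented graph, run $\mathcal{A}$ on it, and then convert the returned tree back to a feasible \textbf{MWBCSC} solution by patching shortest paths between sets that share an element but are not adjacent in $G_\mathcal{S}$. Given an instance $I=(X,\mathcal{S},c,p,G_\mathcal{S},B)$, I would build a graph $G'$ with node set $\mathcal{S}\cup X$. Every $v_S\in\mathcal{S}$ keeps cost $c(S)$ and receives prize $0$; every $v_x\in X$ receives prize $p(x)$ and cost $0$. The edges of $G'$ are those of $G_\mathcal{S}$ together with one edge $(v_S,v_x)$ whenever $x\in S$. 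Let $J=(G',c',p',B)$ be the resulting instance of \textbf{UAT}.

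The first routine step is to observe that an optimal \textbf{MWBCSC} solution $\mathcal{S}^\ast$ yields a feasible solution for $J$ with the same cost and prize: take any spanning tree of the induced subgraph on $\mathcal{S}^\ast$ in $G_\mathcal{S}$, and attach each element of $X_{\mathcal{S}^\ast}$ to one of its covering sets in $\mathcal{S}^\ast$ via the bipartite edges. This is a tree in $G'$ of cost $c(\mathcal{S}^\ast)\le B$ and prize $p(OPT_I)$, so $p'(OPT_J)\ge p(OPT_I)$. Applying $\mathcal{A}$ to $J$ produces a tree $T$ with $c'(T)\le B$ and $p'(T)\ge p'(OPT_J)/\alpha \ge p(OPT_I)/\alpha$.

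Next, I would transform $T$ back into an \textbf{MWBCSC} solution. Let $\mathcal{S}_T=\{S:v_S\in V(T)\}$ and $X_T=\{x:v_x\in V(T)\}$; note $c(\mathcal{S}_T)=c'(T)\le B$ and $X_T\subseteq X_{\mathcal{S}_T}$ because every $v_x\in V(T)$ has at least one neighbour $v_S$ in $T$ with $x\in S$. The subcollection $\mathcal{S}_T$ need not induce a connected subgraph of $G_\mathcal{S}$, since $T$ may connect sets via element nodes. For each element node $v_x\in V(T)$ let $N_T(v_x)$ denote its set-neighbours in $T$ and pick an arbitrary $\hat S_x\in N_T(v_x)$. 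For every other $S'\in N_T(v_x)\setminus\{\hat S_x\}$ I add to the solution a shortest $\hat S_x\text{--}S'$ path in $G_\mathcal{S}$; by the definition of $h$, each such path has cost at most $c(\hat S_x)+c(S')+h$, i.e.\ extra cost at most $h$ beyond the two endpoints. Call the resulting subcollection $Sol$; it is connected in $G_\mathcal{S}$ by construction, and covers every element of $X_T$, so $p(Sol)\ge p'(T)\ge p(OPT_I)/\alpha$.

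The main obstacle, and the only nontrivial piece of arithmetic, is bounding $c(Sol)$. Let $d_x$ be the degree of $v_x$ in $T$, and let $e_{SS}, e_{XS}$ be the numbers of set--set and set--element edges of $T$. Then $\sum_{x\in X_T}d_x=e_{XS}\le e_{SS}+e_{XS}=|\mathcal{S}_T|+|X_T|-1$, so $\sum_{x\in X_T}(d_x-1)\le |\mathcal{S}_T|-1$. The added patches contribute extra cost at most $h\sum_{x\in X_T}(d_x-1)\le h(|\mathcal{S}_T|-1)$, and the assumption $c(S)\ge 1$ for all $S$ gives $|\mathcal{S}_T|\le c(\mathcal{S}_T)\le B$. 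Hence $c(Sol)\le c(\mathcal{S}_T)+h(|\mathcal{S}_T|-1)\le B+hB=(h+1)B$, which yields the claimed $(h+1,\alpha)$ bicriteria guarantee. The whole construction runs in polynomial time (shortest paths in $G_\mathcal{S}$ can be precomputed, and the transformation and patching are linear in $|G'|$).
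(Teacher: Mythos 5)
Your proof is correct and follows the same route as the paper's: reduce to \textbf{UAT} on the bipartite-augmented graph $G'$ with zero-prize set-nodes and zero-cost element-nodes, observe that an optimal \textbf{MWBCSC} solution induces a feasible tree in $G'$ (so $p'(OPT_J)\ge p(OPT_I)$), run $\mathcal{A}$, and then convert back by replacing element nodes with shortest-path patches in $G_{\mathcal{S}}$ of extra cost at most $h$ each. Where you differ is in the accounting of the back-conversion: the paper (Lemma~\ref{lmSolutionFromNWBPtoMWBCSCwithDistance}) builds a graph $G(Sol')$ of \emph{virtual} edges $\{v_S,v_{S'}\}$ for each pair sharing an element-node in $SOL_J$, takes a spanning tree $\bar T$ of it, and bounds $|E(\bar T)|\le B-1$; as written this construction never re-inserts the set--set edges already present in $SOL_J$, which is needed for $G(Sol')$ to even be connected. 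Your star-wise patching (fixing one $\hat S_x$ per element node) together with the explicit degree count $\sum_{x}(d_x-1)\le |\mathcal{S}_T|-1\le B-1$ cleanly sidesteps that imprecision while reaching the same $(h+1)B$ cost bound, so it is a somewhat tighter write-up of the same idea rather than a different approach.
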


To prove Theorem~\ref{thMWBCSCDistance}, we need some observations and materials.
Let $I=(X, \mathcal{S}, c, p, G_{\mathcal{S}}, B)$ be an instance \textbf{MWBCSC}. Now we show that one can transform an instance $I$ of \textbf{MWBCSC} to an instance of $J=(G'=(V', E'), c', p', B)$ of \textbf{UUAT}. From $G_{\mathcal{S}}=(V(G_{\mathcal{S}}), E(G_{\mathcal{S}}))$, we create graph a $G'=(V', E')$ as follows: $V'=V_{\mathcal{S}} \cup V_X$, where $V_{\mathcal{S}}=\{v_S:S \in V(G_{\mathcal{S}})\}$ and  $V_X=\{v_x:x\in X\}$, and $E'= E(G_{\mathcal{S}}) \cup \{\{v_S, v_x\}:x\in S\}$. We set for any $v_S \in V_{\mathcal{S}}$, $p'(v_S)=0$ and $c'(v_S)=c(S)$. For any $v_x \in V_X$, $p'(v_x)= p(x)$ and $c'(v_x)=0$. 

\begin{lemma}\label{lmSolutionFromNWBPtoMWBCSCwithDistance}
Let $SOL_J$ be a solution to \textbf{UUAT} on the instance $J$. From $SOL_J$, a solution $SOL_I$ to \textbf{MWBCSC} on instance $I$ can be obtained such that $p(SOL_I) \ge p'(SOL_J)$ and $c(SOL_I) \le (h+1)B$.
\end{lemma}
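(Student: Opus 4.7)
The plan is to build $SOL_I$ by taking the family of sets corresponding to the $V_{\mathcal{S}}$-nodes of $SOL_J$ and then augmenting it with the intermediate sets of shortest paths in $G_{\mathcal{S}}$ used to bridge the pieces glued together by element-nodes of $SOL_J$. Let $V_{\mathcal{S}}^{\ast} := V(SOL_J)\cap V_{\mathcal{S}}$, $V_X^{\ast} := V(SOL_J)\cap V_X$, and $\mathcal{S}' := \{S : v_S\in V_{\mathcal{S}}^{\ast}\}$. Because $c'(v_x)=0$ and $c'(v_S)=c(S)$, we have $c(\mathcal{S}') = c'(SOL_J) \le B$. For the prize bound, every $v_x\in V_X^{\ast}$ has all its $G'$-neighbors in $V_{\mathcal{S}}$, so at least one of its $SOL_J$-neighbors $v_S$ lies in $V_{\mathcal{S}}^{\ast}$, giving $x\in S\subseteq \bigcup_{S\in \mathcal{S}'} S$. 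Hence $p(\mathcal{S}')\ge \sum_{v_x\in V_X^{\ast}} p(x) = p'(SOL_J)$, and any superset $SOL_I\supseteq \mathcal{S}'$ preserves this inequality by monotonicity of $p$.

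The main task is to make $\mathcal{S}'$ connected in $G_{\mathcal{S}}$ at controlled cost. Consider the subgraph of $SOL_J$ induced on $V_{\mathcal{S}}^{\ast}$: since $SOL_J$ is a tree, this is a forest, say with components $C_1,\dots,C_k$. Edges inside a component $C_i$ lie in $E(G_{\mathcal{S}})$ by construction of $G'$, so $\{S:v_S\in C_i\}$ is already connected in $G_{\mathcal{S}}$. For each $v_x\in V_X^{\ast}$ of degree $d_{v_x}$ in $SOL_J$, its neighbors all lie in $V_{\mathcal{S}}^{\ast}$; by acyclicity of $SOL_J$ they must lie in $d_{v_x}$ pairwise distinct components $C_{i_1},\dots,C_{i_{d_{v_x}}}$ (otherwise two neighbors of $v_x$, together with $v_x$ and the in-component path joining them, would form a cycle in $SOL_J$). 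Any two such neighbors correspond to sets $S,S'$ with $x\in S\cap S'$, so $dist(S,S')-(c(S)+c(S'))\le h$ in $G_{\mathcal{S}}$. We pick $d_{v_x}-1$ such pairs to stitch these components together and add, for each pair, the intermediate sets of a corresponding $G_{\mathcal{S}}$-shortest path to $SOL_I$; this contributes at most $h$ per bridge.

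Finally I bound the total bridging cost. Since $SOL_J$ is a tree and $V_X^{\ast}$ is an independent set in $G'$, counting edges in $SOL_J$ yields $(|V_{\mathcal{S}}^{\ast}|-k) + \sum_{v_x\in V_X^{\ast}} d_{v_x} = |V_{\mathcal{S}}^{\ast}|+|V_X^{\ast}|-1$, hence $\sum_{v_x\in V_X^{\ast}}(d_{v_x}-1)=k-1$. Because $c(S)\ge 1$ for every $S$, we have $k\le |V_{\mathcal{S}}^{\ast}|\le c(\mathcal{S}')\le B$, so the total bridging cost is at most $h(k-1)\le h(B-1)\le hB$. Therefore $c(SOL_I)\le c(\mathcal{S}')+hB\le (h+1)B$, and by construction $SOL_I$ induces a connected subgraph of $G_{\mathcal{S}}$, yielding the claimed solution.

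The main subtlety I expect is the acyclicity observation: the $d_{v_x}$ neighbors of each element-node $v_x$ must sit in $d_{v_x}$ distinct tree-components of $V_{\mathcal{S}}^{\ast}$. This is what lets the local bridging cost $\sum_{v_x}(d_{v_x}-1)$ telescope down to the global quantity $k-1$, which is in turn controlled by the budget via $c(S)\ge 1$; once this is in place, the rest is a standard tree edge-count identity.
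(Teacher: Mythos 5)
Your proof is correct and takes essentially the same approach as the paper: keep the sets $\mathcal{S}'$ corresponding to $V_{\mathcal{S}}$-nodes of $SOL_J$ (cost $\le B$), bridge the disconnected pieces via shortest paths in $G_{\mathcal{S}}$ between sets sharing an element (intermediate cost $\le h$ each), and bound the number of bridges by roughly $B$. The paper phrases the bridge-count bound via a spanning tree of an auxiliary graph on $V_{\mathcal{S}}^*$, while you obtain the cleaner exact count $k-1$ through the acyclicity observation and the edge-count identity $\sum_{v_x}(d_{v_x}-1)=k-1$; this is a tidier bookkeeping of the same idea and arguably more rigorous about why the bridges suffice to reconnect the components.
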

\begin{proof}
Let $Sol'=\emptyset$. We repeat the following process: if there exist two edges $\{v_S, v_x\}, \{v_{S'}, v_x\} \in E'(SOL_J)$, then $Sol'=Sol'\cup \{v_S, v_{S'}\}$, where $\{v_S, v_{S'}\}$ is an edge. Let $G(Sol')$ be the graph with edges in $Sol'$. Let $\bar T=(V(\bar T), E(\bar T))$ be a spanning tree of $G(Sol')$. Note that we have $|E(\bar T)| \le B-1$ as $c(S) \ge 1, \forall{S \in \mathcal{S}}$. 
Note that if there exist two edges $\{v_S, v_x\}, \{v_{S'}, v_x\} \in E'(SOL_J)$, where $v_S \ne v_{S'}$, there is a path $P$ of distance at most $h+c(v_S)+c(v_{S'})$ between $v_S, v_{S'} \in G_{\mathcal{S}}$. Let $SP_e$ be a shortest path on $G_{\mathcal{S}}$ connecting $e=\{v_S, v_{S'}\}$. Let $Sol''=\{SP_e: e \in E(\bar T)\}$ and $G(Sol'')$ is a graph with edges in $Sol''$. $G(Sol'')$ is a connected subgraph of $G_{\mathcal{S}}$. Let $SOL_I$ be a spanning tree of $G(Sol'')$. This means that $c(SOL_I) \le B+(B-1)h \le (h+1)B$ as $c(\bar T) \le B$ and $|E(\bar T)| \le B-1$ and each edge $e$ of $G(Sol')$ is replaced by a path $SP_e$ of cost at most $h$ in $SOL_I$. Also, note that $p(SOL_J)=\sum_{v_x \in V_X \cap V'(SOL_J)} p(v_x)$ as vertices in $ V_{\mathcal{S}} \cap V'$ has prize zero in $J$. So, we have $p'(SOL_I) \ge p(SOL_J)$ as $SOL_I$ spans all vertices in $Sol''$ and $Sol''$ spans all vertices in $Sol'$. Hence a feasible solution (with at least the same prize as $SOL_J$ and cost no more than $(h+1)B$) can be obtained from $SOL_J$ for \textbf{MWBCSC}. This concludes the lemma.
\end{proof} 

\begin{observation}\label{obsDistanceNW-BP-MWBCSCD}
Let $J=(G'=(V', E'), c', p', B)$ be an instance of \textbf{UUAT} resulting in our transformation from an instance $I=(X, \mathcal{S}, c, p, G_{\mathcal{S}}, B)$ of \textbf{MWBCSC}. Let $OPT_J$ and $OPT_I$ be optimal solutions to $J$ and $I$, respectively. Then $p(OPT_J) \ge p(OPT_I)$.
\end{observation}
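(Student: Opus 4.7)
The plan is to prove the observation by the natural forward direction of the reduction: take an optimal solution $OPT_I$ of \textbf{MWBCSC} on $I$ and exhibit a feasible solution to \textbf{UAT} on $J$ whose prize is at least $p(OPT_I)$, which immediately gives $p'(OPT_J)\geq p(OPT_I)$.

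Concretely, let $\mathcal{S}^*\subseteq \mathcal{S}$ denote $OPT_I$ and $X_{\mathcal{S}^*}=\bigcup_{S\in\mathcal{S}^*}S$. I would define the candidate vertex set in $G'$ by $U:=\{v_S : S\in \mathcal{S}^*\}\cup \{v_x : x\in X_{\mathcal{S}^*}\}\subseteq V'$. The three properties I need to verify are: (i) $c'(U)\le B$, (ii) the subgraph of $G'$ induced by $U$ is connected, and (iii) $p'(U)\ge p(OPT_I)$. For (i), the vertices in $V_X$ contribute $0$ to $c'$ by construction, while the vertices in $\{v_S:S\in\mathcal{S}^*\}$ contribute exactly $\sum_{S\in\mathcal{S}^*}c(S)=c(\mathcal{S}^*)\le B$. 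For (iii), vertices in $V_\mathcal{S}$ contribute $0$ to $p'$, while the $V_X$-vertices contribute $\sum_{x\in X_{\mathcal{S}^*}}p(x)=p(\mathcal{S}^*)$.

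The only step requiring a small argument is connectedness (ii). Since $\mathcal{S}^*$ induces a connected subgraph in $G_{\mathcal{S}}$ and all edges of $E(G_{\mathcal{S}})$ were copied into $E'$, the sub-set $\{v_S:S\in\mathcal{S}^*\}$ induces a connected subgraph in $G'$. Moreover, every $v_x$ with $x\in X_{\mathcal{S}^*}$ satisfies $x\in S$ for some $S\in\mathcal{S}^*$, so the edge $\{v_S,v_x\}$ is present in $E'$ by the definition of the reduction; hence $v_x$ attaches to the connected core through $v_S$. Taking any spanning tree $T$ of the connected induced subgraph on $U$ gives a feasible tree for \textbf{UAT} on $J$ with $c'(T)=c'(U)\le B$ and $p'(T)=p'(U)=p(OPT_I)$.

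Since $OPT_J$ is an optimal \textbf{UAT} solution on $J$, we conclude $p'(OPT_J)\ge p'(T)\ge p(OPT_I)$, as claimed. The main (very minor) obstacle is just keeping the bookkeeping between $p$ and $p'$ and between $c$ and $c'$ straight, and ensuring that attaching the $V_X$-vertices does not inflate the cost — which it does not, because they carry cost $0$ in $J$.
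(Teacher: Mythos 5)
Your proposal is correct and follows essentially the same route as the paper: transform the optimal \textbf{MWBCSC} solution into a feasible \textbf{UAT} solution on $J$ of the same cost and prize (the paper states this in one line; you spell out the cost, prize, and connectedness bookkeeping explicitly). No gaps.
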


\begin{proof}
From any solution $T$ to $I$ where $c(T)\le B$, we can always create a solution $T'$ to $J$ by our transformation, where $c(T')=c(T)=B$ and $p(T')=p(T)$. This means that we never have the case where $p(OPT_J)< p(OPT_I)$, otherwise by transforming $OPT_I$ to a solution for $J$, we can achieve a better solution for $J$, which contradicts the assumption that $OPT_J$ is an optimal solution to $J$. This concludes the proof.\qedhere

\end{proof}

Now we are ready to prove Theorem~\ref{thMWBCSCDistance}.

\begin{proof}[Proof of Theorem~\ref{thMWBCSCDistance}]
Let $OPT_J$ and $OPT_I$ be the optimum solutions to $J$ and $I$, respectively. By Lemma~\ref{lmSolutionFromNWBPtoMWBCSCwithDistance} and Observation~\ref{obsDistanceNW-BP-MWBCSCD}, any $\alpha$-approximation for \textbf{UUAT} results in a solution $SOL_I$ for \textbf{MWBCSC} such that $p(SOL_I) \ge \frac{OPT_J}{\alpha} \ge \frac{OPT_I}{\alpha}$ and $c(SOL_I) \le (h+1)B$. 
\end{proof}

\begin{corollary}\label{cor:logMWBCSCD}
One can have a polynomial-time $O(h\log{n})$-approximation algorithm for \textbf{MWBCSC}, when $c(S) \ge 1, \forall{S \in \mathcal{S}}$.
\end{corollary}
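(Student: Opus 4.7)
The plan is to derive the corollary as a direct instantiation of Theorem~\ref{thMWBCSCDistance}, by plugging in the best-known algorithm for \textbf{UAT}. Bateni et al.~\cite{bateni2018improved} showed that \textbf{UAT}---the unrooted, node-weighted, additive-prize tree problem on undirected graphs---admits a polynomial-time $O(\log n)$-approximation algorithm that respects the budget constraint, where $n$ is the number of vertices in the input graph. This algorithm can be obtained by a root-guessing reduction from their rooted result for \textbf{URAT} discussed in the introduction.

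With this in hand, I would set $\mathcal{A}$ in Theorem~\ref{thMWBCSCDistance} to be Bateni et al.'s algorithm, so that $\alpha = O(\log n)$. Theorem~\ref{thMWBCSCDistance} immediately yields a polynomial-time bicriteria $(h+1,\,O(\log n))$-approximation algorithm for \textbf{MWBCSC}: the returned family $\mathcal{S}'$ is connected in $G_{\mathcal{S}}$, has cost at most $(h+1)B$, and has prize at least $\Omega(OPT(B)/\log n)$. Combining the two factors of the bicriteria bound gives a total approximation error of $(h+1)\cdot O(\log n) = O(h\log n)$, which is the bound claimed in the corollary. The running time is polynomial since both the reduction described in the proof of Theorem~\ref{thMWBCSCDistance} (construction of the auxiliary graph $G'$ with two kinds of vertices, and the post-processing step that short-cuts shared elements and replaces each tree edge by a shortest path in $G_{\mathcal{S}}$) and the \textbf{UAT} solver are polynomial.

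If a strictly single-criterion reading is desired, the same bound can be recovered by scaling: run the bicriteria algorithm on the reduced instance with budget $B/(h+1)$ so the produced solution has cost at most $B$. A partitioning argument in the spirit of Lemma~\ref{lmClaimKuoExtension}, applied in the undirected unrooted setting to a spanning tree of the optimal connected family $\mathcal{S}^\star$, decomposes it into $O(h+1)$ connected subfamilies of cost $O(B/(h+1))$ each; then monotone submodularity of the coverage prize $p(\mathcal{S}')$ implies that at least one subfamily has prize $\Omega(OPT(B)/h)$, giving $OPT(B/(h+1)) = \Omega(OPT(B)/h)$ and hence the same $O(h\log n)$ factor without budget violation. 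The only delicate point in this alternative route is managing the additive root-cost term $c(r_i)$ in Lemma~\ref{lmClaimKuoExtension}; under the standing assumption $c(S)\ge 1$ for every $S\in\mathcal{S}$, this can be absorbed by a suitable rescaling of the target piece cost.
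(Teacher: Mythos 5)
Your first paragraph does not establish the corollary. Theorem~\ref{thMWBCSCDistance} gives a \emph{bicriteria} $(h+1,\alpha)$-guarantee: a solution of cost up to $(h+1)B$ and prize $\Omega(OPT/\alpha)$. The corollary, as stated and as proved in the paper, is a \emph{single-criterion} $O(h\log n)$-approximation, i.e.\ a feasible solution of cost at most $B$. The sentence ``combining the two factors of the bicriteria bound gives a total approximation error of $(h+1)\cdot O(\log n)$'' has no formal content: a budget violation by $h+1$ and an approximation factor of $O(\log n)$ do not multiply into a budget-respecting $O(h\log n)$-approximation. One genuinely has to convert the budget overshoot into a prize loss, which requires a trimming argument that the proposal never performs.

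Your second paragraph tries to repair this by rescaling the budget to $B/(h+1)$ and invoking a decomposition in the spirit of Lemma~\ref{lmClaimKuoExtension}, but this route also has a gap you already flag and do not close: each out-subtree produced by Lemma~\ref{lmClaimKuoExtension} has cost $m + c(r_i)$, and $c(r_i)$ can be as large as $B$. The standing assumption $c(S)\ge 1$ gives only a lower bound on $c(r_i)$ and cannot ``absorb'' the term; a single expensive set can dominate the budget, and ``suitable rescaling'' is not a proof. This is exactly the obstacle that the paper's proof handles by first reducing to the \emph{flat} case (guess the maximum-cost vertex, subtract its cost from $B$, delete any remaining vertex costing more than half the residual budget), as in Theorem~\ref{thMainUndirectedEW-DRBP}.

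More to the point, what the paper actually does and you do not is apply the \emph{submodular} trimming process Lemma~\ref{lmBateniTrimmingProcessGeneralUnrooted} to the $O((h+1)B)$-cost tree obtained after the transformation of Lemma~\ref{lmSolutionFromNWBPtoMWBCSCwithDistance}: since the MWBCSC objective $p(\mathcal{S}')=\sum_{x\in X_{\mathcal{S}'}}p(x)$ is a monotone submodular, not additive, function of the collection of sets, the additive-prize trimming of Bateni et al.\ does not apply here, and the generalized lemma is precisely what pays the final $O(h)$ factor to bring the cost back down to $B$. Without this step the stated $O(h\log n)$ bound with cost at most $B$ is not obtained.
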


\begin{proof}

Note that Bateni et al.~\cite{bateni2018improved}'s approach for the prize collecting problem and the black-box introduced by Moss and Rabbani~\cite{moss2007approximation} result in an $O(\log n)$-approximation algorithm for \textbf{UUAT} violating the budget constraint by a factor of $2$. We call this approach \textbf{BMAlg}. Let $T$ be such a solution after applying \textbf{BMAlg}, where $c(T)\le 2B$ and $p(T)=O(p(OPT))$ that $OPT$ is an optimal solution to \textbf{UUAT}. Then Bateni et al.~\cite{bateni2018improved} provided a trimming process to achieve a solution $T' \subseteq T$ which has cost $c(T')\le B$ by loosing some constant in the approximation factor.

Note that as in \textbf{MWBCSC} the prize function is monotone submodular, we need to use our trimming process (i.e., Lemma~\ref{lmBateniTrimmingProcessGeneralUnrooted}). Now we follow the same argument as that of Theorem~\ref{thMainUndirectedEW-DRBP}. This means that we first restrict ourselves to flat trees by guessing the vertex $x$ with the maximum cost, setting the budget $B'=B-c(x)$ and removing vertices with cost more than $B/2$. Next we transform the instance $I$ of \textbf{MWBCSC} to the instance $J$ of \textbf{UUAT} as above. Now by \textbf{BMAlg}, we can achieve a solution $SOL_J$ to $J$ such that $p(SOL_J)=O(\log n)$ and $c(SOL_J)\le 2B$. We then transform $SOL_J$ to a solution for $I$, denoted by $SOL_I$ as Lemma~\ref{lmSolutionFromNWBPtoMWBCSCwithDistance}. Hence $p(SOL_I)=O(\log n)$ and $c(SOL_I)\le 2B(h+1)$. Finally, we apply Lemma~\ref{lmBateniTrimmingProcessGeneralUnrooted} to achieve a subtree $\hat{T}$ of $SOL_I$ with the cost $B/4 \le c(\hat{T}) \le B$ and the prize-to-cost ratio $\frac{p(\hat T)}{c(\hat T)}\ge\frac{\gamma}{32(2B(h+1))+8}$. This implies that $p(\hat{T})=O(\frac{1}{h\log n})$.\qedhere


\end{proof}

Another important consequence of Theorem~\ref{thMWBCSCDistance} is that an $\alpha$-approximation algorithm for \textbf{UUAT} yields an $\alpha$-approximation algorithm for \textbf{MWBCSC} under Assumption~\ref{assumAdjacencyMWBCSC}. This is an improvement over $\frac{2(\Delta+1)\alpha}{1-e^{-1}}$-approximation algorithm of Ran et al.~\cite{ran2016approximation}, where $\Delta$ is the maximum degree of the graph. Note that in this case we do not need to assume that $c(S) \ge 1, \forall{S \in \mathcal{S}}$.

\begin{corollary}\label{assumAdjacencyCMCP}
One can have a polynomial-time $O(\log{n})$-approximation algorithm for \textbf{MWBCSC} under Assumption~\ref{assumAdjacencyMWBCSC}.
\end{corollary}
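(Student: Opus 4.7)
The plan is to invoke Theorem~\ref{thMWBCSCDistance} with the $O(\log n)$-approximation algorithm for \textbf{UAT} obtained by combining Bateni et al.~\cite{bateni2018improved} with the black-box of Moss and Rabbani~\cite{moss2007approximation}, and to observe that Assumption~\ref{assumAdjacencyMWBCSC} forces the parameter $h$ in Theorem~\ref{thMWBCSCDistance} to vanish.

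First I would establish that $h=0$ under Assumption~\ref{assumAdjacencyMWBCSC}. For any two sets $S,S'\in\mathcal{S}$ with $S\cap S'\neq\emptyset$, the assumption guarantees the existence of an edge $\{S,S'\}$ in $G_{\mathcal{S}}$, so a shortest path from $S$ to $S'$ in the node-weighted graph $G_{\mathcal{S}}$ consists of just these two endpoints and therefore has cost $dist(S,S')=c(S)+c(S')$. Taking the maximum over all intersecting pairs gives $h=\max_{S\cap S'\neq\emptyset}\bigl(dist(S,S')-(c(S)+c(S'))\bigr)=0$.

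Next I would re-examine the proof of Theorem~\ref{thMWBCSCDistance} to check that the hypothesis $c(S)\ge 1$ becomes superfluous once $h=0$. That hypothesis was used solely to bound $|E(\bar T)|\le B-1$ so that the cost blow-up from replacing every edge of $\bar T$ by a shortest-path gadget in $G_{\mathcal{S}}$ could be controlled through the inequality $c(SOL_I)\le c(\bar T)+(B-1)h\le (h+1)B$. When $h=0$ each replacement path $SP_e$ adds no extra cost beyond what is already charged in $c(\bar T)\le B$, and the resulting solution $SOL_I$ satisfies $c(SOL_I)\le B$ directly, without appealing to any lower bound on set costs.

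Finally, feeding the $O(\log n)$-approximation for \textbf{UAT} into Theorem~\ref{thMWBCSCDistance} yields, under Assumption~\ref{assumAdjacencyMWBCSC}, a polynomial-time bicriteria $(h+1,\alpha)=(1,O(\log n))$-approximation for \textbf{MWBCSC}, which is simply an $O(\log n)$-approximation without budget violation, as claimed. The only delicate point is the bookkeeping in Lemma~\ref{lmSolutionFromNWBPtoMWBCSCwithDistance} when $h=0$; I don't expect a real obstacle, but it is what justifies dropping the $c(S)\ge 1$ assumption in this corollary compared with Theorem~\ref{thMWBCSCDistance}.
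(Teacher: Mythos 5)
Your proof is correct and follows essentially the same approach as the paper: both rely on the transformation to \textbf{UAT} used in Theorem~\ref{thMWBCSCDistance} together with the observation that Assumption~\ref{assumAdjacencyMWBCSC} lets one translate the \textbf{UAT} solution back into $G_{\mathcal{S}}$ with no cost blow-up. The paper's own proof re-traces the construction directly rather than explicitly setting $h=0$ in Theorem~\ref{thMWBCSCDistance}, but your observation that $h=0$ (and that this makes the $c(S)\ge 1$ hypothesis in that theorem moot) is precisely the reason the paper's remark preceding the corollary says the $c(S)\ge 1$ assumption can be dropped here.
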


\begin{proof}
Let $J=(G'=(V', E'), c', p', B)$ be an instance of \textbf{UUAT} resulting in our transformation from an instance $I=(X, \mathcal{S}, c, p, G_{\mathcal{S}}, B)$ of \textbf{MWBCSC}. Now we run the algorithm by Bateni et al. on $J$. Let $SOL_J$ be the solution that returned by their approach, where $c(SOL_J)\le B$ and $p(SOL_J)=O(\log n)$. We know that if there exist two edges $\{v_S, v_x\}, \{v_{S'}, v_x\} \in E(SOL_J)$, then $S$ and $S'$ are neighbor in $G_{\mathcal{S}}$ (by Assumption~\ref{assumAdjacencyMWBCSC}). So, we can obtain a solution $SOL_I=(V(SOL_I), E(SOL_I))$ for \textbf{MWBCSC} form $SOL_J$ with the same cost and prize.
\end{proof}




Last we show that there is a polynomial-time constant approximation algorithm for \textbf{MWBCSC} when each set costs one under Assumption~\ref{assumAdjacencyMWBCSC}.

\begin{theorem}\label{thMWBCSC}
Under Assumption~\ref{assumAdjacencyMWBCSC}, \textbf{MWBCSC} admits a polynomial-time $\frac{16}{1-e^{-1}}$-approximation algorithm, when each set has cost one.
\end{theorem}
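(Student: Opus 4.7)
The plan is to reduce the problem to an instance of \textbf{MCSB} and apply a variant of our unrooted algorithm that exploits the unit-cost structure. Under Assumption~\ref{assumAdjacencyMWBCSC}, if two sets share an element they are adjacent in $G_{\mathcal{S}}$; hence a subcollection $\mathcal{S}' \subseteq \mathcal{S}$ is feasible for \textbf{MWBCSC} if and only if it induces a connected subgraph of $G_{\mathcal{S}}$ with $|\mathcal{S}'| \leq B$. Defining the prize of a vertex subset $V' \subseteq V(G_{\mathcal{S}})$ as $p'(V') = \sum_{x \in \bigcup_{S \in V'} S} p(x)$ yields a monotone submodular function on $V(G_{\mathcal{S}})$, so the task reduces to \textbf{MCSB} on $G_{\mathcal{S}}$ with unit vertex costs and submodular prize $p'$, and any connected solution found in $G_{\mathcal{S}}$ automatically lifts back to a feasible MWBCSC solution with the same prize.

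Because every vertex costs $1$, whenever $B \geq 2$ every vertex automatically satisfies the flat condition $c(v) \leq B/2$ used in Theorem~\ref{thMainUndirectedEW-DRBP}, so the flat/saddled dichotomy collapses to the flat case alone. I will then adapt Algorithm~\ref{algNW-DRBP} to this setting: for each candidate starting vertex $u$, I use \textbf{Greedy} to compute, within the set of vertices within distance at most $B$ from $u$, a set $S_u$ of cardinality at most $B$ containing $u$ that approximates the optimal cardinality-$B$ rooted subset up to a factor $1-e^{-1}$; then I span $S_u$ in $G_{\mathcal{S}}$ to obtain a connected subtree $T_u$ with $|V(T_u)| \leq B$, and output the best $T_u$ over all $u$.

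To bound the quality, I apply Lemma~\ref{lmClaimKuoExtension} to an optimal solution $T^*$ with a constant parameter $m$ taken as a constant fraction of $B$ (say $m = B/c$ for a suitable constant $c$), obtaining a constant number $N \leq 5c$ of subtrees, each of cost at most $m+1$. By the monotone submodularity of $p'$, the best such subtree $T'$ satisfies $p'(T') \geq p'(T^*)/N$. Since the root $w$ of $T'$ is among the candidate starting vertices and every vertex of $T'$ lies within distance $m \leq B$ of $w$, \textbf{Greedy} rooted at $w$ returns a set of prize at least $(1-e^{-1}) p'(T')$. A final application of the unrooted trimming in Lemma~\ref{lmBateniTrimmingProcessGeneralUnrooted} (with $h$ a small constant, guaranteed here since the intermediate tree has cost at most a constant times $B$) reduces the cost to at most $B$ while losing only a further constant factor.

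The main obstacle will be tracking the constants so that the overall loss is exactly $16/(1-e^{-1})$: the decomposition factor $N$ from Lemma~\ref{lmClaimKuoExtension}, the trimming factor $32h+8$ from Lemma~\ref{lmBateniTrimmingProcessGeneralUnrooted}, and the $1/(1-e^{-1})$ loss from \textbf{Greedy} must be balanced by a careful joint choice of $m$ and of the trimming threshold, possibly supplemented by a specialized variant of the decomposition tailored to $m = \Theta(B)$ so that the connectivity cost of spanning $S_u$ is absorbed without needing a separate shortest-path attachment as in the rooted case.
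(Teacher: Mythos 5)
Your proposal does not match the paper's approach and, more importantly, contains a gap that would sink the constant factor.

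The paper's proof of this theorem does \emph{not} go through Algorithm~\ref{algNW-DRBP}, Lemma~\ref{lmClaimKuoExtension}, or the trimming of Lemma~\ref{lmBateniTrimmingProcessGeneralUnrooted} at all. Instead, it pieces together three external ingredients: (i) an existence result of Ran et al.\ showing that there is a connected subcollection $\mathcal{S}'\subseteq\mathcal{S}$ with $c(\mathcal{S}')\le 2B$ and prize at least $\frac{1-e^{-1}}{2}\,opt$; (ii) the polynomial-time $2$-approximation of Paul et al.\ for unrooted \textbf{E-URAT}, applied directly on $G_{\mathcal{S}}$ with unit edge costs and budget $2B-1$, which produces a tree $T$ on at most $2B$ vertices with $p(T)\ge\frac{1-e^{-1}}{4}\,opt$; and (iii) Lamprou et al.'s decomposition (Lemma~\ref{lmLamprou}), which splits $T$ into $4$ subtrees on at most $B$ vertices each, so the best one retains a $\frac14$ fraction of $p(T)$. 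Multiplying out the losses $2\cdot 2\cdot 4 = 16$ (against the $(1-e^{-1})$ Greedy-cover guarantee) gives exactly $\frac{16}{1-e^{-1}}$.

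Your route — decompose $T^*$ with Lemma~\ref{lmClaimKuoExtension} using $m=\Theta(B)$, run \textbf{Greedy} around each root $w$ to get $S_w$ of cardinality $O(m)$, span $S_w$, and trim — cannot work as stated, and the constant-tracking difficulty you flag at the end is not the real obstacle. The real obstacle is the spanning step: $S_w$ is an arbitrary subset of at most $m+1$ vertices each at distance at most $m$ from $w$, with no connectivity guarantee. The minimal inclusion-wise tree $T_w$ spanning $S_w$ can have up to $\Theta(m^2)$ vertices. To keep $c(T_w)=O(B)$ you are forced to take $m=O(\sqrt{B})$, which collapses your $N=O(c(T^*)/m)$ back to $\Theta(\sqrt{B})$ and reproduces the $O(\sqrt{B})$-approximation of Theorem~\ref{thMainUndirectedEW-DRBP}, not a constant. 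The vague remark about ``a specialized variant of the decomposition tailored to $m=\Theta(B)$ so that the connectivity cost of spanning $S_u$ is absorbed'' is precisely the hole: \textbf{Greedy} applied to the \textbf{RSM} instance returns a scattered set, and there is nothing in your framework that makes it cheap to connect. The paper sidesteps this entirely by never needing to span a scattered set — it calls a black-box algorithm (Paul et al.) that already outputs a connected tree, and only then does a \emph{budget-repair} decomposition (Lamprou) rather than a prize-recovery decomposition (Kuo).
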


We need one useful observation.

\begin{lemma}[Lemma 3 in Lamprou et al.~\cite{lamprou2020improved}]\label{lmLamprou}
Let $k$ be an integer. Given any tree $T$ on $ak$ vertices, where $a \in N$ is a constant, and $k \ge 4a-2$, we can decompose it into $2a$ subtrees each on at most $k$ vertices.
\end{lemma}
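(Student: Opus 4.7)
The plan is to combine a decomposition of the optimal spanning tree via Lemma~\ref{lmLamprou} with a greedy submodular maximization step, and then exploit Assumption~\ref{assumAdjacencyMWBCSC} to restore connectivity within the remaining budget.

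Handle the small cases $B\le 11$ by brute-force enumeration in time $O(|\mathcal{S}|^{11})$. For $B\ge 12$, let $OPT^{*}\subseteq\mathcal{S}$ be an optimal solution with $|OPT^{*}|\le B$, and let $T^{*}$ be a spanning tree of the connected subgraph of $G_{\mathcal{S}}$ induced by $OPT^{*}$. Applying Lemma~\ref{lmLamprou} with $a=2$ and $k=\lceil B/2\rceil$ (the hypothesis $k\ge 4a-2=6$ is satisfied) decomposes $T^{*}$ into four connected subtrees $T_{1},\ldots,T_{4}$, each on at most $B/2$ vertices. Because the coverage function $p(\mathcal{A})=\sum_{x\in X_{\mathcal{A}}}p(x)$ is subadditive, $\sum_{i=1}^{4}p(V(T_{i}))\ge p(OPT^{*})$, so some subtree $T^{\mathrm{best}}$ satisfies $p(V(T^{\mathrm{best}}))\ge p(OPT^{*})/4$ with $|V(T^{\mathrm{best}})|\le B/2$.

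Next, run the Nemhauser--Wolsey--Fisher $(1-e^{-1})$-greedy for monotone submodular maximization on $(\mathcal{S},p)$ under the cardinality constraint $|\mathcal{A}|\le B/2$, ignoring connectivity. Since $V(T^{\mathrm{best}})$ is a feasible candidate for that unconstrained problem, the output $\mathcal{A}_{g}$ satisfies $p(\mathcal{A}_{g})\ge (1-e^{-1})\,p(V(T^{\mathrm{best}}))\ge \frac{1-e^{-1}}{4}p(OPT^{*})$ with $|\mathcal{A}_{g}|\le B/2$. It remains to augment $\mathcal{A}_{g}$ to a connected feasible solution $\mathcal{S}'\subseteq\mathcal{S}$ of size at most $B$. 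Using Assumption~\ref{assumAdjacencyMWBCSC} (sets sharing an element form a clique in $G_{\mathcal{S}}$), I would compute a Steiner tree of $\mathcal{A}_{g}$ in $G_{\mathcal{S}}$ and argue that it requires at most $B/2$ additional sets, up to a further constant-factor blow-up. By monotonicity of $p$, $p(\mathcal{S}')\ge p(\mathcal{A}_{g})$. Combining the factor $4$ loss from the decomposition with an additional factor $4$ loss from the connectivity/Steiner step yields $p(\mathcal{S}')\ge \frac{1-e^{-1}}{16}p(OPT^{*})$, the claimed approximation ratio.

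The main obstacle is precisely the connectivity-restoration step, since in the worst case $\mathcal{A}_{g}$ may be spread across several components of $G_{\mathcal{S}}$ with no short Steiner connector visible to the algorithm; bounding the connector's size by $O(B)$ requires a second invocation of Lemma~\ref{lmLamprou} applied to a putative optimal Steiner tree of $\mathcal{A}_{g}$ in $G_{\mathcal{S}}$, and a careful clique-based accounting under Assumption~\ref{assumAdjacencyMWBCSC}. A fallback plan, in case the Steiner-based argument is too lossy, is to replace the unconstrained greedy by a \emph{connected} greedy enumerated over every seed vertex $v\in\mathcal{S}$ (start from $\{v\}$ and iteratively add the adjacent vertex with maximum marginal prize for $B-1$ rounds): combined with the decomposition of Lemma~\ref{lmLamprou} one can then argue that a seed lying in $T^{\mathrm{best}}$ yields a solution of the required quality directly, avoiding post-hoc Steiner repair.
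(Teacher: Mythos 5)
There is a fundamental mismatch between what you were asked to prove and what you argue. The statement is Lemma~\ref{lmLamprou} itself: a purely combinatorial fact asserting that \emph{any} tree on $ak$ vertices, with $k\ge 4a-2$, can be decomposed into $2a$ subtrees each on at most $k$ vertices. Your proposal never addresses this claim. Instead, you \emph{invoke} Lemma~\ref{lmLamprou} as a black box (``Applying Lemma~\ref{lmLamprou} with $a=2$ and $k=\lceil B/2\rceil$\dots'') and then design an approximation algorithm for \textbf{MWBCSC} under Assumption~\ref{assumAdjacencyMWBCSC}, which is essentially the content of Theorem~\ref{thMWBCSC} (and Theorem~\ref{thBSCPImproved}), not of the lemma. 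With respect to the target statement the argument is therefore circular: nothing in it establishes the decomposition property, and no step of your text could be adapted to do so, since you only ever use the special case $a=2$ on one particular tree rather than proving the general bound. (In the paper this lemma is not proved either; it is imported from Lamprou et al., Lemma~3.)

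A genuine proof would have to argue directly about an arbitrary tree: for instance, root $T$, repeatedly locate a deepest vertex $v$ whose hanging subtree has at least $k/2$ vertices, group the subtrees of $v$'s children (each of size less than $k/2$) into bundles of size at most $k$, split these off as parts, and then use the hypothesis $k\ge 4a-2$ to show that at most $2a$ parts are ever created from the $ak$ vertices, including the leftover piece containing the root. None of this counting appears in your proposal. As a secondary remark, even viewed as an attempt at the downstream theorem your argument is incomplete: the connectivity-restoration step is left as a plan (``I would compute a Steiner tree \dots and argue''), whereas the paper's route in Theorem~\ref{thMWBCSC} avoids any Steiner repair by running the $2$-approximation of Paul et al.\ for the unit-edge-cost budgeted tree problem with budget $2B-1$ and only then applying Lemma~\ref{lmLamprou} to trim the result back within budget.
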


We are ready to prove Theorem~\ref{thMWBCSC}.

\begin{proof}
Let $I=(X, \mathcal{S}, c, p, G_{\mathcal{S}}, B)$ be an instance \textbf{MWBCSC}. We first assign prizes to the nodes in $G_{\mathcal{S}}$ as follows: we iteratively pick the node $v$ in $G_{\mathcal{S}}$ that covers the elements with the maximum prize $p_{max}$, assign $p_{max}$ to $v$ as its prize and remove $v$ and the covered elements from further consideration. Ran et al.~\cite{ran2016approximation} showed that this assigning-prize process to the nodes ensures that there exists a connected subcollection of $\mathcal{S}' \subseteq \mathcal{S}$ such that $p(\mathcal{S}') \ge \frac{1-e^{-1}}{2}opt$ and $c(\mathcal{S}') \le 2B$, where $opt$ is the optimum solution to \textbf{MWBCSC}. As in our case each vertex in $G_{\mathcal{S}}$ has cost $1$, by the same argument as that of Theorem~\ref{thMWBCSCwithCost1}, we can set the budget to $2B-1$ and run the polynomial-time $2$-approximation algorithm of Paul et al.~\cite{paul2020budgeted} on this input. This provides us a tree $T=(V(T), E(T))$ with $|E(T)| \le 2B-1$ and $p(T) \ge \frac{1-e^{-1}}{4}opt$. By Lemma~\ref{lmLamprou}, we can decompose $T$ into $4$ subtrees each containing at most $B$ vertices, and then select the subtree maximizing the prize. This concludes the proof.
\end{proof}

\paragraph*{Budgeted Sensor Cover Problem (\textbf{BSCP}).} This problem is very well-known in wireless sensor networks. In \textbf{BSCP}, we are given a set $\mathcal{S}$ of sensors, a set $\mathcal{P}$ of target points in a metric space, a sensing range $R_s$, a communication range $R_c$, and a budget $B$. A target point is covered by a sensor $s$ if it is within a distance $R_s$ of $s$'center. Two sensors are connected if they are at a distance at most $R_c$.
The goal is to find a subset $\mathcal{S}' \subseteq \mathcal{S}$ such that $|\mathcal{S}'| \le B$, the number of covered target points by $\mathcal{S}'$ is maximized and $\mathcal{S}'$ induces a connected subgraph. 

Note that \textbf{BSCP} is a special case of \textbf{MWBCSC}. Then, by Theorem~\ref{thMWBCSCwithCost1}, we have a $2f$-approximation algorithm for \textbf{BSCP}, where $f$ is the maximum number of sensors that cover a target point.

Next we investigate \textbf{BSCP} under the widely used assumption that $R_s\le R_c/2$, where $R_s$ and $R_c$ are the sensing range and communication range of sensors, respectively. In the following, we show that using some materials from the literature leads to an improvement to~\cite{huang2020approximation, kuo2014maximizing, ran2016approximation, xu2021throughput, yu2019connectivity} for \textbf{BSCP} under the assumption that $R_s\le R_c/2$. Note that Huang et al.~\cite{huang2020approximation} studied \textbf{BSCP} under the assumption that $R_s/R_c=O(1)$, however, our result improves their result under the assumption that $R_s\le R_c/2$. 


\begin{theorem}\label{thBSCPImproved}
If $R_s\le R_c/2$, then \textbf{BSCP} admits a polynomial-time $\frac{8}{1-e^{-1}}$-approximation algorithm.
\end{theorem}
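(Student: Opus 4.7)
The plan is to reduce BSCP under $R_s \le R_c/2$ to the framework of Theorem \ref{thMWBCSC} and then save a factor of two by avoiding one intermediate approximation step.

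First I would check that the hypothesis $R_s \le R_c/2$ forces Assumption \ref{assumAdjacencyMWBCSC} on the MWBCSC instance naturally associated with BSCP (sets $=$ sensors, elements $=$ target points, graph $G_{\mathcal{S}}$ $=$ communication graph). Indeed, if two sensors $s, s' \in \mathcal{S}$ both cover a target $p \in \mathcal{P}$, the triangle inequality gives $d(s, s') \le d(s, p) + d(p, s') \le 2 R_s \le R_c$, so $s$ and $s'$ are adjacent in $G_{\mathcal{S}}$. Therefore BSCP is a unit-cost MWBCSC instance satisfying Assumption \ref{assumAdjacencyMWBCSC}.

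Next I would invoke the constructive form of the Ran et al.\ result cited in the proof of Theorem \ref{thMWBCSC}, which under Assumption \ref{assumAdjacencyMWBCSC} and unit set costs produces in polynomial time a connected subcollection $\mathcal{S}' \subseteq \mathcal{S}$ with $|\mathcal{S}'| \le 2B$ and prize $p(\mathcal{S}') \ge \frac{1-e^{-1}}{2}\,opt$, where $opt$ is the BSCP optimum. Crucially, because all costs are unit and we work directly on the sensor graph $G_{\mathcal{S}}$ (where sets sharing an element are already adjacent thanks to the previous step), we do not need to re-run Paul et al.'s $2$-approximation on a transformed EW-URAT instance, which is precisely the step that introduces an extra factor of two in the proof of Theorem \ref{thMWBCSC}.

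Then I would take any spanning tree $T$ of the subgraph of $G_{\mathcal{S}}$ induced by $\mathcal{S}'$. Since $|V(T)| \le 2B$, Lemma \ref{lmLamprou} applied with $a = 2$ and $k = B$ (handling the degenerate case $B < 6$ by brute force) decomposes $T$ into four subtrees $T_1, \ldots, T_4$, each with at most $B$ vertices, whose vertex sets cover $V(T)$. Monotonicity and submodularity of the coverage function then give
\[
\sum_{i=1}^{4} p(V(T_i)) \;\ge\; p\!\left(\bigcup_{i=1}^{4} V(T_i)\right) \;\ge\; p(\mathcal{S}'),
\]
so by averaging some index $j$ satisfies $p(V(T_j)) \ge \tfrac{1}{4} p(\mathcal{S}') \ge \tfrac{1-e^{-1}}{8}\,opt$. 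Since $V(T_j)$ is connected in $G_{\mathcal{S}}$ and has at most $B$ sensors, it is a feasible BSCP solution delivering the claimed $\tfrac{8}{1-e^{-1}}$-approximation.

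The main obstacle is justifying that the Ran et al.\ greedy is implementable directly on the sensor graph in the BSCP setting and yields the $\frac{1-e^{-1}}{2}$ bound with a hard cardinality budget of $2B$ without routing through arbitrarily long Steiner connectors; the hypothesis $R_s \le R_c/2$ is exactly what makes the sensor graph rich enough in short adjacencies for this to go through. Once that constructive step is in place, the remainder of the argument is a straightforward combination of Lamprou et al.'s tree decomposition with submodular averaging.
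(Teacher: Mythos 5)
Your opening observation is correct and is a nice explicit justification that the paper leaves implicit: if two sensors $s$, $s'$ both cover a target $x$, the triangle inequality gives $d(s,s') \le 2R_s \le R_c$, so the BSCP instance automatically satisfies Assumption~\ref{assumAdjacencyMWBCSC}. Your use of Lemma~\ref{lmLamprou} with $a=2$, $k=B$ and the submodular-averaging argument over the four subtrees (using $\sum_i p(V(T_i)) \ge p(\bigcup_i V(T_i))$ rather than the paper's additive auxiliary prize) are also sound.

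However, there is a genuine gap at the central step. You invoke ``the constructive form of the Ran et al.\ result'' to obtain, in polynomial time, a connected $\mathcal{S}'$ with $|\mathcal{S}'| \le 2B$ and $p(\mathcal{S}') \ge \frac{1-e^{-1}}{2}\,opt$, and you then argue that this lets you skip Paul et al.'s $2$-approximation. The paper treats the Ran et al.\ statement explicitly as an \emph{existence} result: ``assigning prizes to the nodes in such a way ensures that $G$ \emph{contains} a connected subgraph $G'$ with $|V(G')|\le 2B$ and the prize $p(G') \ge (1-e^{-1})\,opt$.'' This bounds the optimum of an auxiliary unit-edge-cost, additive-prize budgeted tree instance; to algorithmically extract a near-optimal tree from that instance, the paper must run Paul et al.'s $2$-approximation for \textbf{E-URAT}, which is where the factor $2$ enters. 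That factor is not an artifact of ``transforming to an EW-URAT instance''; it is the approximation ratio of the only known polynomial-time black box for finding such a tree. Your proposal asserts, without justification, that Ran et al.'s greedy can be implemented to directly output the connected subcollection; if that were so, the $\alpha$ and $\Delta+1$ terms in Ran et al.'s stated ratio $\frac{2(\Delta+1)\alpha}{1-e^{-1}}$ would not appear, and Theorem~\ref{thMWBCSC} would also improve to $\frac{8}{1-e^{-1}}$, which the paper does not claim. You yourself flag this as ``the main obstacle'' at the end, but you do not discharge it, and simply appealing to $R_s \le R_c/2$ does not by itself convert an existence bound into an algorithm. As written, the argument is missing the algorithmic search step that the paper supplies via Paul et al.; you happen to land on the same constant $\frac{1-e^{-1}}{8}$ only because you have implicitly baked the lost factor $2$ into the asserted (but unproven) constructive bound.
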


\begin{proof}
We first construct a graph $G=(V, E)$ as follows: $V=\mathcal{S}$ and for any two sensors $s, s' \in \mathcal{S}$, $e=(s, s') \in E$ if $d(s, s') \le R_c$, where $d$ denotes the minimum distance between two senors $s$ and $s'$ in the plane. Now we assign prizes to the nodes in $G$ as follows: we iteratively pick the sensor $s$ that covers the maximum number of the target points in the plane, assign the number of the covered target points to $s$ as its prize and remove $s$ and the covered target points from further consideration. It can be shown that assigning prizes to the nodes in such a way ensures that $G$ contains a connected subgraph $G'$, $G' \subseteq G$, with $|V(G')|\le 2B$ and the prize $p(G') \ge (1-e^{-1})opt$, where $opt$ is the optimum solution to \textbf{BSCP} (see e.g. ~\cite{gao2018algorithm}).


From the proof of Theorem~\ref{thMWBCSCwithCost1}, we only need to consider the unrooted version of \textbf{E-URAT} where the input graph is $G$ with cost $1$ on the edges and budget $2B-1$. Now we apply the polynomial-time $2$-approximation algorithm of Paul et al.~\cite{paul2020budgeted} to this input and achieve a tree $T$ of cost at most $2B-1$ and prize at least $p(T) \ge \frac{1-e^{-1}}{2}opt$. By Lemma~\ref{lmLamprou}, we can decompose $T$ into $4$ subtrees each containing at most $B$ vertices, and then select the one with the maximum prize. This concludes the proof.
\end{proof}

\end{document}